\newcommand{\url}[1]{#1} % dummy definition of \texorpdfstring
\definecolor{gray}{rgb}{0.2,0.2,.2}
\definecolor{colorGreen}{rgb}{0.,0.75,0.}
\definecolor{colorBlue}{rgb}{0.,0.,0.75}
\definecolor{colorRed}{rgb}{0.99,0.,0.}
\newcommand{\BMHR}{}
\newcommand{\EMHR}{}
\newcommand{\iu}{\mathtt{i}}
\newcommand{\mhexp}[1]{{{\mathtt{e}}^{#1}}}
\newcommand{\fspace}[1]{{\mathsf{#1}}}
\newcommand{\fspaceL}{\fspace{L}}
\newcommand{\fspaceH}{\fspace{H}}
\newcommand{\fspaceC}{\fspace{C}}
\newcommand{\ol}[1]{{\overline{#1}}}
\newcommand{\Rset}{{\mathbb{R}}}
\newcommand{\Zset}{{\mathbb{Z}}}
\newcommand{\Nset}{{\mathbb{N}}}
\newcommand{\ocinterval}[2]{(#1,\,#2]}%
\newcommand{\cointerval}[2]{[#1,\,#2)}%
\newcommand{\ccinterval}[2]{[#1,\,#2]}%
\newcommand{\loc}{{\rm loc}}
\newlength{\mhpicDwidth}
\newlength{\mhpicDvsep}
\newlength{\mhpicDhsep}
\newlength{\mhpicPwidth}
\newlength{\mhpicPvsep}
\newlength{\mhpicPhsep}
\newlength{\mhpicWhsep}
\newcommand{\pair}[2]{{\left({#1},\,{#2}\right)}}
\newcommand{\skp}[2]{{\left\langle{#1},\,{#2}\right\rangle}}
\newcommand{\bskp}[2]{{\big\langle{#1},\,{#2}\big\rangle}}
\newcommand{\at}[1]{{\left({#1}\right)}}
\newcommand{\nat}[1]{(#1)}
\newcommand{\bat}[1]{{\big(#1\big)}}
\newcommand{\Bat}[1]{{\Big(#1\Big)}}
\newcommand{\triple}[3]{{\left({#1},\,{#2},\,{#3}\right)}}
\newcommand{\ul}[1]{\underline{#1}}
\newcommand{\D}{\displaystyle}
\newcommand{\bigpar}{\par\quad\newline\noindent}
\newcommand{\norm}[1]{\left\|{#1}\right\|}
\newcommand{\bnorm}[1]{\big\|{#1}\big\|}
\newcommand{\nnorm}[1]{\|{#1}\|}
\newcommand{\abs}[1]{\left|{#1}\right|}
\newcommand{\babs}[1]{\big|{#1}\big|}
\newcommand{\Babs}[1]{\Big|{#1}\Big|}
\newcommand{\dint}[1]{\,\mathrm{d}#1}
\newcommand{\Ga}{{\Gamma}}
\newcommand{\al}{{\alpha}}
\newcommand{\be}{{\beta}}
\newcommand{\eps}{{\varepsilon}}
\newcommand{\ka}{{\kappa}}
\newcommand{\la}{{\lambda}}
\newcommand{\si}{{\sigma}}
\newcommand{\calC}{\mathcal{C}}
\newcommand{\calK}{\mathcal{K}}
\newcommand{\calP}{\mathcal{P}}
\newcommand{\calQ}{\mathcal{Q}}
\newcommand{\calT}{\mathcal{T}}
\theoremstyle{plain}
\newtheorem{theorem}{Theorem}[]
\newtheorem{corollary}[theorem]{Corollary}
\newtheorem{lemma} [theorem]{Lemma}
\newtheorem{proposition}[theorem]{Proposition}
\newtheorem{conjecture}[theorem]{Conjecture}
\newtheorem{result}[theorem]{Main result}
\newtheorem{assumption} [theorem]{Assumption}
\numberwithin{figure}{section}
\numberwithin{table}{section}
\numberwithin{equation}{section}
\begin{document}
%
% -----------------------------------------------------------------------------
% - Title information
% -----------------------------------------------------------------------------
%
\title{Nonlinear and Nonlocal Eigenvalue Problems:\\ \emph{variational existence, decay properties,  approximation, and\\ universal scaling limits}}%: Nonlinear eigenvalue problems for convolution operators}
\date{\today}
\author{%
Michael Herrmann%
\footnote{Technische Universit\"at Braunschweig, Germany, {\tt michael.herrmann@tu-braunschweig.de}}
\and
Karsten Matthies%
\footnote{University of Bath, United Kingdom, {\tt k.matthies@bath.ac.uk}}
} %
\maketitle
%
%
%
% -----------------------------------------------------------------------------
% - Abstract
% -----------------------------------------------------------------------------
%
\vspace{-0.025\textheight}
\begin{abstract}
We study a class of nonlinear eigenvalue problems which involves a convolution operator as well as a superlinear nonlinearity. Our variational existence proof is based on constrained optimization and provides a one-parameter family of solutions with
positive eigenvalues and unimodal eigenfunctions. We also discuss the decay properties and the numerical computations of those eigenfunctions, and conclude with two asymptotic results concerning small and large eigenvalues.
\end{abstract}
%
%
% -----------------------------------------------------------------------------
% - MSC and keywords
% -----------------------------------------------------------------------------
%
\quad\newline\noindent%
\begin{minipage}[t]{0.15\textwidth}%
 Keywords:
\end{minipage}%
\begin{minipage}[t]{0.8\textwidth}%
\emph{nonlinear eigenvalue problems}, \emph{nonlocal coherent structures},\\  \emph{asymptotic analysis of nonlinear integral operators}
\end{minipage}%
\medskip
\newline\noindent
\begin{minipage}[t]{0.15\textwidth}%
MSC (2010): %
\end{minipage}%
\begin{minipage}[t]{0.8\textwidth}%
45G10,  % nonlinear integral equations
45M05, % Asymptotics of integral equations
47J10,  	%Nonlinear spectral theory, nonlinear eigenvalue problems
49R05  %	Variational methods for eigenvalues of operators
\end{minipage}%
%
%
% -----------------------------------------------------------------------------
% - Table of contents
% -----------------------------------------------------------------------------
%
\setcounter{tocdepth}{3}
\setcounter{secnumdepth}{3}{\scriptsize{\tableofcontents}}
%
%
%
%----------------------------------------------------------------
\section{Introduction}%\label{sect:Intro}
%----------------------------------------------------------------
%
%
This paper concerns the scalar nonlinear eigenvalue problem
\begin{align}\label{eq:nlEv}
  \sigma\, V = b \ast  f(b \ast V)\,,
\end{align}
with eigenvalue $\si$ and unknown eigenfunction $V$, where $f$ is a superlinear function and $b$ is a sufficiently nice convolution kernel. Equation \eqref{eq:nlEv} is a prototypical example for a huge class of similar problems but it seems that this equation has not yet been studied systematically. We restrict our considerations to the one-dimensional setting, where $b$ and $V$ depend on $x\in\Rset$, and develop a variational existence theory. In particular we prove the existence of a one-parameter family of solutions $\pair{\si}{V}$ with $\si>0$ and $V$ being even, nonnegative, and unimodal, where the latter means increasing and decreasing  for $x\leq0$ and $x\geq0$, respectively. Notice that \eqref{eq:nlEv} implies
\begin{align}\label{eq:nlEv2}
  \sigma\, U = a\ast f(U)
\end{align}
via the identification
\begin{align}
\label{eqn:trafo}
U=b\ast V\,,\qquad a=b\ast b
\end{align}
and that \eqref{eq:nlEv2} can be transformed into the symmetric form \eqref{eq:nlEv} provided that we find a kernel $b$ such that the second identity in \eqref{eqn:trafo} is satisfied.
%
%
%
%----------------------------------------------------------------
\subsection{Examples and application }
%----------------------------------------------------------------
%
%
%
\paragraph{Waves in FPUT chains}  A first application are traveling waves in Fermi-Pasta-Ulam-Tsingou (FPUT) chains, which are determined by the advance-delay-differential equations
\begin{align}
\label{Eqn:TW.Diff}
R^\prime\at{x}
=
V\at{x+1/2}-
V\at{x-1/2}\,,\qquad
\si \, V^\prime\at{x}
=
\Phi^\prime\bat{R\at{x+1/2}}-
\Phi^\prime\bat{R\at{x-1/2}}\,,
\end{align}
see for instance \cite{Her10}. Here, $\Phi^\prime$ describes the atomic forces, $x$ is the space variable in the comoving frame,  $R$ and $V$ are the unknown profile functions for the distances and velocities, respectively, and  $\si>0$ stands for the squared wave speed. After integration with respect to $x$, and ignoring all constants of integration for simplicity, the system \eqref{Eqn:TW.Diff} can be written as
\begin{align}
\notag%\label{Eqn:TW.Intsys}
 R = b \ast V\,,\qquad  \si\, V =b \ast  \Phi'\at{R}\,,
\end{align}
and implies \eqref{eq:nlEv} with $f=\Phi^\prime$ after elimination of $R$. \BMHR In this example, the convolution kernel $b$ is given by \EMHR the characteristic function of the interval $\ccinterval{-1/2}{1/2}$ and $a=b\ast b$ is a tent map, see the second row in Figure \ref{fig:kernels}.  Traveling waves in FPUT chains have been studied intensively over the last decades, see for instance \cite{FW94,FV99,HM17,HM19a} and references therein, and the present paper generalizes some of the methods and techniques that have been developed in this context to a broader class of nonlinear eigenvalue problems.
\paragraph{Nonlocal aggregation models}  Nonlinear eigenvalue problems also arise in certain models for biological aggregation. For instance, \cite{BHW13} proposes (among other models) the nonlocal parabolic PDE
\begin{align}
\label{Eqn:Aggregation1}
\partial_t \varrho=\partial_x^2 \bat{h\bat{a\ast\varrho}\varrho}
\end{align}
for a nonnegative density $\varrho$ depending on time $t>0$ and space $x\in\Rset$. The scalar nonlinear function $h$ -- which is supposed to be \BMHR nonnegative \EMHR and strictly decreasing -- models that the diffusion mobility of a biological species depends on the local density of the population so that aggregation is possible. The equation for a bounded steady state reads $h\bat{a\ast\varrho}\varrho=c$, where $c$ denotes a constant of integration, and can be transformed into \eqref{eq:nlEv2} via $f:=1/h$, $\si=1/c$, and $U:=a\ast\varrho$. The initial value problem for \eqref{Eqn:Aggregation1} has been investigated in \cite{BHW13,HO15} in one and even higher space-dimensions, but the existence and the properties of steady states have not yet been examined.
\par
Besides \eqref{Eqn:Aggregation1} there \BMHR exist \EMHR other nonlocal models for biological aggregation and separation such as
\begin{align}
\label{Eqn:Aggregation2}
\partial_t \varrho=\partial_x \Bat{\varrho\,\partial_x \bat{g\at{\varrho}-a\ast\varrho}}\,,
\end{align}
where $g$ is now strictly increasing. Using $f:=g^{-1}$ and $U:=g\circ\varrho $ one easily shows that any solution to \eqref{eq:nlEv2} with $\si=1$ provides a steady state of  \eqref{Eqn:Aggregation2}. However, the set of all stationary equations is much larger as it has been shown in \cite{BFH14} using tailor-made fixed point arguments. In fact, assuming that $\varrho$ is compactly supported it suffices to fulfill a nonlinear equation on that support and this gives rise to more general steady states. We also refer to \cite{Kai17} for a related study on compactly supported minimizers of the corresponding energy functional.
\paragraph{More general models}  Many other application are also intimately related to nonlinear eigenvalue problems although the details and the underlying equations might be more involved. For instance, chains of coupled oscillators as described by the Kuramoto equation exhibit so-called chimera states, which can be characterized as solutions to a complex-valued variant of \eqref{eq:nlEv2}, see \cite{OMT08,Ome13,Ome18} for more details.  A second class of examples stems from the nonlocal analogues to the Allan-Cahn equation or \BMHR reaction-diffusion systems \EMHR  as the \BMHR corresponding \EMHR equations for steady states involve  both convolution operators and nonlinearities. However, since the latter are typically bistable one observes depinning effects, i.e. steady states exists only for certain parameters and but start to move at certain bifurcation values. \BMHR  This phenomenon \EMHR requires to study more complex equations which additionally involve continuous derivatives, see for instance \cite{BFRW97,AFSS16,FS15}.
\paragraph{ODE case}
We finally discuss a very special case, in which the eigenvalue problem \eqref{eq:nlEv} can be solved almost explicitly using ODEs. With the special choice
\begin{align}
\label{Eqn:ODECase}
a\at{x}=\tfrac{1}{2}\exp\bat{-\abs{x}}\,,\qquad \widehat{a}\at{k}=\frac{1}{1+k^2}
\end{align}
equation \eqref{eq:nlEv} is equivalent to the planar Hamiltonian ODE
\begin{align*}
\sigma \bat{U - U''}=f(U)\,.
\end{align*}
This can be seen by means of Fourier transform and inverting the linear differential operator on left hand side, and allows to study the existence of solutions using elementary phase plane analysis. In particular, for superlinear $f$ and any  $\si>f^\prime\at{0}$ there exists a unique homoclinic solution, whose orbit confines a family of periodic solutions. The exponentially decaying kernel \eqref{Eqn:ODECase} fits into our framework as it can be written as $a=b\ast b$, where $b$ is a modified Bessel function of second type and satisfies Assumption \ref{Ass:MainKernel} below although it exhibits a (logarithmic) singularity at the origin, see the third column in Figure \ref{fig:kernels}.  Notice that a similar class of nonlinear problems exists in higher space-dimensions with $a$ being the solution operator of a linear elliptic differential operator. There exists a vast literature on the corresponding nonlinear PDE but in this paper we focus on the one-dimensional setting with arbitrary kernels.
%
%
%
%----------------------------------------------------------------
\subsection{Main results}
%----------------------------------------------------------------
%
%
For the existence part of our work we rely on the following standing assumptions concerning the convolution kernel $b$ and \BMHR the nonlinearity $f$. A heuristic and numerical discussion of their necessity is postponed to \S\ref{sect:Approx}.\EMHR
\begin{assumption}[convolution  kernel]
\label{Ass:MainKernel}
The kernel function $b:\Rset \to \Rset$ satisfies
\begin{align}
 \label{ass:b1}
\int\limits_\Rset b(x) \dint x  =1\,,\qquad  \int\limits_\Rset  b(x) x^2 \dint x  < \infty
\,,\qquad  \int\limits_\Rset b^2(x)  \dint x  < \infty
\end{align}
and is supposed to be \BMHR nonnegative, \EMHR even, and unimodal as illustrated in Figure \ref{fig:kernels}.
\end{assumption}
\begin{figure}[ht!] %
\centering{%
\includegraphics[width=.95\textwidth]{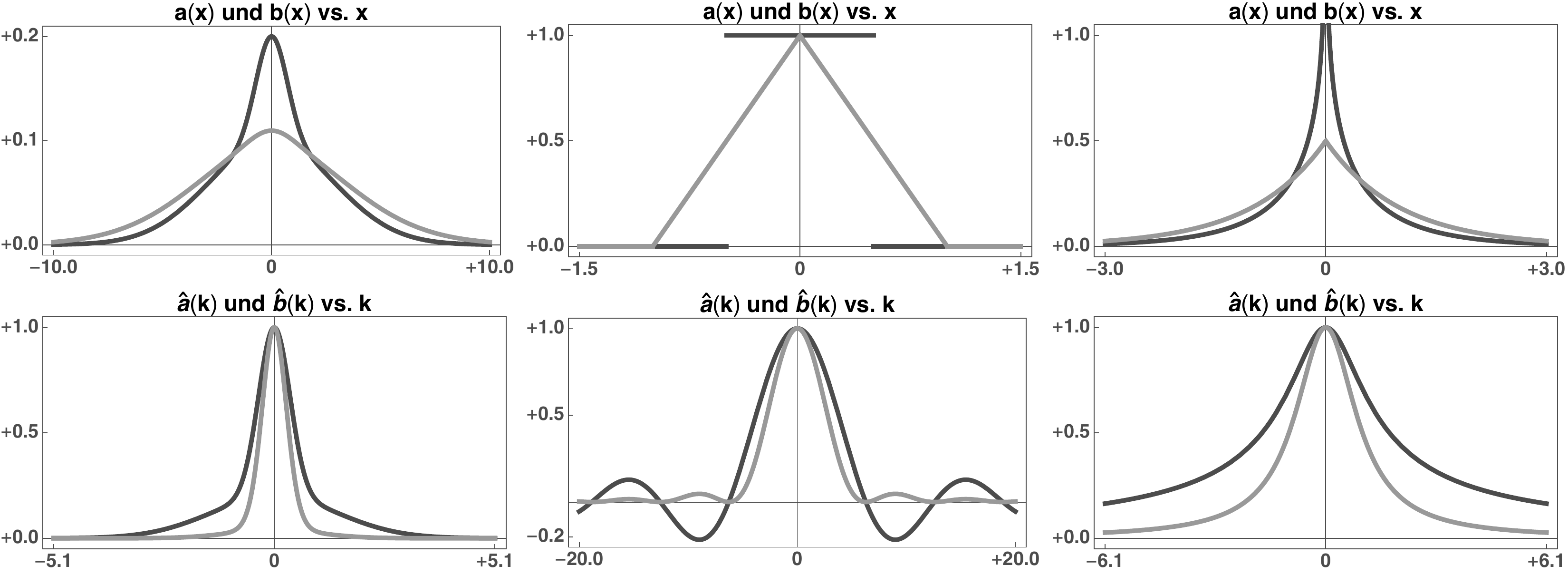}%
}%
\caption{Examples for the convolution kernels $a$ (light gray) and $b$ (dark gray) as well as their Fourier transforms, where \eqref{eqn:trafo} implies $\widehat{a}=\widehat{b}^2$. \emph{Left}. Generic case with smooth and rapidly decaying functions. \emph{Center}. Piecewise linear functions for FPUT chains. \emph{Right}. The ODE case \eqref{Eqn:ODECase}.
}%
\label{fig:kernels}%
\end{figure}%
\begin{assumption}[nonlinearity]
\label{Ass:MainNonl}
The function $f:\cointerval{0}{\infty}\to\Rset$
is twice continuously differentiable with
\begin{align}\label{ass:f1}
f(0)=0\,, \qquad \al:=f'(0) > 0\,,\qquad \beta:=f''(0) > 0
\end{align}
and is strictly superlinear due to $f^{\prime\prime}\at{r}>0$ for all $r> 0$.
\end{assumption}
In \S\ref{sect:setting} we first identify an underlying variational structure along with a constrained maximization problem for the eigenfunction $V$ that depends on a norm parameter $K$ and provides the eigenvalue \BMHR $\si>\alpha$ \EMHR as the corresponding Lagrange multiplier. A key ingredient to our approach \BMHR is the \EMHR  \emph{improvement operator} \BMHR in \eqref{eq:defT} below, \EMHR whose invariant cones enable us to impose shape constraints for $V$ without changing the Euler-Lagrange equation for maximizers. \S\ref{sect:Existence} ensures that the constrained optimization problem always admits  a solution, where the main technical challenge is to prove that the super\-linearity of $f$ favors localization of maximizing sequences and hence the existence of strongly convergent subsequences.
\par
Afterwards we discuss \BMHR in \S\ref{sect:Approx} \EMHR how to compute solutions to \eqref{eq:nlEv} by discretizing the improvement operator. The resulting numerical scheme works well, see Figure \ref{fig:waves} for some examples, although we are not able to prove its convergence due to the lack of uniqueness results. Finally, we characterize the decay of the eigenfunction at infinity in \S\ref{sect:decay} by splitting any eigenfunction into a compactly supported bulk part and remaining tail part, where the decay of the latter can be related to the properties of a modified kernel $a_c$.
\par
In \S\ref{sect:KdV} we study the limit of small eigenvalues $\si\gtrsim \alpha$ and show that the corresponding eigenfunctions converge after a suitable rescaling to the traveling wave profile of a Korteweg-de Vries (KdV) equation. Such results are well established for FPUT chains, see for instance \cite{FP99}, but our proof is more general and based on the constrained optimization problem.  \S\ref{sect:HE} is devoted to another asymptotic regime related to large eigenvalues and nonlinearities with algebraic singularity. For simplicity we restrict our considerations to the sample case
\begin{align}\label{eq:fsing}
  f(s)=\frac{1}{(1-s)^{m+1}} -1 \qquad\text{for some $m>1$ and all $0\leq s<1$}
\end{align}
and show for all sufficiently smooth kernels $b$ that the eigenfunction $V$ converges as $\si\to\infty$ to a well-defined multiple of $b$. Similar results have been shown for FPUT chains in \cite{FM02,Tre04,HM15,HM17} but the details are different due to the less regular kernel $b$, \BMHR see the discussion at the end of \S\ref{sect:scaling}. \EMHR
\par
Our main findings can informally be summarized as follows. The corresponding rigorous statements are given in Corollary \ref{Cor:Existence}, Corollary \ref{Corr:ExpDecay}, Theorem \ref{prop:kdv}, and Theorem \ref{thm:coarse}.
\begin{result}\quad
\begin{enumerate}
\item\emph{\ul{Variational existence and approximation\,:}} %
There exists a one-parameter family of solutions $\pair{\si}{V}$ to \eqref{eq:nlEv} with $\si>f^\prime\at{0}$ such that $V$ is \BMHR nonnegative, \EMHR even, and unimodal. This family can be constructed, both analytically and numerically, by a constrained optimization problem.
\item \emph{\ul{Decay at infinity\,:}} %
The (algebraic or exponential) decay properties of the eigenfunction $V$ \BMHR depend on the regularity \EMHR of the kernel $b$ as well as on the \BMHR value of \EMHR $\si$.
\item\emph{\ul{Asymptotics for small eigenvalues\,:}} %
If the eigenvalue $\si$ approaches $f^\prime\at{0}$ \BMHR from above, \EMHR then the eigenfunction $V$ converges
after a suitable rescaling to a traveling wave solution of a KdV equation.
\item\emph{\ul{Asymptotics for large eigenvalues\,:}} %
If $f$ exhibits an algebraic singularity and if $b$ is sufficiently smooth, then $V$ converges  as $\si\to\infty$ to a certain  multiple of the kernel $b$.
\end{enumerate}
\end{result}
The first two of these results are established under Assumptions \ref{Ass:MainKernel} and \ref{Ass:MainNonl} while our asymptotic analysis
for small and large eigenvalues requires a more restrictive setting, see \BMHR the refined \EMHR Assumptions \ref{Ass:KdV} and \ref{Ass:HE} below.
\begin{figure}[t!] %
\centering{%
\includegraphics[width=.95\textwidth]{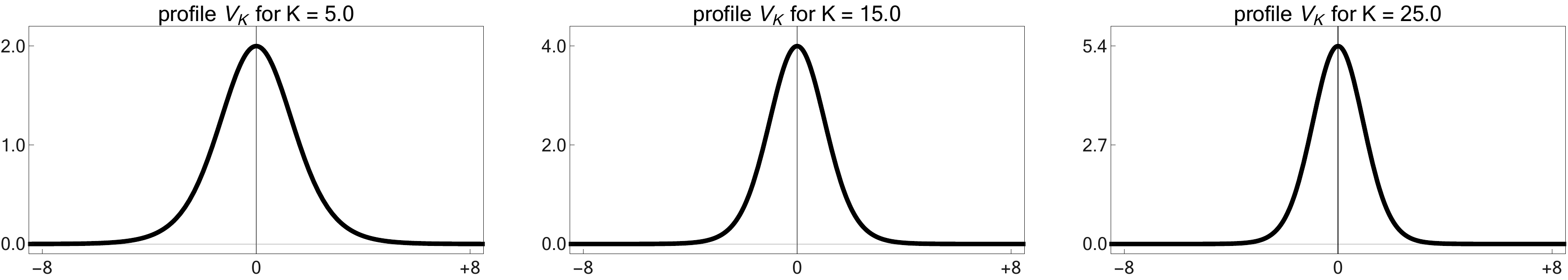}%
\\%
\includegraphics[width=.95\textwidth]{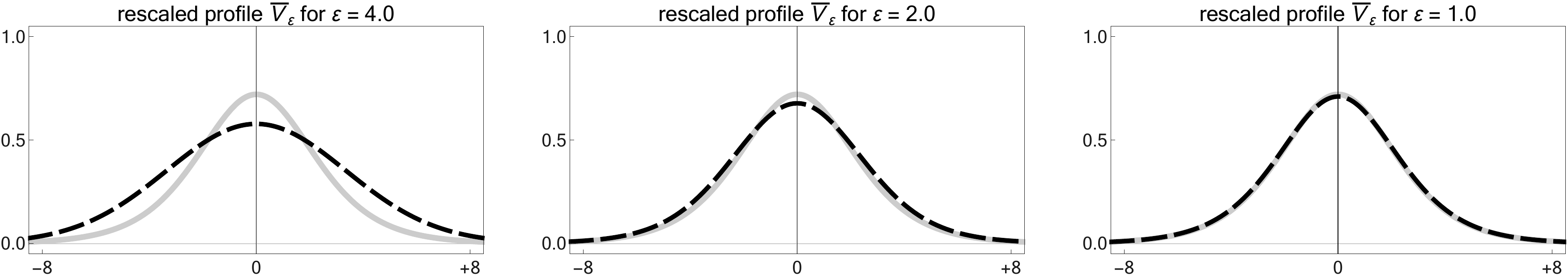}%
\\%
\includegraphics[width=.95\textwidth]{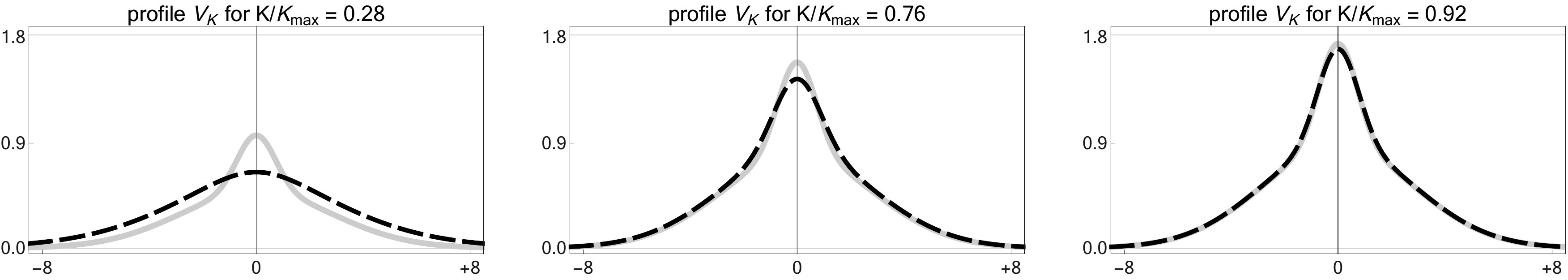}%
}%
\caption{Numerical simulations of solutions to \eqref{eq:nlEv}, computed with the scheme from \S\ref{sect:Approx}. \emph{Top}. Eigenfunction $V$ for several values of the norm constraint $K$ with $f\at{r}=\exp\at{r}-1$ and $b\at{x}=\exp\at{-x^2}/\pi$.  \emph{Center}. Scaled eigenfunctions in the KdV limit as discussed in \S\ref{sect:KdV} for $f$ and $b$ as in the \BMHR top row. \EMHR The gray curves represent the KdV profile defined in \eqref{prop:kdv.peqn1} and \eqref{prop:kdv.peqn2}. \emph{Bottom}. Convergence of eigenfunctions to a multiple of  $b$ (gray) as discussed in \S\ref{sect:HE} for the singular nonlinearity \eqref{eq:fsing}. $K_{\max}$ is given in \eqref{Eqn:KMax} and the computations are performed with $m=4$.}%
\label{fig:waves}%
\end{figure}%
\BMHR
\paragraph{Relation to earlier results} The main results are basically well-known in the special case of FPUT chains. For instance, the variational existence proof is an adaption of \cite{Her10} and based on preliminary work in \cite{FW94,FV99}. The exponential decay has been established in \cite{FP99} for small $\sigma$, in \cite[Lemma 4.1]{HR10} for some related waves, and generally in \cite{Pan19b}, but here we present a novel and shorter derivation. The asymptotic properties of small eigenvalues have been shown rigorously in \cite{FP99} using an implicit function argument, while here we provide an entirely variational proof. \cite{HM15} also studies the regime of large FPUT eigenvalues chains in a variational setting but
most of the results are not applicable here due to the different regularity properties of $b$ and $a$, see the more detailed discussion in \S\ref{sect:scaling}. A more basic result, however, still follows and resembles the findings in \cite{FM02}. 
\par
For more general convolution kernels, all main results are to our knowledge new contribution. In particular, we are not aware of any work concerning the equilibrium solutions to the biological aggregation model \eqref{Eqn:Aggregation1}. The existing results
for \eqref{Eqn:Aggregation2} in \cite{BFH14} concern slightly different problems with compactly supported functions and are derived by non-variational arguments.
\EMHR
%
%----------------------------------------------------------------
\subsection{Discussion and open problems}\label{sect:discussion}
%----------------------------------------------------------------
%
%
%
\paragraph{Other existence results}  The nonlinear eigenvalue problem \eqref{eq:nlEv2} can also be studied in a periodic setting, in which $V$ is supposed to be periodic with respect to $x$ while the convolutions kernels $a$ and $b$ are still defined on $\Rset$. As explained  below in \S\ref{sect:Approx}, our variational existence result can be generalized to this case provided that $f$ and $b$ comply with Assumptions \ref{Ass:MainKernel} and \ref{Ass:MainNonl}. It also possible to replace the right hand side in \eqref{eq:nlEv2} by a (finite or infinite) sum of nonlocal integral operators and this has already been done in \cite{HM19a} \BMHR for FPUT chains with nonlocal interactions, see also \cite{Pan19a} for a different approach. \EMHR A further candidate for generalizations are higher space dimensions, where the scalar functions \BMHR $V$ \EMHR and $b$ depend on a $d$-dimensional variable $x\in\Rset^d$, but the verification of the technical arguments might be more involved.
\par
At least for the periodic counterpart of \eqref{eq:nlEv2}, the existence of positive solutions can be shown by different methods. A first example are variants of the Crandall-Rabinowitz theory, see \cite{CR71,Rab71}, which constructs global solution branches that bifurcate from simple eigenvalues of the linearized eigenvalue equation $\la\, U= a\ast U $ with $\la=\si / f^\prime\at{0}$. A similar but local continuation approach has \BMHR been \EMHR applied in \cite{Ome13} to a complex-valued eigenvalue problem. \BMHR More \EMHR general small-amplitude results can be derived from spatial center manifold techniques \BMHR or direct bifurcation methods including Ljapunov-Schmidt reduction \EMHR as described in \cite{BS18,ST19}. \BMHR Furthermore, \EMHR since \eqref{eq:nlEv} admits a variational setting as explained in \S\ref{sect:setting}, one might also employ the Mountain Pass Theorem, see for instance \cite{Pan05} for related results on supersonic traveling waves in FPUT chains.
\par
Notice, however, that the restriction to periodic solutions simplifies the problems because the involved convolution operators turn out to be compact in this setting. One advantage of our approach is that it likewise applies to the solitary case, in which convolution operators are not compact. Moreover, it provides an existence result with unimodal and nonnegative eigenfunctions and gives rise to  an easy-to-implement approximation scheme, see \S\ref{sect:Approx} for more details.
\paragraph{Uniqueness and Stability}
Another open problem is the uniqueness of solutions to \eqref{eq:nlEv} or \eqref{eq:nlEv2} and we are not aware of any such result for generic nonlinear eigenvalues problems with superlinear nonlinearity. For FPUT chains, local uniqueness up to spatial shifts has been proven in certain asymptotic regimes which allow to tackle \eqref{eq:nlEv} by ODE arguments and perturbations techniques, see for instance the discussion in \cite{HM19b} and the references therein. These asymptotic results might be adapted to the scaling limits in \S\ref{sect:scaling}, but the general case remains open.  However, numerical simulations as discussed in \S\ref{sect:Approx} indicate the family of unimodal eigenfunctions that is provided by the constrained optimization approach in \S\ref{sect:Existence} is in fact unique, see Conjecture \ref{Conj:Uni} for a precise statement.
\par
We expect that the uniqueness of unimodal eigenfunctions can be linked to a nonlinear variant of the famous Krein-Rutman theory, but all known generalizations seem to be restricted to $1$-homogeneous nonlinearities, see for instance \cite{Mah07,Ara18}, and do not cover problems like \eqref{eq:nlEv}. Another key question concerns the linearized eigenvalue problem, in which the nonlinear superposition operator is replaced by the multiplication with the unimodal (and frozen) coefficient function $f^\prime\at{U}$. In this context it would be highly desirable to develop nonlocal variants of the classical Sturm-Liouville theory, but we are not aware of any relevant result \BMHR in this direction. \EMHR A closely related problems concerns the linear dynamical stability of solutions to \eqref{eq:nlEv} with respect to a given dynamical system (e.g., the FPUT lattice or the nonlocal PDE \eqref{Eqn:Aggregation1}).
\paragraph{Other nonlinearities}

Further open problems concern the existence, uniqueness and stability of solutions to vector-valued analogues to \eqref{eq:nlEv}, where a particular application are traveling waves in two-dimensional mechanical lattice systems as described in \cite{ChH18}. Finally, one might also study other types of nonlinearities such as sublinear or convex-concave  variants of $f$. In the context of FPUT chains there exists some partial results on the existence of solutions, see \cite{TV05, HR10,TV10,SZ12,HMSZ13}, but a complete theory is still missing. \BMHR We also refer to the numerical simulations at the end of \S\ref{sect:Approx}. \EMHR
%
%
%----------------------------------------------------------------
\section{Existence of solutions}\label{sect:Var} %
%----------------------------------------------------------------
%
%
For the subsequent analysis we write $F(r)=\int_{0}^{r} f(s) \dint s$ and notice that $F:\cointerval{0}{\infty}\to\Rset$ is strictly convex and superquadratic since Assumption \ref{Ass:MainNonl} ensures
\begin{align}
\label{eqn:superlin}
f\at{\la\,r }\geq \la f\at{r}, \qquad f^\prime\at{r}\,r \geq 2 F\at{r} \qquad \qquad \text{for all} \quad \la\geq1\quad\text{and}\quad r\geq0\,.
\end{align}
We further define the Fourier transform by
\begin{align*}
\widehat{U}\at{k}=\int\limits_\Rset\mhexp{-\iu kx}\,U\at{x}\dint{x}\,,\qquad
U\at{x}=\frac{1}{2\pi}\int\limits_\Rset\mhexp{+\iu kx}\,\widehat{U}\at{k}\dint{x}\,,
\end{align*}
and this implies the Plancherel identity
\begin{align}
\label{Eqn:Plancherel}
\bnorm{\widehat{U}}_2^2=2\pi\norm{U}_2^2
\end{align}
as well as
\begin{align*}
\widehat{U_1\ast U_2}=\widehat{U}_1\cdot\widehat{U}_2\,,\qquad \widehat{U}\at{0}=\int\limits_\Rset U\at{x}\dint{x}\,.
\end{align*}
Finally, introducing the functionals
\begin{align}
\notag%\label{eq:defmathcalP}
  \calP(V):= \int\limits_\Rset F\bat{\at{b \ast V}\at{x}} \dint x\,,\qquad
  \calK\at{V} :=  \frac12 \int\limits_\Rset\bat{ V\at{x}}^2\dint{x}\,,
\end{align}
the nonlinear eigenvalue problem  \eqref{eq:nlEv} can be written as
\begin{align}\label{eq:EUL}
  \sigma \partial \calK(V) = \partial \calP (V)\,,
\end{align}
where $\partial$ abbreviates the G\^ateaux differential operator. The latter formula is the starting point for our variational setting.
%
%
%----------------------------------------------------------------
\subsection{Variational setting}\label{sect:setting}
%----------------------------------------------------------------
%
%
In this paper we construct solutions to \eqref{eq:nlEv} by maximizing $\calP$ subject to
\begin{align*}
V\in \calC_K &:= \big\{W \in \calC \mid \calK(W)=K \big \} \,,
\end{align*}
which encodes both an $\fspaceL^2$-norm constraint and a shape constraint since the convex cone $\calC$ consists of all square integrable functions that are even, nonnegative, and unimodal. This reads
\begin{align*}
\calC&:= \overline{\big\{W \in \fspaceC^\infty_c(\Rset) \;\mid\;  W(x)=W(-x) \geq 0 \;\text{and}\;W'(x)=-W'(-x)\leq 0\;\text{for all}\;x \geq 0 \big\}}
\end{align*}
and implies that $\calC$ is closed under both weak and strong convergence in $\fspaceL^2\at\Rset$.
\par
The eigenvalue $\si$ in the Euler-Lagrange equation \eqref{eq:EUL} is clearly the Lagrange multiplier to the norm constraint. There is, however, no multiplier to the shape constraint. This might be surprising at a first glance but can be deduced from the properties of the \emph{improvement operator}
\begin{align}
\label{eq:defT}
  \calT(V) := \mu(V) \partial \calP(V)\,,\qquad
\mu(V) & := \frac{\|V\|_2}{\| \partial \calP(V)\|_2}
\end{align}
which is useful for both analytical and numerical issues.
\begin{lemma}[invariance properties]
\label{Lem:InvProps}
$V\in\calC$ implies $b\ast V\in\calC$ as well as  $f\at{b\ast V}\in \calC$.
\end{lemma}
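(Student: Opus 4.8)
The plan is to reduce the statement to three elementary facts. Write $\calC^\ast$ for the set of all $W\in\fspaceL^2\at{\Rset}$ admitting a representative that is even, nonnegative, and nonincreasing on $\cointerval{0}{\infty}$. I would establish: (i)~$\calC=\calC^\ast$; (ii)~if $U\in\fspaceL^1\at{\Rset}\cap\calC^\ast$ and $W\in\fspaceL^2\at{\Rset}\cap\calC^\ast$ then $U\ast W\in\calC^\ast$; (iii)~if $U\in\calC^\ast$ is bounded then $f\at{U}\in\calC^\ast$. Granting these, the lemma follows: by Assumption~\ref{Ass:MainKernel} we have $b\in\calC^\ast$ with $\norm{b}_1=1$, so $b\ast V\in\calC^\ast=\calC$ for every $V\in\calC=\calC^\ast$; moreover $b,V\in\fspaceL^2\at{\Rset}$ gives $\babs{\at{b\ast V}\at{x}}\leq\norm{b}_2\norm{V}_2$ by Cauchy--Schwarz, so $U\deq b\ast V$ is bounded and (iii) yields $f\at{b\ast V}\in\calC^\ast=\calC$.

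\emph{Step (ii), convolution --- the heart of the matter.} That $U\ast W$ lies in $\fspaceL^2\at{\Rset}$ and is even and nonnegative is clear from Young's inequality $\norm{U\ast W}_2\leq\norm{U}_1\norm{W}_2$ and symmetry. For the unimodality I would invoke the layer-cake representation: the superlevel sets of an even function that is nonincreasing in $\abs{x}$ are symmetric intervals, hence
\begin{align*}
U\at{x}=\int\limits_0^\infty\mathbf{1}_{\ccinterval{-s\at{\la}}{s\at{\la}}}\at{x}\dint{\la}\,,\qquad W\at{x}=\int\limits_0^\infty\mathbf{1}_{\ccinterval{-t\at{\mu}}{t\at{\mu}}}\at{x}\dint{\mu}\,,
\end{align*}
and Tonelli's theorem gives
\begin{align*}
\at{U\ast W}\at{x}=\int\limits_0^\infty\int\limits_0^\infty\bat{\mathbf{1}_{\ccinterval{-s\at{\la}}{s\at{\la}}}\ast\mathbf{1}_{\ccinterval{-t\at{\mu}}{t\at{\mu}}}}\at{x}\dint{\la}\dint{\mu}\,.
\end{align*}
Each integrand is the convolution of two symmetric boxes, hence an even, nonnegative, piecewise-linear trapezoid that is constant near the origin and nonincreasing in $\abs{x}$; a nonnegative superposition of such functions is again even, nonnegative, and nonincreasing in $\abs{x}$, so $U\ast W\in\calC^\ast$.

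\emph{Step (iii), composition with $f$.} Since $f^{\prime\prime}>0$ and $f^\prime\at{0}>0$, the map $f$ is strictly increasing on $\cointerval{0}{\infty}$ with $f\at{0}=0$; thus $f\at{U}$ is even, nonnegative (because $f\geq f\at{0}=0$ there and $U\geq0$), and nonincreasing on $\cointerval{0}{\infty}$ (because $f$ is nondecreasing and $U$ is nonincreasing there). It remains to verify $f\at{U}\in\fspaceL^2\at{\Rset}$. As $U$ is bounded and, being unimodal and square integrable, satisfies $U\at{x}\to0$ for $\abs{x}\to\infty$, choose $R$ with $U\at{x}\leq1$ for $\abs{x}\geq R$: then $0\leq f\at{U}\leq f\bat{\norm{U}_\infty}$ on $\ccinterval{-R}{R}$, whereas convexity of $f$ together with $f\at{0}=0$ yields $f\at{s}\leq f\at{1}\,s$ for $0\leq s\leq1$ and hence $\int_{\abs{x}\geq R}f\bat{U\at{x}}^2\dint{x}\leq f\at{1}^2\norm{U}_2^2<\infty$. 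Therefore $f\at{U}\in\calC^\ast$.

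\emph{Step (i), identifying $\calC$, and the main obstacle.} The inclusion $\calC\subseteq\calC^\ast$ is exactly the closedness of the defining properties under $\fspaceL^2$-convergence already noted after the definition of $\calC$. For $\calC^\ast\subseteq\calC$ I would approximate $W\in\calC^\ast$ first by the truncations $W\,\eta_R$ with $\eta_R$ a continuous even cut-off that equals $1$ on $\ccinterval{-R}{R}$, is supported in $\ccinterval{-R-1}{R+1}$, and is nonincreasing in $\abs{x}$, noting $W\,\eta_R\in\calC^\ast$ and $W\,\eta_R\to W$ in $\fspaceL^2\at{\Rset}$ as $R\to\infty$; mollifying $W\,\eta_R$ with an even, nonnegative, unimodal $\fspaceC^\infty_c\at{\Rset}$-kernel keeps it in $\calC^\ast$ by Step (ii) while making it smooth and compactly supported, hence a member of the set whose $\fspaceL^2$-closure is $\calC$. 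I expect the only genuinely delicate point to be the unimodality claim in Step (ii): a direct differentiation argument fails because elements of $\calC$ need not be concave on $\cointerval{0}{\infty}$, which is precisely why one passes to the layer-cake decomposition and the explicit trapezoidal profile of a box convolved with a box. The remaining ingredients --- the Young and Cauchy--Schwarz bounds, the decay to zero of unimodal $\fspaceL^2$-functions, the linear bound on $f$ near $0$, and the truncation plus mollification --- are routine.
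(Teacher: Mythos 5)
Your proposal is correct, but it takes a genuinely different route from the paper. The paper proves the unimodality of $U=b\ast V$ by differentiating the convolution for $V\in\fspaceC^1\at\Rset$, splitting the integral at $\pm x$ and reading off the sign of $V^\prime\at{y}\bat{b\at{x-y}-b\at{x+y}}$ from the evenness and unimodality of $b$ and $V$, and then disposes of general $V\in\calC$ (and of $f\at{b\ast V}\in\calC$) with a brief appeal to approximation, continuity, and monotonicity. You instead avoid differentiation altogether: the layer-cake decomposition into symmetric boxes, the trapezoidal profile of a box convolved with a box, and Tonelli give unimodality of $U\ast W$ directly for arbitrary even, nonnegative, nonincreasing representatives, which in particular handles non-smooth kernels (such as the singular Bessel-type $b$ of the ODE case) without any smoothing of $V$ or $b$. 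The price you pay is the extra identification $\calC=\calC^\ast$, i.e.\ that the $\fspaceL^2$-closure defining $\calC$ coincides with the set of functions possessing an even, nonnegative, nonincreasing representative; your truncation-plus-mollification argument for $\calC^\ast\subseteq\calC$ is sound (the mollified truncation is smooth, compactly supported, and, being continuous and a.e.\ equal to a monotone function, genuinely monotone on $\cointerval{0}{\infty}$), while the converse inclusion is the routine a.e.-subsequence fact the paper implicitly uses when asserting closedness of $\calC$. A further small plus of your write-up is that you verify $f\at{b\ast V}\in\fspaceL^2\at\Rset$ explicitly via $f\at{s}\leq f\at{1}s$ for $0\leq s\leq1$, a point the paper leaves to the estimates in Proposition \ref{prop:PT}. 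In short: the paper's argument is shorter but leans on an unspecified approximation step; yours is longer but self-contained at the level of representatives and more robust with respect to the regularity of $b$ and $V$.
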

\begin{proof}
Let $V\in\calC\subset\fspaceL^2\at\Rset$ be given. Standard arguments for convolution operators reveal that $U=b\ast V$ is square integrable as well as even and nonnegative. Under the additional assumption $V\in\fspaceC^1\at\Rset$ and fixing $x>0$ we compute
\begin{align*}
\frac{\dint}{\dint{x}}U\at{x}&=\int
\limits_{-\infty}^0 b\at{x-y}V^\prime\at{y}\dint{y}+
\int
\limits_{0}^{+\infty} b\at{x-y}V^\prime\at{y}\dint{y}
=\int\limits_{0}^{+\infty} V^\prime\at{y}\bat{b\at{x-y}-b\at{x+y}}\dint{y}\\&=
\int\limits_{0}^{x} \underbrace{V^\prime\at{y}}_{\leq0}\underbrace{\bat{b\at{x-y}-b\at{x+y}}}_{\geq0 }\dint{y}
+
\int\limits_{x}^{+\infty} \underbrace{V^\prime\at{y}}_{\leq0}\underbrace{\bat{b\at{y-x}-b\at{y+x}}}_{\geq0 }\dint{y}\leq0\,,
\end{align*}
where we used that both $b$ and $V$ are even as well as increasing and decreasing for negative and positive arguments, respectively. We have thus shown the unimodality of $U$ for smooth $V$ and the general case follows from approximation arguments. Finally, $f\at{U}\in\calC$ is a direct consequence of the continuity and the monotonicity of $f$.
\end{proof}
The next result implies  that the shape constraint does not in fact  alter the Euler-Lagrange equation \eqref{eq:EUL}.
\begin{proposition}[properties of the functionals and the operators]\label{prop:PT}
Let $K>0$. Then,  $\calP$, $\partial \calP$, and $\calT$ are well-defined on $\calC_K$. Moreover, the set $\calC_K$ is invariant under $\calT$ with
\begin{align}
\label{eq:PTP}
\calP\bat{\calT\at{V}} \geq \calP\at{V} \qquad \text{for all}\quad V \in \calC_K\,,
\end{align}
where equality holds if and only  if $V=\calT (V)$, i.e. if $V$ satisfies \eqref{eq:EUL} with $\sigma=(\mu(V))^{-1}$.
\end{proposition}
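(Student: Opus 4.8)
The plan is to verify the three claimed properties in the natural order: well-definedness on $\calC_K$, the identity $\calT(\calC_K)\subseteq\calC_K$, and finally the improvement inequality \eqref{eq:PTP} with its equality case. For well-definedness, I would first observe that for $V\in\calC_K$ we have $b\ast V\in\fspaceL^2\at\Rset$ with $\norm{b\ast V}_2\leq\norm{b}_1\norm{V}_2$, and moreover $b\ast V$ is bounded by $\norm{b}_2\norm{V}_2$ (Cauchy--Schwarz), so $U=b\ast V$ lies in $\fspaceL^2\cap\fspaceL^\infty$. Since $F$ is smooth with $F(0)=0$ and $F'(0)=f(0)=0$, a Taylor estimate gives $0\leq F(r)\leq C(R)\,r^2$ for $r\in\ccinterval{0}{R}$, whence $\calP(V)=\int F(U)<\infty$; similarly $\partial\calP(V)=b\ast f(U)$ with $f(U)\in\fspaceL^2$ because $\nabs{f(r)}\leq C(R)\,r$ on $\ccinterval{0}{R}$, so $\partial\calP(V)\in\fspaceL^2$. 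This also shows $\partial\calP(V)\neq 0$ (it is nonnegative and not identically zero, since $U\not\equiv0$ forces $f(U)\not\equiv0$ on a set of positive measure), so $\mu(V)$ is well-defined and positive, and hence $\calT(V)$ is well-defined.

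Next I would check invariance. By Lemma \ref{Lem:InvProps}, $V\in\calC$ implies $f\at{b\ast V}\in\calC$, and applying the first part of that lemma again (convolving with $b$) gives $\partial\calP(V)=b\ast f(b\ast V)\in\calC$; since $\calC$ is a cone, $\calT(V)=\mu(V)\,\partial\calP(V)\in\calC$ as well. For the norm constraint, by construction $\norm{\calT(V)}_2=\mu(V)\,\norm{\partial\calP(V)}_2=\norm{V}_2$, so $\calK(\calT(V))=\calK(V)=K$; thus $\calT(V)\in\calC_K$.

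For the improvement inequality, the key tool is convexity of $\calP$ together with the specific normalization in $\mu$. Since $F$ is convex, $\calP$ is convex, so $\calP(W)\geq\calP(V)+\bskp{\partial\calP(V)}{W-V}$ for all admissible $W$; taking $W=\calT(V)$ yields
\begin{align*}
\calP\bat{\calT(V)}\geq\calP(V)+\bskp{\partial\calP(V)}{\calT(V)-V}
=\calP(V)+\mu(V)\norm{\partial\calP(V)}_2^2-\bskp{\partial\calP(V)}{V}.
\end{align*}
Now $\mu(V)\norm{\partial\calP(V)}_2^2=\norm{V}_2\,\norm{\partial\calP(V)}_2$, and by Cauchy--Schwarz $\bskp{\partial\calP(V)}{V}\leq\norm{V}_2\,\norm{\partial\calP(V)}_2$, so the bracketed difference is $\geq0$ and \eqref{eq:PTP} follows. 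Equality in \eqref{eq:PTP} forces equality in the Cauchy--Schwarz step (hence $\partial\calP(V)$ and $V$ are parallel, and the proportionality constant must be $\mu(V)^{-1}$ by matching norms, i.e. $V=\calT(V)$) as well as — given strict convexity of $F$ on $\cointerval{0}{\infty}$ — that $b\ast\calT(V)=b\ast V$ a.e.; but once $V=\calT(V)$ there is nothing further to extract, and conversely $V=\calT(V)$ trivially gives equality. Unwinding $V=\calT(V)=\mu(V)\,\partial\calP(V)$ gives $\mu(V)^{-1}V=b\ast f(b\ast V)$, which is exactly \eqref{eq:EUL}/\eqref{eq:nlEv} with $\sigma=\mu(V)^{-1}$.

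The main obstacle is the rigorous justification that $\partial\calP$ really is the G\^ateaux derivative of $\calP$ on $\calC_K$ with the formula $\partial\calP(V)=b\ast f(b\ast V)$, and that the convexity/subgradient inequality is valid in this nonreflexive $\fspaceL^2$-on-$\Rset$ setting where $b\ast V$ need not decay — this requires the a priori $\fspaceL^\infty$-bound on $U=b\ast V$ noted above (using $b\in\fspaceL^2$) so that $f$ and $F$ are effectively Lipschitz/$C^1$ on the relevant range, plus a dominated-convergence argument to differentiate under the integral. Everything else is bookkeeping with Young's and Cauchy--Schwarz inequalities and the cone structure from Lemma \ref{Lem:InvProps}.
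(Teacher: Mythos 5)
Your proposal is correct and follows essentially the same route as the paper: well-definedness of $\calP$ and $\partial\calP$ via Young-type bounds on $b\ast V$, invariance of $\calC_K$ via Lemma \ref{Lem:InvProps} together with the norm preservation built into $\mu(V)$, and \eqref{eq:PTP} from the convexity (subgradient) inequality for $\calP$ evaluated at $W=\calT(V)$. The only cosmetic difference is the finishing algebra: you invoke Cauchy--Schwarz and its equality case to identify when equality holds, whereas the paper rewrites the same inner product as $\norm{\calT(V)-V}_2^2/\at{2\,\mu(V)}$ using $\norm{\calT(V)}_2=\norm{V}_2$, which yields the characterization $V=\calT(V)$ directly.
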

\begin{proof}
\emph{\ul{Properties of $\calP$ and $\partial\calP$}}\,:
Young's convolution inequality yields
\begin{align}
\label{eq:baV2}
  \|b \ast V\|_\infty  \leq \|b\|_2 \|V\|_2 = \|b\|_2  \sqrt{2K}\,,\qquad
 \|b \ast V\|_2  \leq \|b\|_1 \|V\|_2\leq  \|b\|_1 \sqrt{2K}\,,
\end{align}
so the properties of $f$ -- see Assumption \ref{Ass:MainKernel} and \eqref{eqn:superlin} ---  imply the pointwise estimate
\begin{align*}
0 \leq F(b \ast V) \leq C \bat{b \ast V}^2 \qquad \text{with}\qquad
C:= \sup_{0 \leq r \leq \|b \ast V\|_\infty}\frac{f(r)}{r}\leq \frac{f(\sqrt{2K})}{\sqrt{2K}}   \,.
\end{align*}
Integration with respect to $x$ reveals that $\calP$ is well-defined on $\fspaceL^2\at\Rset$ and satisfies
\begin{align}\nonumber
  0 &\leq \calP(V) \leq C\|b \ast V\|_2^2  \leq C \| V\|_2^2
= f(\sqrt{2K})\sqrt{2K}
\end{align}
for any $V$ with $\norm{V}_2^2=2K$. Similarly we have
\begin{align*}
0 \leq f(b \ast V) \leq f'\bat{\sqrt{2K}}(b \ast V),
\end{align*}
which implies that $\partial \calP\at{V}$ is also well-defined with
\begin{align*}
\babs{\skp{\partial \calP(V)}{ W}} \leq \int\limits_\Rset f\at{b \ast V} \,\at{b \ast W }\dint{x}\leq f'(\sqrt{2K}) \|b \ast V\|_2 \|b \ast W\|_2\leq  f'(\sqrt{2K}) \sqrt{2K} \| W\|_2
\end{align*}
for any $W \in \fspaceL^2\at\Rset$.
\par
\emph{\ul{Improvement operator}}\.: %
Both $F$ and $\calP$ are convex due to $F''(r)=f'(r)\geq 0$ and this implies
\begin{align}\label{eq:convP}
\calP(W)-  \calP(V) \geq \bskp{ \partial \calP(V)}{ W-V }
\end{align}
for all $V,W \in\fspaceL^2\at{\Rset}$. Since $\calP\at{V}>0=\calP\at{0}$ holds for any $V\in C_K$, we deduce that $\partial\calP\at{V}\neq0$ because otherwise the evaluation of \eqref{eq:convP} with $W=0$ would lead to a contradiction. We therefore have
\begin{align}
\label{eq:muV}
 \mu(V)\neq0\,,
\end{align}
\BMHR so the \EMHR operator $\calT$ is well-defined on $\calC_K$.  Moreover, it maps $\calC_K$ into itself since $\norm{\calT\at{V}}_2=\norm{V}_2$ holds by construction and because both the convolution with $b$ and the superposition with $f$ respects the unimodality, evenness and nonnegativity of functions, see Lemma \ref{Lem:InvProps}. Finally, setting $W= \calT\at{V}$ in \eqref{eq:convP} and using \eqref{eq:muV} as well as $\nnorm{\calT\at{V}}_2=\norm{V}_2$ we get
\begin{align*}
  \calP\bat{\calT\at{V}}- \calP\at{V}   \geq\bskp{\partial \calP(V)}{\calT(V)-V } = \frac{ \bskp{ \calT(V)}{ \calT(V)-V }}{\mu\at{V}}= \frac{\| \calT(V) -V \|_2^2}{2 \mu(V)},
\end{align*}
and conclude that $\calP\bat{\calT\at{V}}=\calP\at{V}$ holds if and only $V$ is a fixed point of $\calT$.
\end{proof}
In the next section we prove that $\calP$ attains its maximum on $\calC_K$.
%
%
%----------------------------------------------------------------
\subsection{Existence of constrained maximizers}
\label{sect:Existence}
%----------------------------------------------------------------
%
%
%
Our goal is to show that any maximizing sequence for $\calP$ in $\calC_K$ admits strongly convergent subsequences so that the existence of solutions to the constrained optimization problem
can be deduced by the Direct Method. In a first step we study
\begin{align}
\label{Eqn:MaximaOfEnergy}
  P\at{K}  := \sup_{V \in \calC_K} \calP(V)\,,\qquad\qquad
  Q\at{K} := \sup_{V \in \calC_K} \calQ(V)
\end{align}
with
\begin{align*}
\calQ(V)= \frac{f'(0)}{2} \int\limits_{\Rset} \bat{(b \ast V)\at{x}}^2\,,
\end{align*}
i.e., we compare the nonlinear functional $\calP$ with its quadratic counterpart \BMHR $\calQ$. \EMHR
\begin{lemma}[key estimate for maxima]\label{Lem:SuperEstimate}%
For any $K>0$, the variational problem is super-quadratic in the sense that
\begin{align}
\label{Lem:SuperEstimate.Eqn1}
P(K)  > Q(K)={f^\prime\at{0}\,K}
\end{align}
holds for the quantities \BMHR defined \EMHR in \eqref{Eqn:MaximaOfEnergy}.
\end{lemma}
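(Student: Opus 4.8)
The plan is to bound $P(K)$ from below by testing $\calP$ on an explicit one‑parameter family obtained by dilating a fixed profile: along such a family the quadratic functional $\calQ$ approaches $\alpha K$, while the genuine nonlinear surplus $\calP-\calQ$, although it also vanishes, does so strictly more slowly, so that $\calP$ exceeds $\alpha K$ once the dilation is large enough.

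First I would settle the value $Q(K)=\alpha K$. Young's inequality gives $\calQ(V)=\tfrac{\alpha}{2}\|b\ast V\|_2^2\le\tfrac{\alpha}{2}\|b\|_1^2\|V\|_2^2=\alpha K$ on $\calC_K$, hence $Q(K)\le\alpha K$. For the reverse bound fix any $V_0\in\calC$ with $\calK(V_0)=1$ that is bounded and lies in $H^1$ — a piecewise‑linear tent profile works — and for $\lambda>0$ put $V_\lambda(x):=\sqrt{K/\lambda}\,V_0(x/\lambda)$, so $V_\lambda\in\calC_K$. On the Fourier side $\widehat{V_\lambda}(k)=\sqrt{K\lambda}\,\widehat{V_0}(\lambda k)$, and after the substitution $u=\lambda k$ the Plancherel identity yields $\calQ(V_\lambda)=\tfrac{\alpha K}{4\pi}\int_\Rset|\widehat b(u/\lambda)|^2|\widehat{V_0}(u)|^2\dint u$, while $\alpha K=\tfrac{\alpha K}{4\pi}\int_\Rset|\widehat{V_0}(u)|^2\dint u$. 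Since $b$ is a probability density with finite second moment, $\widehat b$ is real, $|\widehat b|\le1$, and $0\le 1-|\widehat b(v)|^2=(1-\widehat b(v))(1+\widehat b(v))\le 2\int_\Rset(1-\cos(vx))b(x)\dint x\le\bat{\int_\Rset x^2 b(x)\dint x}\,v^2$, so
\[
0\;\le\;\alpha K-\calQ(V_\lambda)\;\le\;\frac{\alpha K}{4\pi\lambda^2}\Bat{\int_\Rset x^2 b(x)\dint x}\Bat{\int_\Rset u^2|\widehat{V_0}(u)|^2\dint u}\;=\;\DO{\lambda^{-2}}\,.
\]
In particular $\calQ(V_\lambda)\to\alpha K$, so $Q(K)=\alpha K$, and $\|b\ast V_\lambda\|_2^2\ge K$ for all large $\lambda$.

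For the strict inequality it remains to bound the nonlinear surplus from below along the same family. Strict convexity of $f$ with $f(0)=0$, $f'(0)=\alpha$ gives $f(s)>\alpha s$ for $s>0$, hence $G(r):=F(r)-\tfrac{\alpha}{2}r^2=\int_0^r\bat{f(s)-\alpha s}\dint s$ is positive on $(0,\infty)$, and the Taylor expansion $G(r)=\tfrac{\beta}{6}r^3+\Do{r^3}$ with $\beta=f''(0)>0$ produces constants $r_0,c_0>0$ with $G(r)\ge c_0 r^3$ on $\cinterval{0}{r_0}$. Because $\|b\ast V_\lambda\|_\infty\le\|b\|_1\|V_\lambda\|_\infty=\sqrt{K/\lambda}\,\|V_0\|_\infty\to0$, for $\lambda$ large one has $\|b\ast V_\lambda\|_\infty\le r_0$, whence $\calP(V_\lambda)-\calQ(V_\lambda)=\int_\Rset G(b\ast V_\lambda)\dint x\ge c_0\|b\ast V_\lambda\|_3^3$. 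A reverse interpolation — Cauchy–Schwarz applied to $g^2=g^{3/2}g^{1/2}$ — gives $\|g\|_3^3\ge\|g\|_2^4/\|g\|_1$; with $g=b\ast V_\lambda$ and $\|g\|_1\le\|b\|_1\|V_\lambda\|_1=\sqrt{K\lambda}\,\|V_0\|_1$ together with $\|g\|_2^2\ge K$, this yields $\|b\ast V_\lambda\|_3^3\ge \const\cdot\lambda^{-1/2}$. Combining the two estimates, $\calP(V_\lambda)-\alpha K\ge \const\cdot\lambda^{-1/2}-\DO{\lambda^{-2}}>0$ for $\lambda$ large, so $P(K)\ge\calP(V_\lambda)>\alpha K=Q(K)$.

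The main obstacle is the last step. Since convolution with $b$ can only decrease $L^p$ norms for $p>1$, one cannot bound $\|b\ast V_\lambda\|_3$ below by $\|V_\lambda\|_3$; instead one must exploit that the $L^1$ mass of $b\ast V_\lambda$ grows only like $\lambda^{1/2}$ while its $L^2$ norm stays bounded below, which forces the $L^3$ norm to decay no faster than $\lambda^{-1/2}$. The delicate point is then the competition of exponents: the cubic gain of order $\lambda^{-1/2}$ must beat the quadratic deficit $\alpha K-\calQ(V_\lambda)$, and it is precisely the finite‑second‑moment hypothesis on $b$ that makes this deficit as small as $\DO{\lambda^{-2}}$. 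One should also verify that the chosen profile $V_0$ genuinely belongs to the cone $\calC$ and has all the finite norms used, which is immediate for a tent profile.
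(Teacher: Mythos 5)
Your proof is correct, and it reaches the conclusion by the same overall strategy as the paper — testing $\calP$ on a one-parameter family of broad, low-amplitude profiles in $\calC_K$ and beating the quadratic deficit with a cubic surplus of order (parameter)$^{-1/2}$ — but the technical implementation of both halves differs. The paper works with the specific family $V_L=\sqrt{K/L}\,\chi_{[-L,L]}$ entirely in physical space: it uses the invariance Lemma \ref{Lem:InvProps} to get unimodality of $U_L=b\ast V_L$, a tail/second-moment bound on $b$ to show $\calQ(V_L)\geq f'(0)K-C L^{-2/3}$, and a direct pointwise lower bound on $U_L$ on $[-L+L^{1/3},L-L^{1/3}]$ to get the cubic gain $\geq c\,L^{-1/2}$. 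You instead dilate an arbitrary fixed profile $V_0\in\calC\cap H^1\cap L^1\cap L^\infty$, estimate the quadratic deficit on the Fourier side via $1-|\widehat b(v)|^2\leq v^2\int x^2b\dint x$, which gives the sharper rate $O(\lambda^{-2})$, and obtain the cubic gain from the reverse interpolation $\|g\|_3^3\geq\|g\|_2^4/\|g\|_1$ applied to the nonnegative function $g=b\ast V_\lambda$, whose $L^1$ norm grows only like $\lambda^{1/2}$. Your route has the advantage of not needing unimodality of $b\ast V_\lambda$ at all (only nonnegativity), of working for any admissible test profile rather than a box, and of making the exponent competition ($\lambda^{-1/2}$ versus $\lambda^{-2}$) more comfortable than the paper's $L^{-1/2}$ versus $L^{-2/3}$; the paper's version is more elementary in that it avoids Plancherel and the interpolation inequality and gets everything from explicit pointwise bounds. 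All the auxiliary facts you invoke (membership of the dilated tent profile in $\calC_K$, the local bound $G(r)\geq c_0r^3$ from $f\in C^2$ with $f''(0)>0$, the smallness of $\|b\ast V_\lambda\|_\infty$) are justified under Assumptions \ref{Ass:MainKernel} and \ref{Ass:MainNonl}, and your argument also delivers the equality $Q(K)=f'(0)K$ as required.
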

\begin{proof}
\emph{\ul{Contributions from quadratic terms and test functions}}\,: %
We first observe that Young's inequality $\nnorm{b\ast V}_2\leq \nnorm{b}_1\nnorm{V}_2$ and the normalization condition $\nnorm{b}_1=1$ imply
\begin{align}
\label{Lem:SuperEstimate.PEqn1}
Q(K) = \sup_{V \in \calC_K} \frac{f'(0)}{2} \bnorm{b\ast{V}}_2^2\leq
\sup_{V \in \calC_K} \frac{f'(0)}{2} \bnorm{{V}}_2^2\leq f'(0)\, K\,.
\end{align}
For any $L>1$ we now consider the function $V_L\in \calC_K$ defined by
\begin{align*}
V_L\at{x} =\sqrt{\frac{K}{L}} \chi_{[-L,L]}(x)\,.
\end{align*}
Since $U_L:=b \ast V_L$ is even and unimodal according to Lemma \ref{Lem:InvProps} we get
\begin{align*}
U_L\at{L-L^{1/3}}\geq \sqrt{\frac{K}{L}} \int\limits^{+L^{1/3}} _{\BMHR-L^{1/3}\EMHR}b\at{x}\dint{x}\geq \sqrt{\frac{K}{L}}\at{1-2\int\limits_{L^{1/3}}^\infty b\at{x}\dint{x}}\geq  \sqrt{\frac{K}{L}}\at{1-\frac{C}{L^{2/3}}}\,,
\end{align*}
where we also used the moment estimates for $b$ from \eqref{ass:b1}, and this implies
\begin{align*}
\calQ\at{V_L}&\geq \frac{f'(0)}{2} \int\limits_{-L+L^{1/3}}^{L- L^{1/3}} U_L^2\at{x}\dint{x}\geq f'(0) \bat{L-L^{1/3}} U^2_L\at{L-L^{1/3}} \geq f^\prime\at{0}K-\frac{C}{L^{2/3}}\,.
\end{align*}
In particular, we have $Q\at{K}\geq  \lim_{L\to\infty} \calQ\bat{V_L}=f^\prime\at{0}\,K$ and obtain in combination with \eqref{Lem:SuperEstimate.PEqn1} the formula for $Q\at{K}$.
\par
\emph{\ul{Contributions from cubic terms}}\,:
On the other hand, thanks to $\nnorm{U_L}_\infty\leq \nnorm{b}_1\nnorm{V_L}_\infty\leq C L^{-1/2}$ and the regularity of $f$ combined with $F^{\prime\prime\prime}\at{0}=f^{\prime\prime}\at{0}>0$ we estimate
\begin{align*}
  \calP(V_L) -\calQ(V_L)  &= \int\limits_{-\infty}^{+\infty} \at{F\bat{U_L\at{x}} -\frac{f'(0)}{2}  U_L^2\at{x}} \dint{x }\geq
  c  \int\limits^{L-L^{1/3}}_{-L+L^{1/3}}
  U_L^3\at{x}\dint{x}\\&\geq c\,\at{L-L^{1/3}} \frac{K^{3/2}}{L^{3/2}}\at{1-\frac{C}{L^{2/3}}}^{\BMHR 3 \EMHR} \geq \frac{c}{L^{1/2}}
   \end{align*}
for \BMHR all sufficiently large $L$. \EMHR Combining all partial results we find
\begin{align*}
P\at{K}\geq \calP\at{V_L}=\BMHR \calQ\at{V_L}+\calP\at{V_L} - \calQ\at{V_L} \geq Q\at{K}-\frac{C}{L^{2/3}}+\frac{c}{L^{1/2}}\EMHR
\end{align*}
and the thesis follows from choosing $L$ finite but large enough.
\end{proof}
The superquadraticity relation \eqref{Lem:SuperEstimate.Eqn1} implies the concentration compactness of maximizing sequences within the cone $\calC$. The analogous conditions for traveling waves in atomic and peridynamical systems have been introduced and studied in \cite{Her10,HM19a}.
\begin{proposition}[strong compactness of maximizing sequences]\label{lem:ex1}
Any sequence $(V_n)_{n \in \BMHR \Nset\EMHR}$ in $\calC_K$ with $\calP(V_n) \to P(K)$ admits a subsequence that converges strongly in $\fspaceL^2\at\Rset$.
\end{proposition}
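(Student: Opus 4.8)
The plan is to invoke the concentration-compactness principle in the form adapted to even, nonnegative, unimodal functions, with the superquadraticity estimate \eqref{Lem:SuperEstimate.Eqn1} playing the role that rules out both vanishing and dichotomy. First I would set $U_n := b \ast V_n$ and record that $\norm{V_n}_2^2 = 2K$ is fixed, so $(V_n)$ is bounded in $\fspaceL^2\at\Rset$ and, up to a subsequence, $V_n \rightharpoonup V$ weakly, with $V \in \calC$ since $\calC$ is weakly closed. By Lemma \ref{Lem:InvProps} and Young's inequality, the $U_n$ are uniformly bounded in both $\fspaceL^2$ and $\fspaceL^\infty$, are even, nonnegative, and unimodal; unimodality plus the uniform $\fspaceL^2$-bound gives the crucial pointwise decay $U_n\at{x} \leq \sqrt{K/\abs{x}}$ away from the origin, uniformly in $n$. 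This is the mechanism that confines mass to a bounded region.

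The key step is to show that no mass escapes to infinity, i.e. that $\norm{V}_2^2 = 2K$ (not just $\leq 2K$). Suppose $\norm{V}_2^2 = 2\kappa$ with $\kappa < K$. Writing $V_n = V + W_n$ with $W_n \rightharpoonup 0$ and $\norm{W_n}_2^2 \to 2(K-\kappa)$, the uniform pointwise decay lets me split the integral defining $\calP\at{V_n}$ into a bounded-$x$ part, where $U_n \to b\ast V$ strongly (local Rellich-type compactness: convolution with the fixed kernel $b$ is locally compact on $\fspaceL^2$, using that $\widehat b$ is continuous and $b \in \fspaceL^2$), and a far-field part, where $U_n$ is uniformly small so that the superlinearity of $f$ makes $F\at{U_n}$ nearly quadratic there. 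Quantitatively, in the far field $F\at{U_n} \leq \tfrac{f'(0)}{2}U_n^2 + C U_n^3 \leq \tfrac{f'(0)}{2}U_n^2(1 + \varepsilon)$ once $\norm{U_n}_{\fspaceL^\infty(\abs{x}\geq M)}$ is small. Combining, one gets in the limit $P\at{K} = \lim \calP\at{V_n} \leq \calP\at{V} + f'(0)(K-\kappa) \leq P\at{\kappa} + Q\at{K-\kappa}$; but $P\at{\kappa} + Q\at{K-\kappa} = P\at{\kappa} + f'(0)(K-\kappa)$, and I then need the strict subadditivity $P\at{K} > P\at{\kappa} + f'(0)(K-\kappa)$ for $0 < \kappa < K$. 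This follows from \eqref{Lem:SuperEstimate.Eqn1}: scaling shows $P$ is superadditive and $P\at{\kappa} \geq f'(0)\kappa$ with strict inequality, and a rescaling argument (replace the optimizer at level $\kappa$ by its dilation to level $K$, exploiting $f\at{\lambda r}\geq \lambda f\at{r}$ from \eqref{eqn:superlin}) upgrades this to the required strict gap. Hence $\kappa = K$, so $V_n \to V$ strongly in $\fspaceL^2$, and $\calP\at{V} = P\at{K}$, i.e. $V$ is a maximizer.

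The main obstacle I anticipate is the far-field estimate: making rigorous that the uniform unimodal decay $U_n\at{x}\lesssim\abs{x}^{-1/2}$ transfers, via Young's inequality in the reverse direction (controlling $\norm{U_n}_{\fspaceL^\infty(\abs{x}\geq M)}$ by the tail of $V_n$ together with the tail of $b$), into a genuinely small contribution to $\calP$ outside a large ball — and doing so with constants uniform in $n$. One has to be careful that $V_n$ itself need not decay pointwise, only $U_n = b\ast V_n$ does; so the splitting must be done at the level of $U_n$, and the local strong convergence $U_n \to b\ast V$ on compact sets (rather than $V_n \to V$) is what feeds the lower bound. Once the energy splits as $\calP\at{V_n} \approx \calP\at{V} + \calQ\at{W_n}$ with $\calQ\at{W_n}\leq f'(0)(K-\kappa)$, the strict subadditivity of $P$ closes the argument; verifying that subadditivity from Lemma \ref{Lem:SuperEstimate} via a dilation of near-optimizers is routine but must be stated carefully.
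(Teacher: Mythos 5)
Your proposal is correct in substance, and it reaches the conclusion by the classical Lions-type concentration-compactness scheme rather than by the paper's bookkeeping; both arguments, however, rest on the same two pillars, namely the unimodal tail bound for $U_n=b\ast V_n$ (which makes the far-field contribution to $\calP$ essentially quadratic) and the strict gap $P\at{K}>Q\at{K}=f'\at{0}K$ of Lemma \ref{Lem:SuperEstimate} combined with the amplitude-scaling superquadraticity coming from \eqref{eqn:superlin}. You decompose $V_n=V+W_n$ around the weak limit, split the $x$-integral at a large radius, and phrase the conclusion as strict subadditivity $P\at{K}>P\at{\ka}+f'\at{0}\at{K-\ka}$; this does follow as you indicate, since scaling near-optimizers at level $\ka$ by $\la=\sqrt{K/\ka}\geq 1$ gives $P\at{K}\geq \tfrac{K}{\ka}P\at{\ka}$ and hence $P\at{K}-P\at{\ka}-f'\at{0}\at{K-\ka}\geq \tfrac{K-\ka}{K}\bat{P\at{K}-f'\at{0}K}>0$, the case $\ka=0$ being Lemma \ref{Lem:SuperEstimate} itself. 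The paper instead truncates the kernel to $\tilde b=b\,\chi_{[-M,M]}$, splits the sequence itself into $\widetilde V_n=V_n\chi_{[-M^2,M^2]}$ and $\ol{V}_n$, proves $\calP\at{V_n}\approx\widetilde{\calP}\bat{\widetilde V_n}+\widetilde{\calQ}\bat{\ol V_n}$ up to $O\at{\eps}$, and applies the same scaling inequalities directly to $\widetilde V_\infty$ and $\ol V_n$; this avoids isolating a subadditivity statement and yields the quantitative by-product $\nnorm{\ol V_n}_2^2\leq C\eps$, whereas your version makes the exclusion of vanishing and dichotomy more transparent and is closer to the standard scheme. Two details in your sketch need care when written out: the rescaling of near-optimizers must be an amplitude scaling $V\rightsquigarrow\la V$ with $\la\geq1$ (a genuine spatial dilation does not interact well with the fixed kernel $b$; it is $F\at{\la r}\geq\la^2 F\at{r}$, a consequence of \eqref{eqn:superlin}, that drives the estimate, and your citation of $f\at{\la r}\geq\la f\at{r}$ indicates you mean exactly this), and in the far-field quadratic estimate the cross term $2\int_{\abs{x}>M}\at{b\ast V}\at{b\ast W_n}\dint{x}$ must be absorbed using the smallness of $\nnorm{b\ast V}_{\fspaceL^2\at{\abs{x}>M}}$, uniformly in $n$, before letting $M\to\infty$ after $n\to\infty$.
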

\begin{proof} \emph{\ul{Preliminaries}}\,:
There exists a (not relabeled) subsequence such that
\begin{align}
\label{Eqn:WeakConv}
V_n \quad\xrightarrow{\;n\to\infty\;}\quad V_\infty \quad\text{weakly in }\quad \fspaceL^2(\Rset)
\end{align}
for some $V_\infty $ which belongs to the cone $\calC$ (which is convex and closed) and satisfies $\|V_\infty\|_2^2 \leq 2K$. We aim to show
\begin{align}\label{eq:normconv}
  \|V_\infty\|_2^2 \geq 2K,
\end{align}
because this \BMHR estimate \EMHR implies in combination with the weak convergence the desired strong convergence in the Hilbert space $\fspaceL^2(\Rset)$. To this end we fix $\eps>0$ arbitrarily and consider for a given cut-off parameter $0<M<\infty$ the modified functionals
\begin{align*}
\widetilde{\calP}\at{V}:=\int\limits_\Rset F\bat{\widetilde{b}\ast\BMHR {V}\EMHR}\dint{x}\,,\qquad 
\widetilde{\calQ}\at{V}:=\frac{f^\prime\at{0}}{2}\int\limits_\Rset \bat{\widetilde{b}\ast\BMHR {V}\EMHR}^2\dint{x}
\end{align*}
with
\begin{align*}
\tilde{b}\at{x} & := b\at{x}\chi_{[-M,M]}\at{x}
\end{align*}
and $\chi_I$ denoting the indicator function of the interval $I$. We further split (for both finite and infinite $n$) the maximizing sequence \BMHR and its weak limit \EMHR according to
\begin{align}
\label{eqn:CutOff}
   \widetilde{V}_n\at{x}:= V_n \at{x}\chi_{\ccinterval{-M^2}{M^2}}\at{x}\,,\qquad
    \ol{V}_n\at{x}  =V_n\at{x}- \widetilde{V}_n\at{x}
\end{align}
and observe that
\begin{align}\label{eq:pyth}
   \|\tilde{V}_n\|_2^2+  \|\bar{V}_n\|_2^2= \|{V}_n\|_2^2= 2K
\end{align}
holds by construction for $n\neq\infty$. \BMHR Notice that $\tilde{b}$ and $\tilde{V}$ are defined by different cut offs.\EMHR
\par
\emph{\ul{Approximation formulas}}\,: %
Using the positivity of $b$ and $V_n$, the properties of $f$, and estimates as in \eqref{eq:baV2} we establish the Lipschitz estimate
\begin{align*}
0\leq \calP\at{V_n}-\widetilde{\calP}\at{V_n}&\leq C \int\limits_\Rset \babs{\bat{b\ast V_n}\at{x} }\,\babs{\bat{b\ast V_n}\at{x} -\bat{\widetilde{b}\ast V_n}\at{x}} \dint{x}
\leq C\bnorm{b-\tilde{b}}_2\leq \eps
\end{align*}
provided that $M$ is chosen sufficiently large. We further know that $\tilde{b}\ast\widetilde{V}_n$ and $\tilde{b}\ast\ol{V}_n$ are supported in $\abs{x}\leq M^2+M$ and $\abs{x}\geq M^2-M$, respectively, and in combination with the pointwise estimates
\begin{align*}
0\leq \widetilde{b}\ast \widetilde{V}_n\leq \widetilde{b}\ast V_n\,,\qquad 0\leq \widetilde{b}\ast \ol{V}_n\leq \widetilde{b}\ast V_n
\end{align*}
we obtain
\begin{align*}
\abs{\widetilde{P}\bat{V_n}-
\widetilde{P}\bat{\widetilde{V}_n}-\widetilde{P}\bat{\ol{V}_n}}
\leq C\int\limits_{M^2-M}^{M^2+M} F\bat{\widetilde{b}\ast V_n }\dint{x}\,.
\end{align*}
Moreover, $\tilde{b}\ast V_n\in\calC$  implies the uniform tightness estimate
\begin{align*}
0\leq \bat{\widetilde{b}\ast V_n}\at{x}\leq
 \frac{\bnorm{\widetilde{b}\ast V_n}_2}{\sqrt{2\abs{x}}}
\leq  \frac{\bnorm{\widetilde{b}}_1\bnorm{V_n}_2}{\sqrt{2\abs{x}}}\leq\frac{C}{\sqrt{x}} \,,
\end{align*}
and choosing $M$ sufficiently large we find
\begin{align*}
C\int\limits_{M^2-M}^{M^2+M} F\bat{\widetilde{b}\ast V_n }\dint{x}\leq C \int\limits_{M^2-M}^{M^2+M}\frac{\dint{x}}{x}\leq C\ln\at{\frac{M^2+M}{M^2-M}}\leq \eps\,.
\end{align*}
By a similar argument we derive
\begin{align*}
\babs{\widetilde{P}\at{
\ol{V}_n}-\widetilde{Q}\at{
\ol{V}_n}}\leq \eps
\end{align*}
from Taylor expanding $F$ around $0$ and obtain in summary the estimate
\begin{align}
\label{Eqn:ToolEst1}
\abs{\calP\bat{V_n}-\widetilde{\calP}\bat{\widetilde{V}_n}-\widetilde{\calQ}\bat{\ol{V}_n} }\leq 3\,\eps
\end{align}
along the chosen subsequence as well as for $n=\infty$. Finally,  the estimate
\begin{align}
\label{Eqn:ToolEst2}
\babs{\widetilde{Q}\bat{\ol{V}_\infty}}\leq \eps
\end{align}
can be guaranteed by enlarging $M$ if necessary.
\par%
\emph{\ul{Scaling argument and limit}}\;:
The weak convergence \eqref{Eqn:WeakConv} implies that
\begin{align*}
\widetilde{b}\ast \widetilde{V}_n
 \quad\xrightarrow{\;n\to\infty\;}\quad\widetilde{b}\ast \widetilde{V}_\infty
\end{align*}
holds pointwise, but since  all functions $\widetilde{b}\ast \widetilde{V}_n $ are supported in $\ccinterval{-M^2-M}{M^2+M}$ as well as uniformly bounded by $\bnorm{\widetilde{b}}_2\sqrt{2K}$ this convergence also holds strongly in \BMHR $\fspaceL^2\at{\Rset}$. \EMHR For all sufficiently large $n$, we therefore have
\begin{align*}
\widetilde{P}\bat{\widetilde{V}_n}\leq \widetilde{P}\bat{\widetilde{V}_\infty}+\eps
\end{align*}
and by construction we can further assume that
\begin{align*}
\calP\at{V_n}\geq P\at{K}-\eps\,.
\end{align*}
Combining the last two estimates with \eqref{Eqn:ToolEst1} (evaluated for both finite and infinite $n$) as well as \eqref{Eqn:ToolEst2} we show that
\begin{align*}
P\at{K}\leq \calP\bat{\widetilde{V}_\infty}+\calQ\bat{\ol{V}_n}+C\,\eps\,
\end{align*}
holds for $n$ large enough. \BMHR Since $\calP$ is superquadratic,  we further observe\EMHR
\begin{align*}
\calP(\widetilde{V}_\infty) = \frac{\|\widetilde{V}_\infty\|^2_2}{2K}\calP\left( \frac{\sqrt{2K}}{\|\widetilde{V}_\infty\|^2_2} \widetilde{V}_\infty  \right) \leq \frac{\|\widetilde{V}_\infty\|^2_2}{2K} P(K),
\end{align*}
while the homogeneity of $\calQ$ guarantees
\begin{align*}
\calQ(\ol{V}_n) \leq  \frac{\|\ol{V}_n\|^2_2}{2K} Q(K)\,.
\end{align*}
This gives
\begin{align}
\label{eq:PPestY}
P(K) \leq  \frac{\|\widetilde{V}_\infty\|^2_2}{2K} P(K) + \frac{\|\ol{V}_n\|^2_2}{2K} Q(K) + C\, \eps
\end{align}
and writing  $Q(K)= P(K)- (P(K)-Q(K))$ yields
\begin{align}
\label{eq:PPestX}
  P(K) +  \frac{\|\ol{V}_n\|^2_2}{2K} \bat{P(K)-Q(K)}\leq \frac{\|\widetilde{V}_\infty\|^2_2+\|\ol{V}_n\|^2_2}{2K} P(K) + C\,\eps\,.
\end{align}
We further have
\begin{align*}
\frac{\|\widetilde{V}_\infty\|^2_2+\|\ol{V}_n\|^2_2}{2K} \leq 1 +  \eps
\end{align*}
for all sufficiently large $n$ thanks to \eqref{eq:pyth} and because
\eqref{Eqn:WeakConv}+\eqref{eqn:CutOff} imply  the weak convergence $\widetilde{V}_n\to\widetilde{V}_\infty$ and hence $\nnorm{\widetilde{V}_\infty}_2\leq \liminf_{n\to\infty}\nnorm{\widetilde{V}_n}_2$. The right hand side in \eqref{eq:PPestX} can thus be  estimated from above by $P\at{K}+C\eps$, so rearranging terms and using Lemma \ref{Lem:SuperEstimate} we obtain
\begin{align*}
\|\ol{V}_n \|_2^2 \leq \frac{2\,K\,C\, \eps}{P(K)-Q(K)} \leq C\,\eps
\end{align*}
for all large $n$. Inserting this into \eqref{eq:PPestY} we get
\begin{align*}
 P(K) \leq \frac{\|\widetilde{V}_\infty\|^2_2}{2K} P(K) + C\,\eps\,,
\end{align*}
where the constant $C$ does not depend on $\eps$. \BMHR Finally, since \EMHR $\eps>0$ was arbitrary, \BMHR we verify \EMHR \eqref{eq:normconv} \BMHR thanks to $\|\widetilde{V}_\infty\|^2_2\leq \|V_\infty\|^2_2$ and $P\at{K}>0$. \EMHR
\end{proof}
\begin{corollary}[existence of solutions]\label{Cor:Existence}
For any $K>0$ there exists a solution $\pair{\si}{V}$ to \eqref{eq:nlEv} with
\begin{align}
\label{Cor:Existence.Eqn1}
\si \geq K^{-1} \BMHR P\at{K} >f^\prime\at{0}\,,\qquad \calP\at{V}=P\at{K}\EMHR
\end{align}
as well as  $V\in\calC_K$.
\end{corollary}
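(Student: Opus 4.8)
The plan is to combine the strong compactness from Proposition~\ref{lem:ex1} with the Direct Method and then identify the Lagrange multiplier from the Euler--Lagrange equation, using Proposition~\ref{prop:PT} to dispose of the shape constraint. First, take a maximizing sequence $\at{V_n}_{n\in\Nset}$ in $\calC_K$ with $\calP\at{V_n}\to P\at{K}$; by Proposition~\ref{lem:ex1} it has a subsequence converging strongly in $\fspaceL^2\at\Rset$ to some $V\in\calC$, and since $\calK$ is continuous with respect to strong convergence we get $\calK\at{V}=K$, i.e.\ $V\in\calC_K$. Strong $\fspaceL^2$-convergence together with the estimates in the proof of Proposition~\ref{prop:PT} (Young's inequality and the local Lipschitz bound on $f$ on the ball $\nnorm{b\ast W}_\infty\leq\nnorm{b}_2\sqrt{2K}$) shows $\calP$ is continuous along the sequence, so $\calP\at{V}=P\at{K}$ and $V$ is a maximizer. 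Note $P\at{K}>f^\prime\at{0}\,K>0$ by Lemma~\ref{Lem:SuperEstimate}, so $V\neq0$.

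Next I would apply the improvement operator to the maximizer. By Proposition~\ref{prop:PT} the set $\calC_K$ is invariant under $\calT$ and $\calP\bat{\calT\at{V}}\geq\calP\at{V}=P\at{K}$; but $P\at{K}$ is the supremum over $\calC_K$, so in fact $\calP\bat{\calT\at{V}}=\calP\at{V}$. The equality case in Proposition~\ref{prop:PT} then forces $V=\calT\at{V}$, which by definition of $\calT$ in \eqref{eq:defT} means $V=\mu\at{V}\,\partial\calP\at{V}$, i.e.\ $\partial\calP\at{V}=\mu\at{V}^{-1}V=\mu\at{V}^{-1}\partial\calK\at{V}$. Thus $\pair{\si}{V}$ with $\si\deq\mu\at{V}^{-1}$ solves the Euler--Lagrange equation \eqref{eq:EUL}, which is equivalent to the nonlinear eigenvalue problem \eqref{eq:nlEv}. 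The point here is that the shape constraint defining $\calC$ never produces a multiplier: $\calT$ respects $\calC$ by Lemma~\ref{Lem:InvProps}, so maximizing over the cone automatically yields a genuine (unconstrained-in-shape) critical point.

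Finally I would pin down the lower bound on $\si$ in \eqref{Cor:Existence.Eqn1}. Pairing the eigenvalue equation \eqref{eq:EUL} with $V$ itself gives $\si\,\skp{\partial\calK\at{V}}{V}=\skp{\partial\calP\at{V}}{V}$, and since $\skp{\partial\calK\at{V}}{V}=\norm{V}_2^2=2K$ while convexity of $\calP$ with $\calP\at{0}=0$ yields $\skp{\partial\calP\at{V}}{V}\geq\calP\at{V}-\calP\at{0}=P\at{K}$ (from \eqref{eq:convP} with roles of $V$ and $W$ swapped, $W=0$), we obtain $2K\si\geq P\at{K}$; actually using the sharper superquadraticity $f^\prime\at{r}r\geq2F\at{r}$ from \eqref{eqn:superlin} one gets $\skp{\partial\calP\at{V}}{V}=\int f\at{b\ast V}\,\at{b\ast V}\,\dint x\geq2\int F\at{b\ast V}\,\dint x=2\calP\at{V}=2P\at{K}$, hence $\si\geq K^{-1}P\at{K}$, and then $\si>f^\prime\at{0}$ follows from Lemma~\ref{Lem:SuperEstimate}. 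The one-parameter family is then the collection of these solutions indexed by $K>0$.

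I expect the genuinely delicate step to be already behind us --- it is the strong compactness in Proposition~\ref{lem:ex1}, whose proof rests on the superquadraticity estimate of Lemma~\ref{Lem:SuperEstimate}. Given that, the remaining obstacle in this corollary is purely bookkeeping: verifying that $\calP$ is sequentially continuous along the strongly convergent subsequence (which needs the uniform $\fspaceL^\infty$-bound on $b\ast V_n$ and the local Lipschitz continuity of $f$), and being careful that the inequality in \eqref{eq:PTP} must be an equality at a maximizer so that the equality case applies. No further analytic difficulty arises.
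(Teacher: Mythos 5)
Your argument is correct and follows essentially the same route as the paper: Direct Method via Proposition~\ref{lem:ex1}, the equality case of \eqref{eq:PTP} in Proposition~\ref{prop:PT} to get the Euler--Lagrange equation with $\si=1/\mu\at{V}$, and then testing \eqref{eq:EUL} with $V$ together with the superquadraticity estimate $f\at{r}r\geq 2F\at{r}$ and Lemma~\ref{Lem:SuperEstimate} to obtain $\si\geq K^{-1}P\at{K}>f^\prime\at{0}$. The only difference is that you spell out the continuity of $\calK$ and $\calP$ along the strongly convergent subsequence, which the paper leaves implicit.
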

\begin{proof}
According to Proposition \ref{lem:ex1}, there exists a maximizer $V$ of $\calP$ in $\calC_K$ which can be constructed  as an accumulation point of a maximizing sequence. For any maximizer we have $\calP\bat{\calT\at{V}}=\calP\bat{V}$, so Lemma \eqref{prop:PT} ensures the validity of the Euler-Lagrange equation \eqref{eq:EUL} with multiplier $\si=1/\mu\at{V}$. Finally, testing \eqref{eq:EUL} with $V$ and using the symmetry of $b$ as well as \eqref{eqn:superlin} we get
\begin{align*}
2\,\si\,K = \bskp{f\bat{b\ast V}}{b\ast V}\geq 2\,\calP\at{V}=2\,P\at{K}
\end{align*}
and obtain in combination with Lemma \ref{Lem:SuperEstimate}  the lower bound for $\si$.
\end{proof}
We emphasize that we have no uniqueness result for the solutions provided by Corollary \ref{Cor:Existence}, neither for the maximizer $V$ nor the multiplier $\si$. We  also cannot exclude the existence of further solutions corresponding to saddle points of the functional $\calP$ restricted to $\calC_K$. However, numerical simulations as discussed below indicate that there exists a unique maximizer for a huge class of convolution kernels and superlinear nonlinearities.
%
%
%
%----------------------------------------------------------------
\subsection{Periodic solutions and numerical computation}
\label{sect:Approx}
%----------------------------------------------------------------
%
%
The variational existence proof can be generalized to periodic waves. In fact, one easily introduces the analogues of $\calP$, $\calK$, $\calC_K$, and $\calT$  in the space of all functions that are square integrable on the perodicity cell $\ocinterval{-L}{+L}$ and the results in Lemma \ref{Lem:InvProps} and Proposition \ref{prop:PT} can be proven along the same lines. The compactness argument in the proof of Proposition \ref{lem:ex1}  even simplifies since convolution operators are compact in a periodic setting and map weakly convergent sequences (which always exists due to the norm constraint) into strongly convergent one. For small values of $L$ and $K$ it might happen that the unimodal maximizer is a constant function but if $L$ or $K$ are sufficiently large, the strict superquadraticity of $F$ favors the localization of maximizers. Moreover, periodic maximizers converge as $L\to\infty$ to solitary solutions of \eqref{eq:nlEv}.  We refer to \cite{HM19a} for a similar discussion in the context of  atomic chains and to \cite{Wei99} for the general phenomenon of localization thresholds.
\bigpar
The improvement operator \eqref{eq:defT} can be iterated in the following approximation scheme with parameter $K>0$:
\begin{align}
\label{Eqn:Scheme}
\text{Guess $U_0\in\calC_k$ and compute $U_j$ recursively via $U_{j}=\calT\at{U_{j-1}}$ for $j\in\Nset$.}
\end{align}
The estimate \eqref{eq:PTP} ensures that $\calP$ increases along the resulting sequence $\at{U_{j}}_{j\in\Nset}\subset\calC_K$ and hence that $\calP\at{U_j}$ converges as $j\to\infty$ to a well-defined limit. Exploiting the arguments in the proofs of Propositions \ref{prop:PT} and \ref{lem:ex1} we can also show that any accumulation point must be  a solution to the nonlinear eigenvalue problem \eqref{eq:nlEv},  but due to the lack of uniqueness results we are not able to conclude that accumulation points are unique and independent of $U_0$. We further mention that variants of the improvement dynamics have been introduced in \cite{FV99,EP05,Her10}.
\par
\begin{figure}[t!] %
\centering{%
\includegraphics[width=.95\textwidth]{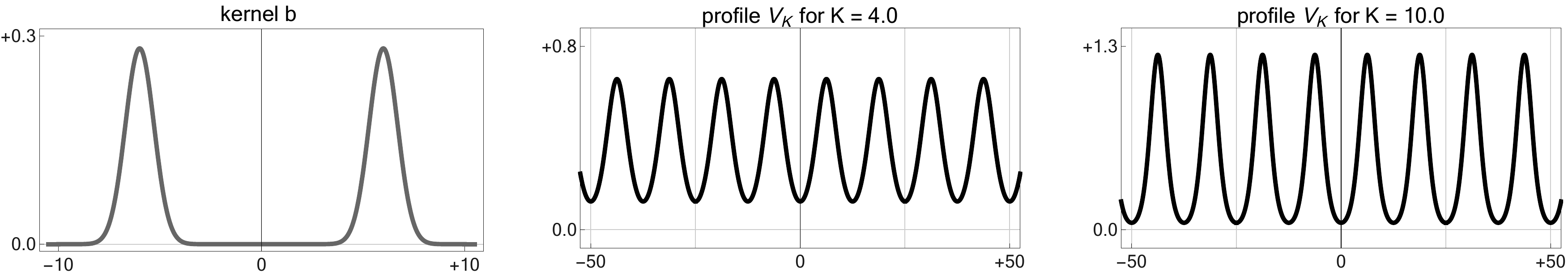}%
\\\smallskip
\includegraphics[width=.95\textwidth]{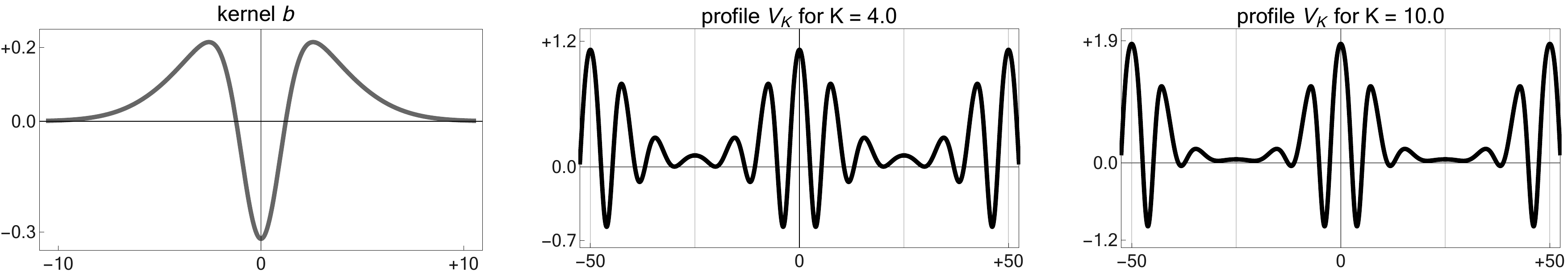}%
}%
\caption{\BMHR 
Numerical solutions with superlinear function $f\at{r}=\exp\at{r}$ and periodicity length $2L=50$ for two nonunimodal convolution kernels $b$. \emph{Top}. The kernel is still nonnegative but the numerical eigenfunctions produced by the improvement dynamics are neither localized nor unimodal with respect to the prescribed periodicity cell. \emph{Bottom}.  For a sign-changing kernel, the numerical eigenfunctions are
no longer nonnegative and exhibit a rather complicated shape. \EMHR}%
\label{fig:other2}%
\end{figure}%
\BMHR The time-discrete \EMHR improvement dynamics in \eqref{Eqn:Scheme} is also useful for computational issues and can easily be complemented by a spatial discretization:
\begin{enumerate}
\item
Choose a large length parameters $L<\infty$ as well as a small discretization parameter $\eps>0$ and replace functions $V\in\fspaceL^2\at\Rset$ by discrete and $2L$-periodic functions on the lattice $\eps\Zset$.
\item
Approximate all integrals in the definition of the improvement operator $\calT$ by Riemann sums with respect to $x\in\ocinterval{-L}{L}\cap\eps\Zset$
\end{enumerate}
The resulting numerical scheme exhibits good and robust convergence properties in practice and was used to produce the data presented in  Figure \ref{fig:waves}. In particular, numerical simulations performed with different choices of $b$ and  strictly superlinear $f$ indicate the validity of the following hypothesis.
\begin{conjecture}[uniqueness of unimodal maximizers]
\label{Conj:Uni}
For any kernel $b$ as in Assumption \ref{Ass:MainKernel} and any nonlinearity as in Assumption \ref{Ass:MainNonl} there exists  a unique maximizer of $\calP$ in $\calC_K$, which is moreover a global attractor for the improvement dynamics \eqref{Eqn:Scheme}.%
\end{conjecture}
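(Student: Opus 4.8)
The plan is to treat the two assertions of Conjecture~\ref{Conj:Uni} separately: first the uniqueness of the maximizer of $\calP$ in $\calC_K$, then its global stability for the improvement dynamics \eqref{Eqn:Scheme}, since the dynamical statement becomes accessible once uniqueness is in hand.

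I would begin with the dynamical part, assuming uniqueness already proven. By \eqref{eq:PTP} the sequence $\calP(U_j)$ is nondecreasing, and if $U_0$ is chosen with $\calP(U_0)>Q(K)=f^\prime\at{0}\,K$ (possible by Lemma~\ref{Lem:SuperEstimate}), then $\calP(U_j)>Q(K)$ for all $j$, so the concentration-compactness argument underlying Proposition~\ref{lem:ex1} shows that $(U_j)$ is precompact in $\fspaceL^2\at\Rset$. The quantitative gap estimate from the proof of Proposition~\ref{prop:PT}, namely $\calP\bat{\calT\at V}-\calP\at V\geq \nnorm{\calT\at V-V}_2^2/(2\mu\at V)$, together with an a priori upper bound on $\sigma=1/\mu$ along the orbit (from \eqref{eq:baV2} and the norm constraint), forces $\sum_j\nnorm{U_{j+1}-U_j}_2^2<\infty$, hence $\nnorm{U_{j+1}-U_j}_2\to0$; the $\omega$-limit set is therefore a nonempty, compact, connected set of fixed points of $\calT$ in $\calC_K$. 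It would then suffice to establish (i) that fixed points of $\calT$ in $\calC_K$ are isolated, so that the connected $\omega$-limit set reduces to a single point, and (ii) that the cone- and order-preserving dynamics \eqref{Eqn:Scheme} (recall $\calT$ is built from the monotone maps of Lemma~\ref{Lem:InvProps}) cannot converge to a fixed point which is not a maximizer. Part~(i) I would extract from a nondegeneracy analysis of the second variation of $\calP-\sigma\calK$ at a maximizer --- showing that the linearized operator $W\mapsto \sigma W-b\ast\bat{f^\prime\at{b\ast V}\,(b\ast W)}$ is invertible on the relevant subspace of $\calC$ --- and part~(ii) from showing that every non-maximizing fixed point is a strict saddle of $\calP$ restricted to $\calC_K$ whose stable set the monotone improvement dynamics avoids.

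For uniqueness I would first record that every maximizer $V$ is strictly positive on all of $\Rset$: the Euler--Lagrange equation \eqref{eq:EUL} gives $\supp(\sigma V)=\supp(b)+\supp(b)+\supp\bat{V}\supsetneq\supp\bat{V}$ unless $\supp\bat V=\Rset$ (here one uses $f(t)>0$ for $t>0$), and then unimodality forces $V>0$ everywhere, with $V$ even, strictly decreasing on $\oointerval{0}{\infty}$, and decay governed by \S\ref{sect:decay}. The natural attack is then a nonlinear Krein--Rutman / Perron--Frobenius argument for the order-preserving map $G\at V:=b\ast f\at{b\ast V}=\partial\calP\at V$ on the cone $\calC$: convolution with $b$ spreads supports and $f$ is strictly increasing, so $G$ is strongly monotone, while strict superlinearity gives $f\at{ts}>t\,f\at s$ for $t>1$, $s>0$. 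Given two maximizers $V_1,V_2$ with multipliers $\sigma_1,\sigma_2$, one would compare them along rays through $t_\ast:=\sup\{t>0:\,t\,V_2\leq V_1\}$ and use $G\at{t_\ast V_2}=b\ast f\at{t_\ast\,b\ast V_2}$ together with the superlinearity inequality to contradict the maximality of $t_\ast$ unless $V_1=V_2$ and $\sigma_1=\sigma_2$.

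Here lies the main obstacle, and the reason the statement is only conjectural. The ray comparison requires $t_\ast>0$, i.e.\ $V_1/V_2$ bounded, hence comparable decay of the two maximizers; but by \S\ref{sect:decay} the decay rate is tied to the a priori unknown, possibly distinct multipliers $\sigma_i$, so comparability cannot be established without essentially assuming $\sigma_1=\sigma_2$ already. More seriously, the monotonicity has the wrong sign: since $V_1$ and $V_2$ share the same $\fspaceL^2$-norm, one cannot have $V_2\leq V_1$ unless $V_1=V_2$, so necessarily $t_\ast<1$, and for $t<1$ the inequality $f\at{ts}\leq t\,f\at s$ points the wrong way and produces no contradiction --- this is precisely the obstruction to Krein--Rutman-type uniqueness for superlinear (as opposed to sublinear) problems, familiar from ground-state uniqueness for Choquard or nonlinear Schr\"odinger equations. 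A viable substitute would be a Nehari-manifold plus symmetric-rearrangement argument combined with a spectral-gap (nondegeneracy) statement for the linearized nonlocal operator, but the requisite nonlocal Sturm--Liouville theory is, as discussed in \S\ref{sect:discussion}, not available. Partial progress seems realistic only in the two asymptotic regimes: near $\sigma\to f^\prime\at0$ one could try to transfer the known uniqueness of the KdV soliton (Theorem~\ref{prop:kdv}) back to the pre-limit problem by an implicit-function argument, and similarly for $\sigma\to\infty$ using Theorem~\ref{thm:coarse}, leaving the intermediate range of $K$ (equivalently of $\sigma$) open.
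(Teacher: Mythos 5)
The statement you were asked about is a \emph{conjecture}: the paper offers no proof of it, only numerical evidence produced by the discretized improvement scheme \eqref{Eqn:Scheme}, and it explicitly lists in \S\ref{sect:discussion} the same obstructions you identify --- known nonlinear Krein--Rutman generalizations apply only to $1$-homogeneous nonlinearities, a nonlocal Sturm--Liouville/nondegeneracy theory for the linearization $W\mapsto \si W-b\ast\bat{f^\prime\at{b\ast V}\,\at{b\ast W}}$ is unavailable, and uniqueness is plausibly accessible only in the asymptotic regimes of \S\ref{sect:scaling} via perturbation arguments. So your write-up is correct in its central judgement: you do not (and could not, with the paper's tools) close the argument, and your diagnosis of why the ray-comparison fails for superlinear $f$ --- the constraint $\norm{V_1}_2=\norm{V_2}_2$ forces $t_\ast<1$, where the superlinearity inequality points the wrong way, and comparability of decay rates is tied to the a priori unknown multipliers $\si_i$ --- is exactly the kind of obstruction the authors have in mind.

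One caveat on the conditional dynamical part of your sketch: precompactness of the orbit $\at{U_j}_{j\in\Nset}$ does not follow directly from Proposition \ref{lem:ex1}, which is stated and proved only for \emph{maximizing} sequences, i.e. $\calP\at{V_n}\to P\at{K}$; a sequence with $\calP\at{U_j}$ merely bounded below by a value exceeding $Q\at{K}$ is not covered, so even the implication ``uniqueness of the maximizer $\Rightarrow$ global attraction'' would require a strengthened concentration-compactness step (or a separate argument excluding dichotomy and vanishing along the orbit). Likewise, your step (i) (isolatedness of fixed points of $\calT$ in $\calC_K$) presupposes precisely the nondegeneracy of the linearized operator that the paper flags as open, so the two halves of the conjecture are not independent in the way the first paragraph of your proposal suggests.
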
 %
\begin{figure}[t!] %
\centering{%
\includegraphics[width=.95\textwidth]{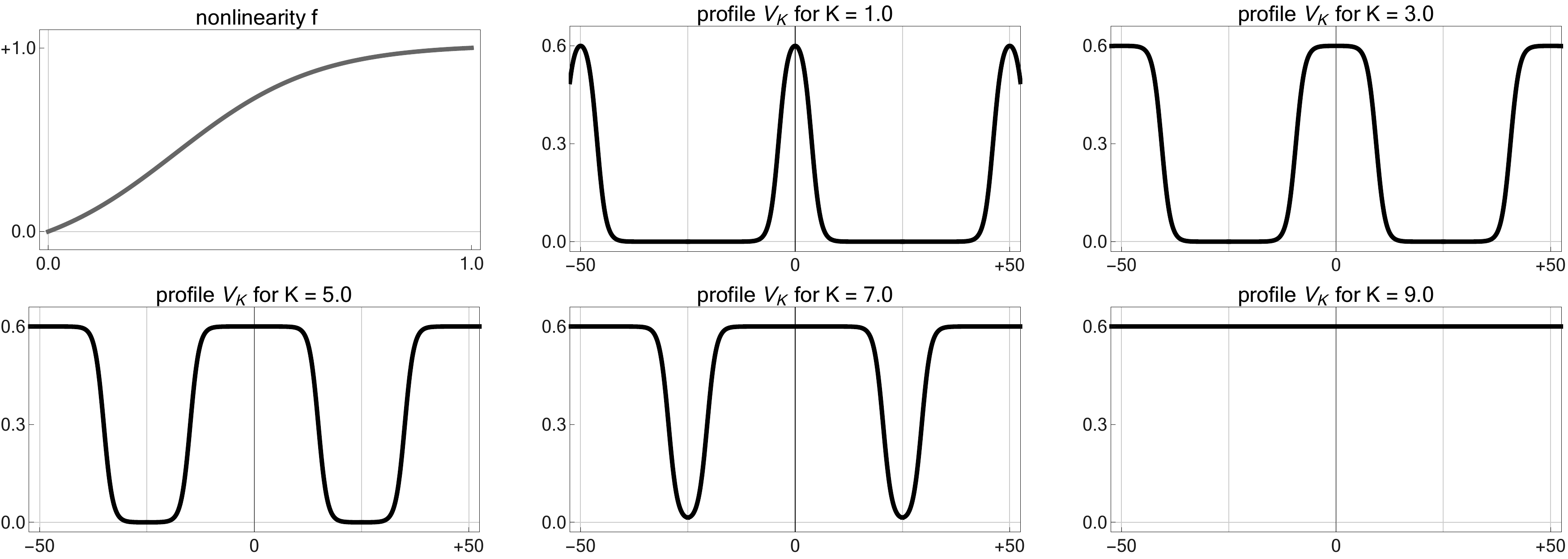}%
}%
\caption{\BMHR Periodic maximizers for a convex-concave nonlinearity $f$ with $2L=50$ and Gaussian kernel $b\at{x}=\exp\at{-x^2}/\pi$. \emph{Interpretation.}   $\at{i} $ For any $L<\infty$,
we expect to find a delocalization threshold $K_L<\infty$. $\at{ii}$ There still exists the family $\pair{\si_K}{V_K}_{K>0}$ of solitary solutions to \eqref{eq:nlEv} but the unimodal eigenfunction $V_K\in\calC_K$ exhibits for large $K$ a huge plateau whose height is basically independent of $K$. \EMHR}%
\label{fig:other1}%
\end{figure}%
\paragraph{Other classes of kernels and nonlinearities}
\BMHR The discretized improvement dynamics can also be started for convolution kernels or nonlinearities that do not meet the requierements in Assumptions \ref{Ass:MainKernel} or \ref{Ass:MainNonl}, respectively. For instance, Figure \ref{fig:other2} shows numerically computed eigenfunctions for two nonunimodal kernels $b$ but unimodal initial data. In the first example, $b$ is the sum of two localized but nonnegative peaks which are sufficiently narrow and separated. Our simulations suggest that the periodic variant of scheme \eqref{Eqn:Scheme} converges to a limit profile but the corresponding eigenfunction has a smaller periodicity length. In particular, we expect that there still exist local and global maximizers of the functional $\calP$ subject to the $\fspaceL^2$-norm constraint in the periodic setting, but there is probably no analogue to Conjecture \ref{Conj:Uni}. Moreover, the lack of localization indicates that solitary eigenfunctions might not exist in this case. The second example in Figure \ref{fig:other2} concerns  convolution kernels with real roots. In this case, the numerically computed eigenfunctions are no longer unimodal but still localized with oscillatory tails and this hints at the existence of solitary eigenfunctions with more general shape. In summary, it remains a challenging task to study the eigenvalue problem \eqref{eq:nlEv} with superlinear functions $f$ but more general kernels $b$.
\par
The simulations in Figure \ref{fig:other1} are performed with a unimodal kernel but a convex-concave function $f$ which switches from superlinear to sublinear growth. For small values of $K$, we still find unimodal and strongly localized eigenfunctions $V_K\in\calC_K$ as predicted in \S\ref{sect:Existence}. For larger values of $K$, however, the concave branch of $f$ implies strong delocalization effects. In the periodic setting $L<\infty$, the numerical data indicate that the maximizer of $\calP$ in $\calC_K$ is constant for large values of $K$. The consistent expectation for the limit $K\to\infty$ in the solitary case $L=\infty$ are unimodal eigenfunctions with huge plateaus and rapidly decaying transition layers, where the height of the plateau is asymptotically constant so that the plateau width scales with $\sqrt{K}$. Finally, for monotone but globally sublinear or even non-monotone functions $f$ we expect a much stronger impact of the aforementioned delocalisation effects and the existence of localised eigenfunctions is basically open, although there exists some preliminary results for  FPUT chains, see the references at the end of \S\ref{sect:discussion}.
\EMHR

%
%
%
%
%----------------------------------------------------------------
\subsection{Decay estimates}
\label{sect:decay} %
%----------------------------------------------------------------
%
%
In this section we characterize the spatial decay of solutions $\pair{\si}{U}$ to the nonlinear eigenvalue problem \eqref{eq:nlEv2}. Our result does not rely on variational arguments, so it applies to the family from Corollary \ref{Cor:Existence} but also to solutions provided by any other method. A similar result has recently been derived in \BMHR \cite{Pan19b} \EMHR for atomic chains with nonlocal interactions. The corresponding proof relies on (abstract) spectral theory and can be adapted to the nonlinear eigenvalue problem \eqref{eq:nlEv} provided that the kernel $b$ decays exponentially fast. Our method is both more elementary and more general but exploits similar ideas and concepts. In particular, both approaches require the nonlinear eigenvalue $\si$ to be larger than the essential spectrum of the linearized operator, and this property follows in our context from Assumption \ref{Ass:MainKernel} and the lower bound for $\si$ in \eqref{Cor:Existence.Eqn1}.
\par
We first study a linear auxiliary operator and \BMHR investigate \EMHR  the nonlinear problem \eqref{eq:nlEv2} afterwards.
\begin{lemma}[auxiliary result]\label{lem:decay1}
For any $0 < c< 1=\widehat{b}\at{0}$, the equation
\begin{align}
\notag%\label{eq:decaylem1}
W- c \,b \ast b \ast W = b \ast b \ast G
\end{align}
defines an linear and bounded operator $G \mapsto W=: A_c \,G$, which maps  $\fspaceL^2(\Rset)$ into itself. Moreover, this operator can be written as
\begin{align}
\label{lem:decay1.eqn1}
 A_c\, G= a_c\ast G\,,\qquad   a_c :=\sum_{m=0}^{\infty} c^m \underbrace{(b\ast b) \ast \ldots \ast (b\ast b)}_{\text{$m+1$ times }}
\end{align}
and preserves the unimodality, \BMHR nonnegativeness, \EMHR and evenness of functions.
\end{lemma}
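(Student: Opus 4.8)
The plan is to work on the Fourier side and then transfer the structure back to physical space. First I would take the Fourier transform of the defining equation $W - c\,b\ast b\ast W = b\ast b\ast G$, which under the convolution theorem becomes $\widehat{W}(k)\bigl(1 - c\,\widehat{b}(k)^2\bigr) = \widehat{b}(k)^2\,\widehat{G}(k)$. Since $b$ is nonnegative with $\int b = 1$, we have $|\widehat{b}(k)| \leq \widehat{b}(0) = 1$ for all $k$, and because $0 < c < 1$ this gives $1 - c\,\widehat{b}(k)^2 \geq 1 - c > 0$ uniformly in $k$. Hence the symbol $\widehat{a}_c(k) := \widehat{b}(k)^2 / (1 - c\,\widehat{b}(k)^2)$ is well-defined, bounded (by $1/(1-c)$ since $\widehat{b}^2 \le 1$), and the map $\widehat{G} \mapsto \widehat{W} = \widehat{a}_c\,\widehat{G}$ is a bounded multiplier operator on $L^2$; by Plancherel \eqref{Eqn:Plancherel} this shows $A_c$ is a well-defined bounded linear operator on $\fspaceL^2(\Rset)$ with $\|A_c\| \leq 1/(1-c)$.

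Next I would justify the Neumann-series representation \eqref{lem:decay1.eqn1}. Since $c\,\widehat{b}(k)^2 \le c < 1$ uniformly, the geometric series $\sum_{m=0}^\infty c^m\,\widehat{b}(k)^{2(m+1)}$ converges and equals $\widehat{a}_c(k)$ pointwise, with the partial sums dominated by the summable majorant $\sum_m c^m$. The $m$-th term is the Fourier transform of $c^m$ times the $(m{+}1)$-fold convolution power of $b\ast b$, and since $\|b\ast b\|_1 = \|b\|_1^2 = 1$ by Young's inequality and nonnegativity, each convolution power has $\fspaceL^1$-norm one, so the series $\sum_m c^m (b\ast b)^{\ast(m+1)}$ converges absolutely in $\fspaceL^1(\Rset)$ to a nonnegative function $a_c$ with $\|a_c\|_1 = 1/(1-c)$. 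Its Fourier transform is $\widehat{a}_c$ by dominated convergence, so $A_c G = a_c \ast G$ as claimed (and the operator is also bounded $\fspaceL^2\to\fspaceL^2$ directly by Young's inequality via $\|a_c\|_1$).

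For the shape-preservation statement, I would argue that each convolution power $(b\ast b)^{\ast(m+1)}$ lies in the cone $\calC$. Indeed $b \in \calC$ by Assumption \ref{Ass:MainKernel}, and Lemma \ref{Lem:InvProps} shows that convolution with a kernel in $\calC$ maps $\calC$ into itself; iterating, $b\ast b \in \calC$ and then $(b\ast b)^{\ast(m+1)} \in \calC$ for every $m$. Since $\calC$ is a convex cone that is closed under strong $\fspaceL^2$-convergence, and the partial sums of $\sum_m c^m (b\ast b)^{\ast(m+1)}$ are nonnegative combinations of elements of $\calC$ converging in $\fspaceL^1$ — hence, after passing to a subsequence and using the uniform $\fspaceL^2$-bound from $\|b\ast b\|_2 \le \|b\|_2\|b\|_1$ applied carefully to the tail, also in $\fspaceL^2$ on compact sets — the limit $a_c$ belongs to $\calC$. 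Finally, invoking Lemma \ref{Lem:InvProps} once more, convolution with $a_c \in \calC$ preserves unimodality, nonnegativity, and evenness.

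The main obstacle I anticipate is the last step: showing $a_c \in \calC$ rigorously, since $\calC$ as defined is closed under $\fspaceL^2$-convergence but the natural convergence of the Neumann series is in $\fspaceL^1$. One clean way around this is to avoid the limiting argument for membership in $\calC$ altogether and instead verify the three shape properties directly for $a_c$ from its $\fspaceL^1$-limit representation — evenness and nonnegativity are immediate, and unimodality on $(0,\infty)$ can be checked for the partial sums (each in $\calC$, hence monotone on the half-line) and is preserved under pointwise a.e. limits of monotone functions. Alternatively, one observes that $\widehat{a}_c(k) = \widehat{b}(k)^2/(1 - c\widehat{b}(k)^2)$ is real, even, and — since $\widehat b$ need not be positive this route is more delicate — so the physical-space argument via monotone partial sums is the more robust choice.
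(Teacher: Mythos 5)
Your proposal is correct and follows essentially the paper's own route: boundedness of the symbol $\widehat{a}_c=\widehat{b}^2/(1-c\,\widehat{b}^2)$ via $\|\widehat b\|_\infty\le\|b\|_1=1$, the Neumann series for $a_c$, and Lemma \ref{Lem:InvProps} for the shape properties. The only soft spot --- your worry that the series converges merely in $\fspaceL^1$ while $\calC$ is only $\fspaceL^2$-closed --- is not a real obstacle: Young's inequality gives $\|(b\ast b)^{\ast(m+1)}\|_2\le\|b\ast b\|_2\,\|b\ast b\|_1^{m}\le\|b\|_2$, so the Neumann series converges absolutely in $\fspaceL^2$ (and, applied directly to $G\in\calC$, each term and hence each partial sum lies in the convex cone $\calC$ with $\|(b\ast b)^{\ast(m+1)}\ast G\|_2\le\|G\|_2$), whence $a_c\in\calC$ and $A_c\,G\in\calC$ follow from the $\fspaceL^2$-closedness of $\calC$ without the monotone-pointwise-limit detour or any extension of Lemma \ref{Lem:InvProps} to the kernel $a_c$.
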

\begin{proof}
Assumption \ref{Ass:MainKernel} combined with the Young estimates
\begin{align}
\label{lem:decay1.eqn2}
 \nnorm{b \ast b}_1= \nnorm{b}_1 ^2=1\,,\qquad \nnorm{\hat{b}}_\infty\leq \norm{b}_1=1
\end{align}
reveals that the function
\begin{align}\label{eq:decaylemFour}
\widehat{a}_c\at{k}:= \frac{\widehat{b}^2\at{k}}{1 - c \,\widehat{b}^2\at{k}}=\BMHR\widehat{b}^2\at{k}\sum_{m=0}^\infty \bat{c\, \widehat{b}^2\at{k}}^m\EMHR
\end{align}
is nonnegative and bounded. \BMHR Using Fourier transform we thus deduce \EMHR that the operator $A_c$ is \BMHR a well-defined \EMHR pseudo-differential operator with symbol function $\widehat{a}_c$. Formula \eqref{lem:decay1.eqn1} is just the Neumann representation of $A_c$ and implies  the claimed preservation properties thanks to Lemma \ref{Lem:InvProps}.
\end{proof}
\begin{corollary}
[general decay estimate for $U$]
\label{Corr:ExpDecay}%
Let $\pair{\si}{U}$ be a solution to \eqref{eq:nlEv2} and $c$ be a fixed constant with $\si^{-1}f^\prime\at{0}<c<1$. Then we have
\begin{align*}
0\leq U\at{x}\leq C_c \,a_c \at{x}
\end{align*}
for some constant $C_c$ depending on $c$ and all $x\in\Rset$, where $a_c\in\calC$
is defined in \eqref{lem:decay1.eqn1}.
\end{corollary}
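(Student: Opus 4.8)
The plan is to derive the pointwise bound by rewriting the eigenvalue equation in a form that makes the auxiliary operator $A_c$ from Lemma~\ref{lem:decay1} applicable, and then exploit the monotonicity (positivity-preserving) properties of $A_c$. First I would use the mean value form of the nonlinearity: since $f(0)=0$ and $f$ is $C^2$, we can write $f\at{U\at{x}}=g\at{x}\,U\at{x}$ with $g\at{x}:=\int_0^1 f^\prime\bat{t\,U\at{x}}\dint{t}$, so that $0\leq g\at{x}\leq f^\prime\bat{\nnorm{U}_\infty}$ and, more importantly, $g\at{x}\to f^\prime\at{0}$ as $\abs{x}\to\infty$ because $U$ decays (which follows from $U\in\fspaceL^2$ together with $U=\si^{-1}a\ast f\at{U}\in\calC$, so $U$ is continuous, unimodal and tends to $0$). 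Hence there is a radius $\rho$ with $g\at{x}\leq \si\,c$ for $\abs{x}\geq\rho$. Splitting $\si\,U=a\ast f\at{U}=a\ast\bat{g\,U}$ into the contributions from $\abs{y}\leq\rho$ and $\abs{y}\geq\rho$, one gets
\begin{align*}
\si\,U - c\,a\ast U \leq a\ast h
\end{align*}
pointwise, where $h:=\si^{-1}\bat{g-\si\,c}_+\,U$ is supported in $\abs{y}\leq\rho$, nonnegative, bounded, and compactly supported.

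Next I would rescale to match Lemma~\ref{lem:decay1} exactly: dividing by $\si$ and recalling $a=b\ast b$, the inequality reads $U-c\,(b\ast b)\ast U\leq (b\ast b)\ast\bat{\si^{-1}h}$. Now comes the key monotonicity step. Let $W:=A_c\bat{\si^{-1}h}=a_c\ast\bat{\si^{-1}h}$, which by Lemma~\ref{lem:decay1} solves $W-c\,(b\ast b)\ast W=(b\ast b)\ast\bat{\si^{-1}h}$ with equality. Subtracting, the difference $Z:=W-U$ satisfies $Z-c\,(b\ast b)\ast Z\geq 0$, i.e. $Z\geq c\,(b\ast b)\ast Z$. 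Iterating this and using that $(b\ast b)\ast\,\cdot\,$ is a positivity-preserving contraction with $\nnorm{(b\ast b)\ast Z}_2\leq\nnorm{Z}_2$ and $c<1$, one shows $Z\geq 0$, hence $U\leq W=a_c\ast\bat{\si^{-1}h}$. Since $h$ is compactly supported in $\abs{y}\leq\rho$ and bounded, and $a_c$ is even and unimodal (decreasing in $\abs{x}$), we can finally dominate $\bat{a_c\ast(\si^{-1}h)}\at{x}\leq \si^{-1}\nnorm{h}_1\,\sup_{\abs{y}\leq\rho}a_c\at{x-y}\leq \si^{-1}\nnorm{h}_1\,a_c\bat{\abs{x}-\rho}$ for $\abs{x}\geq\rho$, and for $\abs{x}\leq\rho$ the bound $U\at{x}\leq U\at{0}\leq C\,a_c\at{0}/a_c\at{\rho}\cdot a_c\at{x}$ is trivial after adjusting the constant; both combine into $U\at{x}\leq C_c\,a_c\at{x}$ after absorbing $a_c\bat{\abs{x}-\rho}/a_c\at{x}$, which is bounded since $a_c$ has at most, say, exponential-type ratios over shifts of length $\rho$ — more carefully, one uses $a_c\at{x-y}\leq a_c\at{x/2}$ for $\abs{x}\geq 2\rho$ together with a sub-convolutive comparison $a_c\at{x/2}\leq C\,a_c\at{x}$, or one simply restates the conclusion with $a_c\bat{\,\cdot\,-\rho}$ and notes this is the same decay class.

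The main obstacle I expect is precisely this last comparison, i.e.\ turning $a_c\ast(\text{compactly supported})$ back into a bound by $C_c\,a_c$ itself; $a_c$ need not be log-concave or submultiplicative in general. The cleanest fix is to prove the auxiliary fact that $a_c\ast\delta_\rho^{\,\mathrm{sym}}\leq C_\rho\,a_c$ where $\delta_\rho^{\,\mathrm{sym}}$ is any probability measure on $\ccinterval{-\rho}{\rho}$, which for \emph{even unimodal} $a_c$ follows from $\bat{a_c\ast\nu}\at{x}\leq\int a_c\at{x-y}\dint{\nu\at{y}}\leq a_c\bat{(\abs{x}-\rho)_+}$ combined with the elementary observation that $\widehat{a}_c=\widehat{b}^2/(1-c\widehat{b}^2)$ and $\widehat{b}^2$ again a probability density's transform give $a_c=b\ast b\ast\mu_c$ with $\mu_c$ a finite positive measure, whence $a_c\bat{\abs{x}-\rho}\leq C_\rho\,a_c\at{x}$ reduces to the same statement for $b\ast b$, which is unimodal with the explicit moment control from Assumption~\ref{Ass:MainKernel}. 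A secondary, more routine obstacle is justifying the pointwise (rather than $\fspaceL^2$) manipulations — continuity and decay of $U$ and of $b\ast b\ast U$ — which follows from Young's inequality and Lemma~\ref{Lem:InvProps} exactly as in the proof of Proposition~\ref{prop:PT}.
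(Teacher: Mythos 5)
Your first two steps reproduce the paper's argument in different packaging. The paper rewrites \eqref{eq:nlEv2} as $U-c\,(b\ast b)\ast U=(b\ast b)\ast\bigl(\si^{-1}f(U)-c\,U\bigr)$, inverts with $A_c$ (an identity, since the symbol $1-c\,\widehat b^2$ is bounded away from zero), and then uses $a_c\geq0$ to get $U\leq a_c\ast \ol{U}$ with $\ol{U}:=\max\bigl(0,\si^{-1}f(U)-c\,U\bigr)$, which is compactly supported because $f(r)/r\to f'(0)<\si c$ as $r\to0$ (convexity) while the unimodal, square-integrable $U$ falls below the corresponding threshold outside a bounded interval. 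Your mean-value function $g$ and source $h$ are the same object, and your comparison-principle detour (set $W=A_c(\cdot)$, $Z=W-U$, iterate $Z\geq c^{n}(b\ast b)^{\ast n}\ast Z$ and let $n\to\infty$ in $\fspaceL^2$) is a correct, if longer, substitute for inverting first and then using positivity of the kernel; the extra factor $\si^{-1}$ in your definition of $h$ is only a harmless bookkeeping slip.

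The genuine gap is in your final step, and the fix you propose does not work. You want $a_c\ast h\leq C_c\,a_c$ for bounded $h\geq0$ supported in $[-\rho,\rho]$, and you propose to reduce the shift comparison $a_c(|x|-\rho)\leq C_\rho\,a_c(x)$ to the analogous statement for $b\ast b$ via $a_c=(b\ast b)\ast\mu_c$. That statement is false for $b\ast b$ in general: for the FPUT kernel $b=\chi_{[-1/2,1/2]}$ the tent map $b\ast b$ vanishes outside $[-1,1]$ while $(b\ast b)(|x|-\rho)>0$ for $1<|x|<1+\rho$, and for a Gaussian kernel the ratio $(b\ast b)(x-\rho)/(b\ast b)(x)$ blows up as $x\to\infty$; the moment bounds in Assumption \ref{Ass:MainKernel} give no pointwise lower bound on the tail of $b\ast b$ whatsoever. (Your fallback of restating the conclusion with $a_c(\cdot-\rho)$ changes the statement.) The correct comparison uses the Neumann series \eqref{lem:decay1.eqn1} itself rather than $b\ast b$ alone: $(b\ast b)^{\ast k}$ is even, unimodal and continuous, and its support grows linearly in $k$ because Assumption \ref{Ass:MainKernel} forces $b$ to be bounded below on an interval around the origin; hence one can choose $k$ with $\eps_k:=(b\ast b)^{\ast k}(\rho)>0$, so that $h\leq \|h\|_\infty\,\eps_k^{-1}(b\ast b)^{\ast k}$ pointwise, while the term-by-term estimate $c^{k}\,a_c\ast(b\ast b)^{\ast k}=\sum_{m\geq0}c^{m+k}(b\ast b)^{\ast(m+k+1)}\leq a_c$ yields $a_c\ast h\leq \|h\|_\infty\,\eps_k^{-1}c^{-k}\,a_c$. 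This sub-convolutivity with respect to the weighted convolution powers of $b\ast b$ is the ``elementary property of convolution integrals'' the paper's proof appeals to, and it exploits precisely the geometric structure of $a_c$ that your reduction to $b\ast b$ discards.
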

\begin{proof}
We rewrite \eqref{eq:nlEv2} as
\begin{align}
  \notag%\label{eq:prThdecay1}
    U= (b \ast b) \ast \bat{ c\, U + \si^{-1}f(U)- c\,U}.
\end{align}
and conclude from Lemma \ref{lem:decay1} that
\begin{align*}
  U=A_c\bat{ \si^{-1}f(U)- c\,U}\leq A_c \, \ol{U}=a_c\ast \ol{U}\,,
\end{align*}
where the function
\begin{align*}
\ol{U}(x) := \max\Bigl(0,\si^{-1}f(U(x))- c\,U(x)\Bigr)
\end{align*}
has compact support due to the \BMHR superlinearity \EMHR of $f$ in \eqref{ass:f1} and the unimodality of $U$. The claim now follows from elementary properties of convolution integrals.
\end{proof}
Corollary \ref{Corr:ExpDecay} ensures that the nonlinear eigenfunction $U=b\ast V$ decays as fast as the function $a_c$ and a similar statement holds for $V = b\ast f\at{U}/\si$ due to the smoothness of $f$ at the origin. However, the details of the decay depend on the choice of $c$ and on the properties the kernel $b$, especially on the regularity of its Fourier transform:
\begin{enumerate}
\item
Algebraic decay of $a_c$ can be deduced from the real differentiability of $\widehat{b}$. For instance, we have
\begin{align*}
\sup_{x\in\Rset} \babs{x^{m} a_c\at{x}}\leq C \int\limits_\Rset
\Babs{\frac{\dint }{\dint k^{m}}\frac{\widehat{b}^2\at{k}}{1-c\,\widehat{b}^2\at{k}}}\dint{k}
\leq C \nnorm{\widehat{b}}^2_{\fspaceH^m\at\Rset}
\end{align*}
thanks to \eqref{eq:decaylemFour} and the estimate in \eqref{lem:decay1.eqn2}.
\item
Paley-Wiener theory, see for instance \cite[section VI.7]{Kat04}, relates the exponential decay of $a_c$ to the existence of holomorphic extensions.  In particular, if $\widehat{b}$ can be extended to a holomorphic function on a strip around the real axis, then $\widehat{a}_c$ has the same property but the width of the maximal strip might depend on the choice of $c$ due to complex zeros of the function $1-c\,\widehat{a}$.
\item
The qualitative decay properties in Corollary \ref{Corr:ExpDecay} improve if $c$ approaches the lower bound $\si^{-1}f^\prime\at{0}$, but the constant $C_c$ might explode in this limit.
\end{enumerate}
We finally mention that the expected decay rate of $V$ can -- at least for sufficiently nice kernels $b$ --  be characterized heuristically as follows: The exponential ansatz $V\at{x}\approx C\exp\at{-\la x}$ for $x\to\infty$ implies
\begin{align*}
\bat{b\ast V}\at{x}\approx \at{C \int\limits_\Rset b\at{x}\exp\at{\la y}\dint{y}}\exp\at{-\la\,x }
\end{align*}
and linearizing \eqref{eq:nlEv} in the tail we find
\begin{align*}
M\at{\la}=
\frac{\si}{f^\prime\at{0}}\,,
\qquad
M\at{\la}:=
 \int\limits_\Rset a\at{y}\exp\at{\la y}\dint{y}=
\at{\int\limits_\Rset b\at{y}\exp\at{\la y}\dint{y}}^2
\end{align*}
as a transcendental equation for the decay rate $\la$. Notice that the function $M$  involves exponential moments of $a$, is strictly increasing for positive arguments, and might blow up at a finite value of $\la$.

%
%
%----------------------------------------------------------------
\section{Scaling Limits}
\label{sect:scaling} %
%----------------------------------------------------------------
%
%
%----------------------------------------------------------------
\subsection{KdV limit for small eigenvalues}
\label{sect:KdV} %
%----------------------------------------------------------------
%
%
In this section we discuss a first asymptotic scaling limit $\eps\to0$ for the solutions to the nonlinear eigenvalue problem \eqref{eq:nlEv}. In this regime, the eigenvalue \BMHR $\si$ \EMHR is slightly above the critical value \BMHR $\al=f^\prime\at{0}$ \EMHR via
\begin{align*}
\si-\al\sim \eps^2
\end{align*}
and the energetic terms $\calK\at{V}$ and $\calP\at{V}$ are both proportional to $\eps^3$. For FPUT chains, this asymptotic regime is usually called the KdV limit and regards traveling waves that propagate with near sonic speed and have small amplitudes but large wave \BMHR lengths. \EMHR The key idea is that the profile functions $U$ and $V$ converge after a suitable rescaling to the solitary \BMHR wave of \EMHR a certain KdV equation. The latter is the homoclinic solution to the planar Hamiltonian ODE
\begin{align}
\label{Eqn:KdVODE}
\ol{U}^{\prime\prime} =\ka_1 \,\ol{U}-\ka_2\,\ol{U}^2
\end{align}
where the positive coefficients $\ka_1$, $\ka_2$ depend on the constants $\al$ and $\beta$ from \eqref{ass:f1}, i.e., on the first two derivatives of $f$ at the origin.
\par
The relation between traveling waves in FPUT lattices and KdV equations was first observed in \cite{ZK65} and has later been made rigorous in \cite{FP99,FML15}. Generalizations to more complex atomic system can be found in \cite{HML16, HW17, ChH18, HM19a}. We also emphasize that the KdV equation does not only govern an asymptotic regime of lattice waves but is rather a universal modulation equation for a broad range of nonlinear dispersive systems. We refer to \cite{SW00,HW09,CCPS12,GMWZ14} in the context of initial value problems in Hamiltonian lattices and to \cite{Bri13,SU17} for an overview and related results for nonlinear PDEs.
%
%
%----------------------------------------------------------------
\paragraph{Heuristics}  %
%----------------------------------------------------------------
%
To illustrate the key asymptotic ideas we start with a formal derivation of the limit equation and assume that
\begin{align*}
  \sigma=\alpha+\delta^2
\end{align*}
holds for a small parameter $\delta>0$. Rescaling $U=b\ast{V}$ by
\begin{align*}
U\at{x}=\delta^2 \tilde{U}\at{\widetilde{x}}\,,\qquad \tilde{x}=\delta x
\end{align*}
we find  the formal expansion
\begin{align*}
\bat{a\ast U}\at{\delta^{-1}\tilde{x}}=\int\limits_\Rset a\at{y} \delta^2\, \tilde{U}^2\bat{\tilde{x}+\delta{y}}\dint{y}=\delta^2
\tilde{U}\at{\tilde{x}}+m\,\delta^4\, \tilde{U}^{\prime\prime}\at{\tilde{x}}+O\bat{\delta^6}
\end{align*}
with $ m :=\tfrac12 \int_\Rset y^2 a\at{y}\dint{y} $, where we used that $U$ and $a$ are even as well as $\int_\Rset a\at{y}=1$. Moreover, a similar formula holds for $\bat{a\ast U^2}\at{\delta^{-1}\tilde{x}}$ while \BMHR the Taylor expansion of $f$ around $0$ \EMHR yields
\begin{align*}
f\bat{U\nat{\delta^{-1}\tilde{x}}} = \al\,\delta^2 \tilde{U}\bat{\tilde{x}}+\be\,\delta^4\,\tilde{U}^2\at{\tilde{x}}+O\bat{\delta^4}\,.
\end{align*}
Inserting all asymptotic formulas into \eqref{Cor:Existence} and diving by $\delta^2$ we finally get
\begin{align*}
\tilde{U}=m\,
\tilde{U}^{\prime
\prime}+\beta\, \tilde{U}^2+O\bat{\delta^2}\,.
\end{align*}
This is in fact the ODE for the KdV wave and implies $\calP\at{V}\sim\delta^3$ as well as $\calK\at{V}\sim\delta^3$.
\par
Notice, however, that the correction terms in the above asymptotic expansion of the convolution integral involve higher derivatives of $\tilde{U}$. A rigorous justification of our arguments is hence not straight forward but requires a careful analysis of nonlinear fixed point problems built of singularly perturbed pseudo-differential operators. This has been done in \cite{FP99, HML16,HM19a} in the context of Hamiltonian lattice waves and the underlying arguments can also be applied to \eqref{eq:nlEv2} for a wide class of kernel functions $a$.  In this paper we present a different approach which is based on the variational existence theory from \S\ref{sect:Var}. In particular, the small scaling parameter is no longer defined via $\si$ but in terms of the norm constraint.
%
%
%----------------------------------------------------------------
\paragraph{Variational approach} %
%----------------------------------------------------------------
%
We consider a family of solutions $\triple{V_\eps}{U_\eps}{\si_\eps}$ with
\begin{align*}
V_\eps\in\calC\,,\qquad \calK\bat{V_\eps}=\eps^3\,,\qquad \calP\bat{V_\eps}=P\bat{\eps^3}\,,\qquad U_\eps=b\ast V_\eps
\end{align*}
as provided by Corollary \ref{Cor:Existence}, where  $0<\eps<1$  is the small parameter. Motivated by the heuristic considerations we introduce the rescaled profile functions $\ol{V}_\eps$ via
\begin{align}
\label{Eqn:KdVScaling}
V_\eps\at{x}=\eps^2\ol{V}_\eps\at{\ol{x}}\,,\qquad \ol{x}=\eps x
\end{align}
and we aim to show that $\ol{V}_\eps$ converge as $\eps\to0$ to a unique limit profile \BMHR $\ol{V}_0=\ol{U}_0$. \EMHR To this end we restate the eigenvalue problem \eqref{eq:nlEv} as
\begin{align}
\label{eq:nleveps}
\si_\eps \ol{V}_\eps=\ol{b}\ast\bat{\alpha\,\ol{b}\ast \ol{V}_\eps+\eps^2 \beta \,\ol{b}\ast \at{\ol{V}_\eps}^2+\eps^{-2}R\bat{\eps^2\, \ol{b}\ast \ol{V}_\eps}}
\end{align}
\BMHR with \EMHR %
\begin{align*}
\ol{b}\at{\ol{x}}:=\eps^{-1} b\at{\eps^{-1}\ol{x}}\,,\qquad
\widehat{\ol{b}}\at{\ol{k}}=\widehat{b}\at{\eps \ol k }\,,
\end{align*}
\BMHR where \EMHR $R$ defined by
\begin{align*}
R\at{r}=f\at{r}-\al\,r-\tfrac12\,\beta\,r^2
\end{align*}
represents the cubic and higher order nonlinearities. In consistency with \eqref{Eqn:KdVScaling} we also set
\begin{align*}
U_\eps\at{x}=\eps^2\ol{U}_\eps\at{\ol{x}}
\end{align*}
and find $\ol{U}_\eps=\ol{b}\ast\ol{V}_\eps$.
\bigpar
We next specify our \BMHR refined \EMHR assumptions on the kernel function \BMHR $b$ \EMHR and establish some elementary but useful a priori estimates.
\begin{assumption}[additional assumptions for KdV limit]\label{Ass:KdV}
Besides Assumption \ref{Ass:MainKernel} we suppose that $\widehat{b}$ is of class $\fspaceC^2$ in some open neighborhood of $k=0$ and that there exists a constant $ C>0$ such that
\begin{align}
\label{Ass:KdVKernel}
\babs{\hat{b}\at{k}}&\leq \sqrt{\frac{1}{1+C k^2}}\,,\qquad\qquad
|\hat{b}\at{k}|^2\geq 1-C k^2
\end{align}
holds for all  $\BMHR k \in \EMHR \Rset$.
\end{assumption}
\begin{lemma}[simple a priori estimates]\label{Lem:KdvEst}
We have
\begin{align}
\notag%\label{eq:lem:ap}
P(\eps^3) &\geq  \alpha\eps^3+c\,\eps^5\,,\qquad
\si_\eps \geq \alpha+c\,\eps^2\,.
\end{align}
for some constant $c>0$ independent of $\eps$.
\end{lemma}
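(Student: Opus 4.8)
The plan is to prove the two inequalities in Lemma~\ref{Lem:KdvEst} by exploiting the superquadraticity estimate from Lemma~\ref{Lem:SuperEstimate} together with the quantitative lower bound extracted from its proof. Recall that the proof of Lemma~\ref{Lem:SuperEstimate} actually gives more than the strict inequality $P(K)>Q(K)=\alpha K$: by tracking the test function $V_L$ one obtains, for every $L>1$, the estimate
\begin{align*}
P(K)\geq \calP(V_L)\geq \alpha\,K-\frac{C\,K}{L^{2/3}}+\frac{c\,K^{3/2}}{L^{1/2}}\,.
\end{align*}
The idea is to specialize $K=\eps^3$ and then optimize over $L$. With $K=\eps^3$ the cubic contribution scales like $c\,\eps^{9/2}L^{-1/2}$ and the error like $C\,\eps^3 L^{-2/3}$; choosing $L$ proportional to a suitable power of $\eps$ (for instance $L\sim\eps^{-3}$, which makes $V_L$ have width comparable to the natural KdV length scale) balances the terms and yields $P(\eps^3)\geq \alpha\eps^3+c\,\eps^5$ for $\eps$ small, which is the first claim. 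One should double-check the exponent bookkeeping: the cut-off corrections in the proof of Lemma~\ref{Lem:SuperEstimate} are of the form $(1-C/L^{2/3})$ raised to fixed powers, and the cubic term genuinely survives at order $\eps^5$ after the optimal choice, while the Gaussian-type moment error is lower order.

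For the second inequality I would feed the improved lower bound on $P(\eps^3)$ into the lower bound for $\si_\eps$ already established in Corollary~\ref{Cor:Existence}. There we have $\si_\eps\geq K^{-1}P(K)$ with $K=\eps^3$, so
\begin{align*}
\si_\eps\geq \eps^{-3}P(\eps^3)\geq \eps^{-3}\bigl(\alpha\eps^3+c\,\eps^5\bigr)=\alpha+c\,\eps^2\,,
\end{align*}
which is exactly the second claim. Thus the second estimate is an immediate corollary of the first once Corollary~\ref{Cor:Existence} is invoked; the only real content is the quantitative version of the superquadraticity bound.

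The main obstacle I anticipate is making the constant $c$ in the cubic term genuinely explicit and $\eps$-independent while keeping all the error terms under control after the substitution $K=\eps^3$ and the optimization in $L$. In particular one must verify that $\nnorm{U_L}_\infty$, the tightness estimate $U_L(L-L^{1/3})\geq\sqrt{K/L}(1-C/L^{2/3})$, and the Taylor remainder bound $F(r)-\tfrac12 f'(0)r^2\geq c\,r^3$ all remain valid uniformly once $r$ ranges over the (shrinking) interval $[0,\nnorm{U_L}_\infty]$ with $K=\eps^3$ and $L$ chosen $\eps$-dependently; since $\nnorm{U_L}_\infty\leq CL^{-1/2}\sqrt{K}\to0$, the local cubic lower bound on $F$ near the origin is exactly what is needed, and Assumption~\ref{Ass:MainNonl} (with $\beta=f''(0)>0$) guarantees it. A minor technical point is that the constant $C$ in the moment error depends on $\int_\Rset b(x)x^2\dint x$, which is finite by Assumption~\ref{Ass:MainKernel}, so no new hypotheses are required and Assumption~\ref{Ass:KdV} is not even needed for this lemma. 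Once these uniformities are checked, the proof is just the optimization in $L$ sketched above followed by the one-line deduction for $\si_\eps$.
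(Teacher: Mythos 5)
Your deduction of the second inequality from the first via Corollary \ref{Cor:Existence} is exactly the paper's step, but the proposed proof of the first inequality does not work: the indicator-type test functions $V_L$ from Lemma \ref{Lem:SuperEstimate} cannot produce a correction of order $\eps^5$. Take your own quantitative bound $P(\eps^3)\geq \alpha\eps^3-C\eps^3L^{-2/3}+c\,\eps^{9/2}L^{-1/2}$ and check the exponents: for your suggested choice $L\sim\eps^{-3}$ the loss term is of order $\eps^5$ while the cubic gain is only of order $\eps^6$, so the net correction is \emph{negative}. In fact no choice of $L$ helps: the gain dominates the loss only when $L\gtrsim\eps^{-9}$, and there the gain is merely $O(\eps^9)$; even if one sharpens the paper's tail estimate to the optimal Fourier-side deficit $2K-\nnorm{b\ast V_L}_2^2=O(K/L)$, the best balance gives $P(\eps^3)\geq\alpha\eps^3+c\,\eps^6$. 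The root cause is the jump of the indicator: its Fourier transform decays only like $1/k$, so the quadratic loss of a box of width $L$ scales like $K/L$ rather than $K/L^2$, and at the width $L\sim\eps^{-1}$ that the cubic term needs in order to reach size $\eps^5$ the loss is already $O(\eps^4)$. Since the rate $\si_\eps\geq\alpha+c\,\eps^2$ is precisely what is consumed later (in the majorant \eqref{Eqn:KdVTightness1} of Theorem \ref{prop:kdv}), a weaker exponent is not an acceptable substitute.

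The paper's proof fixes this by choosing a test function adapted to the KdV scaling rather than recycling $V_L$: take a fixed smooth, rapidly decaying $\ol{V}\in\calC$ with $\tfrac12\nnorm{\ol{V}}_2^2=1$ and set $\widetilde{V}_\eps\at{x}=\eps^2\ol{V}\at{\eps x}$, which lies in $\calC_{\eps^3}$ automatically. With amplitude $\eps^2$ and width $\eps^{-1}$, the lower bound $\widehat{b}\at{k}^2\geq1-Ck^2$ shows that the quadratic part loses only $\tfrac{\alpha}{2}C_1\eps^5\nnorm{\ol{V}'}_2^2$, while the superquadraticity of $F$ gains $\tfrac{c_2}{2}\eps^5\int\ol{V}^3$; both effects now sit at the same order $\eps^5$, and the dilation $\ol{V}\at{\ol{x}}\rightsquigarrow\la^{1/2}\ol{V}\at{\la\ol{x}}$ (which preserves the $\fspaceL^2$ normalization, scales $\nnorm{\ol{V}'}_2^2$ by $\la^2$ and $\int\ol{V}^3$ by $\la^{1/2}$) makes the gain beat the loss, yielding $P(\eps^3)\geq\alpha\eps^3+c\,\eps^5$. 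In other words, the finiteness of the $\fspaceH^1$-seminorm of the profile is essential, and it is exactly what your box function lacks. (Your side remark that Assumption \ref{Ass:KdV} is dispensable here is harmless — the lower bound in \eqref{Ass:KdVKernel} indeed follows from the second-moment condition in Assumption \ref{Ass:MainKernel} — but it does not repair the main argument.)
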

\begin{proof}
We fix a smooth and rapidly decaying function $\ol{V}\in\calC$ with $\tfrac12\bnorm{\ol{V}}_2^2=1$ and estimate the contribution to the potential energy for the family
\begin{align*}
\widetilde{V}_\eps\at{x} : = \eps^2 \ol{V}\at{\eps x}\,.
\end{align*}
Using the Plancherel identity \eqref{Eqn:Plancherel} and setting $
\widetilde{U}_\eps:=b\ast \widetilde{V}_\eps$ we find
\begin{align*}
\calQ\bat{\widetilde{V}_\eps}&=\frac{\al}{2} \int\limits_\Rset \widetilde{U}_\eps\at{x}^2\dint x
  =%
\frac{\al}{2} \,\eps^3 \int\limits_\Rset (\ol{b} \ast \ol{V})^2\at{\ol{x}}\dint \ol{x}=
\frac{\al}{4\pi} \,\eps^3 \int\limits_\Rset \widehat{b}\at{\eps\ol{k}}^2\widehat{\ol{V}}\at{\ol{k}}^2\dint \ol{k}
\\&\geq
\frac{\al}{4\pi} \,\eps^3\int\limits_\Rset\at{1-C_1\,\eps^2\,\ol{k}^2}\widehat{\ol{V}}\at{\ol{k}}^2\dint \ol{k}\geq \frac{\al}{2}\,\eps^3\Bat{\bnorm{\ol{V}}_2^2-C_1\,\eps^2\bnorm{\ol{V}^\prime}_2^2}
\\&\geq \al\,\eps^3 -\frac{\al}{2} \,C_1\,\eps^5\,\bnorm{\ol{V}^\prime}_2^2\,,
\end{align*}
where the constant $C_1$ is provided by \eqref{Ass:KdVKernel}. On the other hand, the \BMHR properties $f$ imply \EMHR
\begin{align*}
\calP\bat{\widetilde{V}_\eps}-\calQ\bat{\widetilde{V}_\eps}\geq c_2
\int\limits_\Rset\bat{\widetilde{U}_\eps\at{\tilde{x}}}^3\dint{x}
=c_2\,\eps^5\, \int\limits_\Rset\bat{ \bat{\bar{b}\ast\bar{V}}\at{\bar{x}}}^3\dint{\bar{x}}
\end{align*}
for some (small but positive) constant $c_2>0$, and since $\ol{b}\ast \ol{V}$ converges to $\ol{V}$ strongly in $\fspaceL^3\at\Rset$ as $\eps\to0$ we find
\begin{align*}
\calP\bat{\widetilde{V}_\eps}-\calQ\bat{\widetilde{V}_\eps}\geq \frac {c_2}{2}\,\eps^5\, \int\limits_\Rset \bat{\bar{V}\at{\bar{x}}}^3\dint{\bar{x}}
\end{align*}
for all sufficiently small $\eps$. Since $C_1$ and $c_2$ do not depend on the choice of $\ol{V}$, we can replace $\ol{V}$ via
\begin{align*}
\ol{V}\at{\ol{x}}\qquad\rightsquigarrow\qquad \la^{1/2}\ol{V}\at{\la \ol{x}}
\end{align*}
by a suitable dilation of itself to guarantee the estimate
\begin{align*}
C_1
\int\limits_\Rset \bat{\bar{V}^\prime\at{\bar{x}}}^2\dint{\bar{x}}\leq
\frac{c_2}{2\,\alpha}\int\limits_\Rset \bat{\bar{V}\at{\bar{x}}}^3\dint{\bar{x}}\,,
\end{align*}
and this implies the first claim via
\begin{align*}
P\bat{\eps^3}\geq \calP\bat{\widetilde{V}_\eps}\geq \al\,\eps^3+\frac{c_2}{4}\,\eps^5\,.
\end{align*}
Finally, the second \BMHR assertion \EMHR follows via $\si_\eps \geq \eps^{-3}P\bat{\eps^3}\geq \eps^{-3}\calP\bat{\widetilde{V}_\eps}$ from \eqref{Cor:Existence.Eqn1}.
\end{proof}
Our main result in this section establishes the convergence of the rescaled profile functions using nonlinear compactness arguments as well as the uniqueness of accumulation points.
\begin{theorem}[convergence of maximizers]\label{prop:kdv}
We have
\begin{align}
\label{prop:kdv.Eqn0}
\frac{\si_\eps-\al}{\eps^2}\quad\xrightarrow{\;\;\eps\to0\;\;}\quad
\frac{\beta\,\ka_1}{\ka_2}
\end{align}
as well as
\begin{align}
\label{prop:kdv.Eqn1}
\ol{V}_\eps\quad\xrightarrow{\;\;\eps\to0\;\;}\quad \ol{U}_0\,,\qquad
\ol{U}_\eps\quad\xrightarrow{\;\;\eps\to0\;\;}\quad \ol{U}_0
\end{align}
strongly in $\fspaceL^2\at\Rset$, where the limit is given by
\begin{align*}
\ol{U}_{0} (\ol{x})= \frac{3\,\ka_1}{2\,\ka_2}\, \mathrm{sech}^2\at{\frac{\sqrt{\ka_1}}{2}\, \ol{x}}
\end{align*}
and equals the homoclinic solution of \eqref{Eqn:KdVODE}. The values of the constants $\ka_1$, $\ka_2$ are given in  the proof, see \eqref{prop:kdv.peqn0} and \eqref{prop:kdv.peqn2}.
\end{theorem}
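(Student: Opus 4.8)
The plan is to combine the a~priori bounds from Lemma~\ref{Lem:KdvEst} with the concentration–compactness machinery of Proposition~\ref{lem:ex1}, applied now in the rescaled variables, and to close the argument by identifying all possible accumulation points with the unique homoclinic of~\eqref{Eqn:KdVODE}. First I would record the rescaled variational problem: if $\ol{V}_\eps$ is defined by~\eqref{Eqn:KdVScaling}, then $\tfrac12\bnorm{\ol{V}_\eps}_2^2=1$ and, writing $\ol{\calP}_\eps(\ol{V}):=\eps^{-3}\calP(\eps^2\ol{V}(\eps\,\cdot))$, the function $\ol{V}_\eps$ maximizes $\ol{\calP}_\eps$ over $\calC_1$. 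Using the Taylor expansion $F(r)=\tfrac{\al}{2}r^2+\tfrac{\beta}{6}r^3+O(r^4)$ together with $\widehat{\ol b}(\ol k)=\widehat b(\eps\ol k)\to1$ and the moment bound $\widehat b(k)^2\ge1-Ck^2$, one obtains the $\Gamma$-type expansion
\begin{align*}
\eps^{-2}\bat{\ol{\calP}_\eps(\ol{V})-\al}= \ol{\calJ}(\ol{V})+o(1),\qquad
\ol{\calJ}(\ol{V}):=-\frac{\al\,m_b}{2}\bnorm{\ol V^\prime}_2^2+\frac{\beta}{6}\bnorm{\ol V}_3^3,
\end{align*}
where $m_b:=\tfrac12\int_\Rset y^2 b(y)\,\dint y$ and the error is uniform on sets where $\bnorm{\ol V^\prime}_2$ stays bounded. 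Lemma~\ref{Lem:KdvEst} gives $\eps^{-2}(\ol{\calP}_\eps(\ol V_\eps)-\al)\ge c>0$, so the maximizing sequence is bounded in $H^1(\Rset)$ after I argue that the gradient term cannot blow up (otherwise, since $\bnorm{\ol V_\eps}_3^3\le C\bnorm{\ol V_\eps}_\infty\bnorm{\ol V_\eps}_2^2\le C\bnorm{\ol V_\eps}_2^{3/2}\bnorm{\ol V_\eps^\prime}_2^{1/2}$, the functional $\ol{\calJ}$ would tend to $-\infty$).

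Next I would extract a weak $H^1$ limit $\ol U_0$; by unimodality and the $H^1$ bound, Helly's selection plus the tightness estimate $0\le(\ol b\ast\ol V_\eps)(\ol x)\le C|\ol x|^{-1/2}$ (as in Proposition~\ref{lem:ex1}) upgrade this to strong convergence in $L^2(\Rset)$ and in $L^3(\Rset)$, so no mass escapes to infinity and no vanishing occurs. The key point is that the limit must \emph{not} be zero: strong $L^3$ convergence passes the lower bound $\ol{\calJ}(\ol V_\eps)\ge c$ to the limit, forcing $\bnorm{\ol U_0}_3>0$, while the norm constraint is preserved, giving $\tfrac12\bnorm{\ol U_0}_2^2=1$ and hence also $\ol U_\eps=\ol b\ast\ol V_\eps\to\ol U_0$ strongly. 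Upper semicontinuity of $-\bnorm{\cdot^\prime}_2^2$ together with strong $L^3$ convergence shows $\ol U_0$ maximizes $\ol{\calJ}$ on the constraint set $\{\tfrac12\bnorm{\ol V}_2^2=1\}\cap\calC$. The Euler–Lagrange equation for this limiting problem is, up to the Lagrange multiplier, exactly $\al\,m_b\,\ol U_0^{\prime\prime}=\la\,\ol U_0-\tfrac{\beta}{2}\ol U_0^2$, which after rescaling is~\eqref{Eqn:KdVODE} with $\ka_1,\ka_2$ read off from $\al$, $\beta$, $m_b$; the multiplier is pinned down by testing against $\ol U_0$ and using the Pohozaev-type identity for the $\mathrm{sech}^2$ profile, which also yields the limit~\eqref{prop:kdv.Eqn0} for $(\si_\eps-\al)/\eps^2$ via $\si_\eps=\eps^{-3}\bskp{f(b\ast V_\eps)}{b\ast V_\eps}/(2\cdot\eps^3\cdot\text{(const)})$ expanded to leading order. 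Finally, uniqueness of the homoclinic of~\eqref{Eqn:KdVODE} among even, nonnegative, unimodal $H^1$ functions (classical phase-plane analysis) shows \emph{every} subsequential limit equals $\ol U_0$, so the full family converges.

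I expect the main obstacle to be the uniform control of the error term in the $\Gamma$-expansion, i.e.\ showing that the higher-order corrections from the convolution $\ol b\ast\ol V_\eps$ (which nominally involve $\bnorm{\ol V_\eps^{\prime\prime}}_2$ and higher seminorms not controlled a~priori) and from the remainder $R(\eps^2\ol b\ast\ol V_\eps)$ are genuinely $o(\eps^2)$ relative to the leading terms. The clean way around this is to avoid pointwise Taylor expansion of the convolution and instead work in Fourier variables: write $\calQ(\ol V_\eps)=\tfrac{\al}{4\pi}\int \widehat b(\eps\ol k)^2|\widehat{\ol V}_\eps(\ol k)|^2\,\dint\ol k$ and use the two-sided bound~\eqref{Ass:KdVKernel} to sandwich $\widehat b(\eps\ol k)^2$ between $1-C\eps^2\ol k^2$ and $(1+C\eps^2\ol k^2)^{-1}$, which converts the awkward higher-derivative terms into the single controllable quantity $\bnorm{\ol V_\eps^\prime}_2^2$; the cubic term is handled by the strong $L^3$ convergence $\ol b\ast\ol V_\eps\to\ol V_\eps\to\ol U_0$, and $R$ contributes $O(\eps^2\bnorm{\ol b\ast\ol V_\eps}_\infty)=o(1)$ after division by $\eps^2$ because $\bnorm{\ol b\ast\ol V_\eps}_\infty\le\bnorm{b}_2\sqrt2\to$ bounded. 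A secondary technical point is ensuring the multiplier $\si_\eps$ does not degenerate, but Lemma~\ref{Lem:KdvEst} already gives $\si_\eps\ge\al+c\eps^2$ from below and~\eqref{Cor:Existence.Eqn1} together with the $H^1$ bound on $\ol V_\eps$ gives the matching upper bound $\si_\eps\le\al+C\eps^2$.
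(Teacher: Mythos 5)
Your overall strategy --- expand the rescaled functional to order $\eps^2$, pass to a limiting maximization problem for $\ol{\calJ}$, and identify the limit with the unique homoclinic of \eqref{Eqn:KdVODE} --- is a genuinely variational alternative to the paper, which instead passes to the limit in the Euler--Lagrange equation written in Fourier variables, see \eqref{eq:EULAtra}. As written, however, there is a genuine gap at the step ``no mass escapes to infinity''. The unimodal tightness estimate $0\le \ol{U}_\eps\at{\ol{x}}\le C\abs{\ol{x}}^{-1/2}$ is not square integrable at infinity: it yields uniform $\fspaceL^p$-tightness only for $p>2$ (this is exactly why your cubic term converges), but it gives no control of $\int_{\abs{\ol{x}}\ge R}\ol{U}_\eps^2$ or $\int_{\abs{\ol{x}}\ge R}\ol{V}_\eps^2$, so Helly plus local compactness do not upgrade $\fspaceL^2_\loc$ to strong convergence in $\fspaceL^2\at\Rset$. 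Consequently the normalization $\tfrac12\bnorm{\ol{U}_0}_2^2=1$ --- which you need both to make $\ol{U}_0$ admissible for the limit problem and, later, to pin down the multiplier $d_0$, hence $\ka_1$, $\ka_2$ and \eqref{prop:kdv.Eqn0} --- is not justified: a dichotomy in which a fixed fraction of the $\fspaceL^2$ mass runs off to infinity is not excluded, and then $\ol{U}_0$ would only maximize $\ol{\calJ}$ at a smaller mass, with a different amplitude. You also cannot simply cite Proposition \ref{lem:ex1}: that argument is run at fixed $K$, and its quantitative input, the gap $P(K)-Q(K)$ from Lemma \ref{Lem:SuperEstimate}, degenerates like $\eps^5$ in the KdV scaling, so excluding dichotomy variationally would require redoing the concentration--compactness argument uniformly in $\eps$ at the rescaled level (e.g.\ via strict superadditivity of the limiting constrained problem) --- genuine additional work. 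The paper sidesteps this entirely: the representation \eqref{eq:EULAtra} together with the $\eps$-uniform $\fspaceL^2$-majorant \eqref{Eqn:KdVTightness1} converts pointwise convergence of $\widehat{\ol{U}_\eps^2}$ into strong $\fspaceL^2$ convergence of $\widehat{\ol{V}}_\eps$ by dominated convergence, so conservation of the norm constraint comes for free.

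A second, more repairable, problem is circularity in your $\fspaceH^1$ bound: the expansion $\eps^{-2}\bat{\ol{\calP}_\eps\at{\ol{V}}-\al}=\ol{\calJ}\at{\ol{V}}+o(1)$ is only claimed (and only true) uniformly on sets where $\bnorm{\ol{V}^\prime}_2$ is bounded, yet a constrained maximizer is a priori only an $\fspaceL^2$ function, so you may not use this expansion to rule out gradient blow-up. Your Fourier sandwich via \eqref{Ass:KdVKernel} is the right instinct, but it controls only the truncated quantity $\int_\Rset k^2\bat{1+C\eps^2k^2}^{-1}\babs{\widehat{\ol{V}}_\eps\at{k}}^2\dint{k}$, not $\bnorm{\ol{V}_\eps^\prime}_2^2$, which may be infinite; this can be made to work --- it is essentially how the paper obtains the uniform $\fspaceH^1$ and $\fspaceL^\infty$ bounds \eqref{eq:barULinf} from the Euler--Lagrange equation --- but it must actually be carried out. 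Note finally that your matching upper bound $\si_\eps\le\al+C\eps^2$ and the limit \eqref{prop:kdv.Eqn0} both presuppose these bounds, and \eqref{prop:kdv.Eqn0} additionally needs convergence of the gradient-type term (equivalently, an argument identifying $\lim_{\eps\to0}\eps^{-2}\at{\si_\eps-\al}$ with the multiplier of the limit problem), which your sketch leaves open.
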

\begin{proof}
Fourier transforming the scaled Euler-Lagrange equation \eqref{eq:nleveps} we find
\begin{align}
\label{eq:EULAtra}
\widehat{\ol{U}}_\eps\at{\ol{k}}=\widehat{b}\at{\eps \ol{k}}
\widehat{\ol{V}}_\eps\at{\ol{k}}\,,\qquad
\widehat{\ol{V}}_\eps\at{\ol{k}}=\frac{\eps^2\widehat{b}\bat{\eps \ol{k}}}{\si_\eps-\alpha\,\widehat{b}^2\bat{\eps\ol{k}}} \left(\beta\,\widehat{\;\ol{U} _\eps^2\;}\at{\ol{k}}+ \eps^{-4}\widehat{R\bat{\eps^2 \ol{U}_\eps}\at{\ol{k}}}\right)\,.
\end{align}
\emph{\ul{A priori estimates}}\,: %
We have
\begin{align*}
\bnorm{ \widehat{\ol{U}_\eps^2\;}}_\infty \leq
\bnorm{ \ol{U}_\eps^2\;}_1= \bnorm{ \ol{b}\ast\ol{V}_\eps\;}_2^2\leq \bnorm{\ol{b}}_1^2\,\norm{\ol{V}_\eps}_2^2\leq \bnorm{\ol{V}_\eps}_2^2\leq 2
\end{align*}
and using
\begin{align*}
0\leq {R}\at{s}\leq C s^3\quad\text{for}\quad 0\leq s\leq 1\,,\qquad \bnorm{\eps^2\bar{U}_\eps}_\infty\leq C\eps^2\bnorm{\bar{b}}_2\,\bnorm{\ol{V}_\eps}_2\leq C \eps^{3/2}\leq 1
\end{align*}
we can estimate the remainder terms by
\begin{align}
\label{Eqn:KdVRemainderEstimates}
\bnorm{\eps^{-4}\widehat{R\at{\eps^2 \bar{U}_\eps}}}_\infty
\leq C \eps^{-4} \bnorm{R\at{\eps^2 \bar{U}_\eps}}_1\leq C\eps^{-4}\bnorm{\eps^2\bar{U}_\eps}_\infty \bnorm{\eps^2\bar{U}_\eps}_2^2\leq C\eps^{3/2}\norm{\ol{b}}_1^2\bnorm{\bar{V}_\eps}_2^2\leq C\eps^{3/2}\,.
\end{align}
Our assumptions on the kernel function $b$ als well as the lower bound for $\si$ -- see Assumption \ref{Ass:KdV} and Lemma \ref{Lem:KdvEst} --  imply
\begin{align*}
\si_\eps - \al\,\widehat{b}^2\bat{\eps\bar{k}}>0\qquad\text{for all}\quad \bar{k}\in\Rset
\end{align*}
as well as
\begin{align}
\label{Eqn:KdVTightness1}
\abs{\frac{\eps^2\,\widehat{b}\bat{\eps \ol{k}}}{\si_\eps-\alpha\widehat{b}^2\bat{\eps\ol{k}}}} \leq C\frac{\eps^2\sqrt{\frac{\D 1}{\D 1+c\eps^2\ol{k}^2}}}{\alpha+c\eps^2-\frac{\D \alpha}{\D 1+c\eps^2\ol{k}^2}}\leq C \frac{\eps^2\sqrt{1+c\eps^2\ol{k}^2}}{(\alpha+ c\eps^2) \at{1+c\eps^2\ol{k}^2}-\alpha}
\leq C \frac{\sqrt{1+c\eps^2\ol{k}^2}}{1+\ol{k}^2}\,.
\end{align}
In summary, we arrive at
\begin{align}
\notag% \label{eq:vhatpw}
\babs{\widehat{\ol{V}}_\eps\at{\ol{k}}}\leq  \frac{C \sqrt{1+c\eps^2\ol{k}^2}}{c+\ol{k}^2}\,,\qquad
\babs{\widehat{\ol{U}}_\eps\at{\ol{k}}}\leq  \frac{C}{c+\ol{k}^2}\,,
\end{align}
which provides the validity of
\begin{align}
\notag%\label{eq:barUH1}
\bnorm{\ol{U}_\eps}_{1,2}^2&= 2\pi\int\limits_{-\infty}^{+\infty} \bat{1+\ol{k}^2}\babs{ \widehat{\ol{U}}_\eps(\ol{k})}^2 \dint{\ol{k}} \leq C
\end{align}
and
\begin{align}
\label{eq:barULinf}
\bnorm{\ol{U}_\eps}_\infty & \leq 2\pi \bnorm{\widehat{\ol{U}}_\eps}_1\leq C
\end{align}
for all sufficiently small $\eps$.
\par
\emph{\ul{Convergent subsequences}}\,:
By standard compactness results on bounded intervals we choose a subsequence such that $\ol{U}_\eps$ converges as $\eps\to0$ strongly in $\fspaceL^2_\loc\at\Rset$ as well as pointwise to some limit $\ol{U}_0$. Since the unimodality of $\ol{U}_\eps$ combined with \eqref{eq:barULinf} implies
\begin{align*}
0\leq \ol{U}_\eps\at{\ol{x}}\leq \min\left\{\frac{\bnorm{\ol{U}_
\eps}_2}{2\sqrt{x}}, \bnorm{\ol{U}_\eps}_\infty\right\} \leq \frac{C}{\sqrt{1+\ol{x}}}
\end{align*}
we find
\begin{align*}
0\leq \ol{U}^2_\eps\at{\bar{x}}\leq \frac{C}{1+\bar{x}}
\end{align*}
for all $\bar{x}\in\Rset$. By the Dominated Convergence Theorem and the Plancherel identity we then conclude that
\begin{align*}
\ol{U}_\eps^2\quad\xrightarrow{\;\eps\to0\;}\quad\ol{U}_0^2
\qquad\text{and}\qquad\widehat{\ol{U}_\eps^2} \quad\xrightarrow{\eps\to0\;}\quad
\widehat{\ol{U}_0^2}\qquad \text{both strongly in $\fspaceL^2\at\Rset$}\,.
\end{align*}
Passing to a further subsequence we can assume \BMHR -- see for instance \cite[Theorem 4.9]{Bre11} -- \EMHR that the latter convergence holds also pointwise almost everywhere with respect to $\ol{k}\in\Rset$ and that
\begin{align*}
\widehat{\ol{U}^2_\eps}\at{\bar{k}}\leq \ol{H}\at{\bar{k}}
\end{align*}
is satisfied for almost all $\bar{k}$ and a dominating function $\ol{H}\in\fspaceL^2\at{\Rset}$ independent of $\eps$. Finally, extracting a further subsequence we can assume that
\begin{align*}
d_\eps : = \frac{\si_\eps-\al}{\eps^2}\quad \xrightarrow{\;\eps\to0\;}\quad d_0>0\,,
\end{align*}
where $d_0$ might here be finite or infinite.
\par%
\emph{\ul{Passage to the limit and uniqueness of accumulation points}}\,: %
Our assumptions concerning the regularity of $\ol{b}$ at the origin imply
\begin{align}
\notag%\label{eq:EULAsymbol}
\widehat{b}\bat{\eps \ol{k}}\quad\xrightarrow{\eps\to0\;}\quad1\qquad \text{and}\qquad
\frac{\eps^2\widehat{b}\bat{\eps \ol{k}}}{\si_\eps-\alpha\,\widehat{b}^2\bat{\eps\ol{k}}}  \quad\xrightarrow{\eps\to0\;}\quad \frac{1}{d_0+\alpha\, \widehat{b}''(0)\, \ol{k}^2}
 \end{align}
for almost all $\ol{k}\in\Rset$. From the Dominated Convergence theorem, \BMHR equation \eqref{eq:EULAtra}, \EMHR and the existence of the $\fspaceL^2$-majorant $\ol{H}$ we then infer that
\begin{align*}
\widehat{\ol{U}}_\eps, \widehat{\ol{V}}_\eps\quad\xrightarrow{\eps\to0\;}\quad\frac{\widehat{\beta\,\bar{U}_0}^2}{d_0+ \alpha\, \widehat{b}''(0)\, \ol{k}^2} \qquad \text{strongly in $\fspaceL^2\at\Rset$}
\end{align*}
along the chosen subsequence, where the contributions from the remainder terms vanish due to the $\fspaceL^\infty$-bounds from \eqref{Eqn:KdVRemainderEstimates} and the $\fspaceL^2$-majorants in \eqref{Eqn:KdVTightness1}. The strong $\fspaceL^2$-convergences $\ol{U}_\eps\to\ol{U}_0$  and $\ol{V}_\eps\to\ol{U}_0$, which follow by the Plancherel Theorem, guarantee the validity of
\begin{align*}
d_0\,\ol{U}_0+\alpha\, \widehat{b}''(0) \,\ol{U}_0^{\prime\prime}=\beta\,\bar{U}_0^2\,.
\end{align*}
Thanks to $\beta>0$ and $ \widehat{b}''(0)<0$, this planar
ODE equals \eqref{Eqn:KdVODE}  with
\begin{align}
\label{prop:kdv.peqn0}
\ka_1=\frac{d_0}{\alpha\,\babs{\widehat{b}^{\prime\prime}\at{0}}}\,,\qquad
\ka_2=\frac{\beta}{\alpha\,\babs{\widehat{b}^{\prime\prime}\at{0}}}
\end{align}
and admits the homoclinic solution
\begin{align}
\label{prop:kdv.peqn1}
 \ol{U}_{0} (\ol{x})= \frac{3 \, d_0}{2 \,\beta} \mathrm{sech}^2\at{
\sqrt{\frac{d_0}{4\,\alpha\,\babs{\widehat{b}''(0)}} \,}
\bar{x}}\,,
\end{align}
which is unique within the cone $\calC$. By construction, and due to the strong convergence $\ol{V}_\eps\to \ol{U}_0$, we further have $\nnorm{\ol{U}_0}_2=2$, so a direct computations yields $\ka_1^{3/2}\ka_2^{-2}=1/3$ and hence
\begin{align}
\label{prop:kdv.peqn2}
d_0 = \frac{\beta^{4/3}}{3^{2/3}\al^{1/3}\babs{\widehat{b}^{\prime\prime}\at{0}}^{1/3}}\,.
\end{align}
In summary, we have shown that there exists precisely one accumulation point for $\ol{U}_\eps$ and this finally implies the desired convergence result $\ol{U}_\eps\to\ol{U}_0$. The remaining \BMHR convergence statements in \eqref{prop:kdv.Eqn0} and \eqref{prop:kdv.Eqn1} \EMHR follow immediately.
\end{proof}
The arguments in the proof of Theorem \ref{prop:kdv} can also be used to show that  $\ol{U}_\eps^{\prime}$ converges to $\ol{U}_0^\prime$ strongly in $\fspaceL^2\at\Rset$. More generally, by bootstrapping arguments we can establish \eqref{prop:kdv.Eqn1} with respect to higher Sobolev norms provided that the kernel $b$ is sufficiently smooth.
%
%
%----------------------------------------------------------------
\subsection{The limit of large eigenvalues}
\label{sect:HE} %
%----------------------------------------------------------------
%
%
If the nonlinearity $f$ exhibits an algebraic singularity, there exists another asymptotic regime  related to large values of $\si$ and $\calP\at{V}$. The heuristic idea is that
\begin{align}
\label{Eqn:DefN}
N:=\si^{-1} f\at{U}
\end{align}
becomes asymptotically a Dirac distributions with finite mass centered around $0$ so that $V$ and $U$ can be approximated by certain multiples of $b$ and $a=b\ast b$, respectively. For FPUT chains, this limit is usually called the high-energy limit and was first studied in  \cite{FM02,Tre04}. The underlying asymptotic analysis has later been refined by the authors in \cite{HM15,Her17,HM19b} to prove the orbital stability of high-energies waves  in chains with Lennard-Jones-type potentials.
\par
In what follows we study the prototypical nonlinearity \eqref{eq:fsing} and present a rather simple proof for the convergence of the profile functions in the variational setting. This first result, however, neither provides convergence rates nor explicit scaling relations for the eigenvalue $\si$. We therefore continue with a refined asymptotic analysis for smooth convolution kernels and characterize the fine structure of the aforementioned approximate Dirac distribution in greater detail. In this way we derive a scaling law for the eigenvalue $\si$, which differs significantly from the corresponding law for FPUT chains
as these come with a much less regular kernel function $a=b\ast b$.
\begin{assumption}[assumptions for limit]
\label{Ass:HE}
We suppose that $f$ is given by \eqref{eq:fsing} and that $a=b\ast b$ is sufficiently smooth so that both $a$ and $a^{\prime\prime}$ belongs to $\fspace{BC}\at\Rset\cap\fspaceL^1\at\Rset$.
\end{assumption}
We emphasize that the variational existence result from \S\ref{sect:Var} can also be applied to the nonlinearity \eqref{eq:fsing} although it is not defined on $\cointerval{0}{\infty}$ as required by Assumption \ref{Ass:MainNonl} but only on $\cointerval{0}{1}$. In fact, Young's inequality
\begin{align*}
\nnorm{b\ast V}_\infty \leq \nnorm{b}_2\nnorm{V}_2=\nnorm{b}_2\sqrt{2\,\calK\at{V}}
\end{align*}
reveals that the assertions of Propositions \ref{prop:PT} and \ref{lem:ex1} as well as Corollary \ref{Cor:Existence} remain valid as long as the norm parameter $K$ is confined by
\begin{align}
\label{Eqn:KMax}
0<K<K_{\max}:=
\frac{1}{2\,\norm{b}_2^2}=\frac{1}{2 \, a\at{0}}\,.
\end{align}
%
%
%
%
%----------------------------------------------------------------
\paragraph{Convergence of profiles} %
%----------------------------------------------------------------
%
Heuristic arguments as well as numerical simulation as in the third panel of Figure \ref{fig:waves} indicate the following asymptotic result: If the maximum of $U$ is close the the singular value $1$ of $f$, the profile $N$ from \eqref{Eqn:DefN} concentrates near zero and the profiles $U$ and $V$ approach limit functions $V_0$ and $U_0=b\ast V_0$ with
\begin{align*}
  V_0 \at{x} := \frac{b\at{x}}{a\at{0}}\,,\qquad U_0\at{x}=\frac{a\at{x}}{a\at{0}}\,.
\end{align*}
To prove this in the variational framework from \S\ref{sect:Var}, we introduce a small parameter $0<\delta<1$ and consider a family $\triple{V_\delta}{U_\delta}{\si_\delta}$ of solution to the nonlinear eigenvalue problem \eqref{eq:nlEv} with
\begin{align*}
  V_\delta\in\calC\,,\qquad \calK\at{V_\delta}=\at{1-\delta}K_{\max}\,,\qquad \calP\at{V_\delta}=P\bat{\at{1-\delta}K_{\max}}\,,\qquad U_\delta=b\ast V_\delta
\end{align*}
as provided by Corollary \ref{Cor:Existence}, see also the comment to \eqref{Eqn:KMax}. We finally introduce
\begin{align*}
 \eps_\delta :=1- U_{\delta}\at{0}
\end{align*}
and notice that the unimodality of $U_\delta$ ensures $\nnorm{U_\delta}_\infty=1-\eps_\delta$.
\begin{theorem}[convergence result]\label{thm:coarse}
For $\delta\to0$ we have
\begin{align*}
V_\delta \quad\xrightarrow{\;\delta\to0\;}\quad V_0 \,,\qquad \qquad
U_{\delta} \quad\xrightarrow{\;\delta\to0\;}\quad {U}_0
\end{align*}
both  strongly in $\fspaceL^2\at\Rset$  as well as $\eps_\delta\to0$ and $\si_\delta\to\infty$.
\end{theorem}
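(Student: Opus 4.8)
The plan is to run the same variational compactness machinery as in the KdV limit (Theorem~\ref{prop:kdv}), but now exploiting the algebraic singularity of $f$ at $s=1$ instead of a small-amplitude scaling. First I would record the two easy a~priori bounds. Upper bound on $U_\delta$: by Young's inequality and the norm constraint, $\nnorm{U_\delta}_\infty \leq \nnorm{b}_2 \sqrt{2\,\calK\at{V_\delta}} = \sqrt{(1-\delta)}\,\nnorm{b}_2\sqrt{2 K_{\max}} = \sqrt{1-\delta} < 1$, so $\eps_\delta \geq 1-\sqrt{1-\delta}>0$ and the problem is well posed. Lower bound (the crucial one): since $V_\delta$ maximizes $\calP$ on $\calC_{(1-\delta)K_{\max}}$ and $V_0 = b/a(0) \in \calC$ has $\calK\at{V_0} = \tfrac12\nnorm{b}_2^2/a(0)^2 = K_{\max}$, I would test with a slight rescaling $\wt V_\delta := \sqrt{1-\delta}\,V_0 \in \calC_{(1-\delta)K_{\max}}$; then $b\ast\wt V_\delta = \sqrt{1-\delta}\,U_0$ has sup-norm $\sqrt{1-\delta}\to 1$, so $F$ evaluated along it blows up (because $F(s)\sim (m\,(1-s)^m)^{-1}$ near $s=1$ for the nonlinearity~\eqref{eq:fsing} with $m>1$). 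Hence $P\bat{(1-\delta)K_{\max}} \geq \calP\bat{\wt V_\delta} \to \infty$, which forces $\calP\at{V_\delta}\to\infty$, and combined with the identity $2\si_\delta\calK\at{V_\delta} = \bskp{f\bat{U_\delta}}{U_\delta} \geq 2\calP\at{V_\delta}$ from the proof of Corollary~\ref{Cor:Existence} this yields $\si_\delta\to\infty$.

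Next I would prove $\eps_\delta\to0$. Suppose not; then along a subsequence $\nnorm{U_\delta}_\infty = 1-\eps_\delta \leq 1-\eps_0$ stays bounded away from $1$, so $f$ is uniformly bounded on $[0,1-\eps_0]$, whence $\calP\at{V_\delta}\leq C\,\nnorm{U_\delta}_2^2 \leq C\,\nnorm{b}_1^2\nnorm{V_\delta}_2^2 \leq C$, contradicting $\calP\at{V_\delta}\to\infty$. So $\nnorm{U_\delta}_\infty\to1$, i.e.\ $\eps_\delta\to0$.

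The heart of the argument is identifying the limit profile. I would write $N_\delta := \si_\delta^{-1} f\at{U_\delta}$, so that the eigenvalue equation reads $V_\delta = b\ast N_\delta$ and $U_\delta = a\ast N_\delta$. The first task is to show $N_\delta$ is uniformly $\fspaceL^1$-bounded and concentrates at the origin: testing $U_\delta = a\ast N_\delta$ at $x=0$ gives $1-\eps_\delta = U_\delta(0) = \int_\Rset a(-y) N_\delta(y)\dint y \geq a(0)^{-1}\cdot(\text{small})$... more carefully, since $a$ is continuous with $a(0) = \nnorm{a}_\infty$ (unimodality) and $N_\delta \geq 0$, one gets $\nnorm{N_\delta}_1 \geq U_\delta(0)/a(0) \to 1/a(0)$; for the reverse bound I would integrate the eigenvalue equation, $\si_\delta \int V_\delta = \int b\ast f\at{U_\delta}$ hence $\nnorm{N_\delta}_1 = \int N_\delta = \int V_\delta = \widehat{V_\delta}(0) = \widehat{b}(0)\widehat{N_\delta}(0)\cdot$... actually $\int V_\delta = \int b \cdot \int N_\delta$ only formally; cleaner: $\int U_\delta = \int a \cdot \int N_\delta = \int N_\delta$, and $\int U_\delta = \widehat{U_\delta}(0)$ which I can bound using $U_\delta \in \calC$, the sup-norm bound, and an $\fspaceL^2$-type decay estimate as in the proof of Proposition~\ref{lem:ex1}. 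This is the step I expect to be the main obstacle: making the mass bound $\nnorm{N_\delta}_1\to 1/a(0)$ and the concentration $N_\delta \rightharpoonup a(0)^{-1}\delta_0$ (weak-$*$ as measures) fully rigorous, because $N_\delta$ lives only in $\fspaceL^1$ and one must rule out mass escaping to infinity. I would control the tails via the uniform tightness $0\leq U_\delta(x) \leq \nnorm{U_\delta}_2/\sqrt{2|x|} \leq C/\sqrt{|x|}$ (valid since $U_\delta\in\calC$), which forces $f\at{U_\delta(x)} \leq f(C/\sqrt{|x|})$ to be integrable with a $\delta$-independent tail, so that $N_\delta$ has no mass at infinity; concentration at $0$ then follows because $N_\delta$ can only be large where $U_\delta$ is near $1$, i.e.\ near $x=0$ by unimodality.

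Once $N_\delta \rightharpoonup a(0)^{-1}\delta_0$ is established, I would pass to the limit in $U_\delta = a\ast N_\delta$ and $V_\delta = b\ast N_\delta$: since $a,b\in\fspace{BC}\at\Rset\cap\fspaceL^1\at\Rset$ (Assumption~\ref{Ass:HE}), convolution with a weak-$*$ convergent family of nonnegative measures of bounded mass converges uniformly on compacta to $U_0 = a/a(0)$ and $V_0 = b/a(0)$ respectively, and the uniform tail bounds upgrade this to strong $\fspaceL^2\at\Rset$ convergence by dominated convergence (the dominating function being $C/\sqrt{1+|x|}$ for $U_\delta$, and for $V_\delta$ one uses $V_\delta = b\ast N_\delta \leq \nnorm{N_\delta}_1 \cdot (\text{local max of }b)$ plus the fact that $b$ itself decays, i.e.\ a unimodality/tightness argument exactly parallel to the one for $U_\delta$). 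The consistency $U_0 = b\ast V_0$ holds automatically since $a=b\ast b$. This completes the proof; the statements $\eps_\delta\to0$ and $\si_\delta\to\infty$ having already been obtained above.
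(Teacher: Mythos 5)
Your first half runs exactly parallel to the paper: the competitor $\sqrt{1-\delta}\,V_0\in\calC_{(1-\delta)K_{\max}}$ forces $\calP\at{V_\delta}\to\infty$, the contradiction argument with $\nnorm{U_\delta}_\infty\leq 1-\eps_0$ gives $\eps_\delta\to0$, and $\si_\delta\geq\calP\at{V_\delta}/\calK\at{V_\delta}\to\infty$ follows from Corollary~\ref{Cor:Existence}. The gap is in the profile convergence, where you replace a short exact computation by a concentration argument for $N_\delta=\si_\delta^{-1}f\at{U_\delta}$, and two of your steps fail as stated. First, the tail control: from $U_\delta\at{x}\leq C/\sqrt{\abs{x}}$ you only get $f\bat{U_\delta\at{x}}\leq f\bat{C/\sqrt{\abs{x}}}\sim C\at{m+1}/\sqrt{\abs{x}}$ for large $\abs{x}$, which is \emph{not} integrable at infinity, so this does not produce a $\delta$-independent integrable tail for $N_\delta$, and the mass bound $\nnorm{N_\delta}_1\to 1/a\at{0}$ together with the exclusion of escaping mass is not established by your argument (the paper faces this only in Proposition~\ref{Prop:ScalingLaws} and resolves it differently, by splitting $N$ with the cut-off $\chi$, bounding the mass of the near part through $\int_{-1}^{1}U_\delta\dint{x}$ and the uniform positivity of $a\ast\chi$ on $\ccinterval{-1}{1}$, while the far part is controlled only in $\fspaceL^\infty$). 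Second, the upgrade to strong $\fspaceL^2$ convergence by dominated convergence fails: your proposed majorant $C/\sqrt{1+\abs{x}}$ is not square integrable, so locally uniform convergence plus this bound does not give $U_\delta\to U_0$ in $\fspaceL^2\at\Rset$; the analogous step in Theorem~\ref{prop:kdv} works only because there the \emph{squares} $\ol{U}_\eps^2$ are dominated by $C/\at{1+\abs{\ol x}}$ in $\fspaceL^1$. Note also that Assumption~\ref{Ass:HE} imposes regularity on $a$, not on $b$: $b$ need be neither bounded nor continuous, so invoking $b\in\fspace{BC}\at\Rset$ for the $V_\delta$-part is not justified.

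All of this machinery is unnecessary here, which is what the paper exploits: since $V_0=b/\nnorm{b}_2^2$ and $\skp{V_\delta}{b}=\bat{b\ast V_\delta}\at{0}=U_\delta\at{0}=1-\eps_\delta$, one computes exactly $\nnorm{V_\delta-V_0}_2^2=2K_\delta+2K_{\max}-2\nnorm{b}_2^{-2}\at{1-\eps_\delta}=\nnorm{b}_2^{-2}\at{2\eps_\delta-\delta}\to0$, and then Young's inequality gives $\nnorm{U_\delta-U_0}_2\leq\nnorm{b}_1\nnorm{V_\delta-V_0}_2\to0$ (and even $\nnorm{U_\delta-U_0}_\infty\leq\nnorm{b}_2\nnorm{V_\delta-V_0}_2\to0$). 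So the facts you already proved, $\eps_\delta\to0$ and $\delta\to0$, immediately yield both strong convergences; the Dirac concentration of $N_\delta$ is genuinely needed only for the eigenvalue asymptotics of Proposition~\ref{Prop:ScalingLaws}, and there it is carried out with the cut-off splitting sketched above rather than with a global $\fspaceL^1$ bound.
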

\begin{proof}
\emph{\ul{Convergence of $\eps_\delta$}}\,:
Notice that
\begin{align}
  \calP\at{V_\delta}\geq \calP\bat{\sqrt{1-\delta}V_0
}&=  \int\limits_\Rset \frac{\dint{x}}{m\left(1- \sqrt{1-\delta}\D \frac{a\at{x}}{a\at{0}} \right)^m} \quad\xrightarrow{\;\delta\to0\;}\quad \infty  \label{eq:PVde}
\end{align}
holds by construction and \BMHR Assumption \ref{Ass:HE}. Now \EMHR assume for contradiction that $\eps_\delta$ does not converge to zero. Then
there exists a constant $\eps_0>$ such that $\eps_\delta\geq\eps_0$ holds along a fixed subsequence with $\delta\to0$ and $K_\delta\to K_0=K_{\max}$. \BMHR In particular, we have $\norm{U_\delta}_\infty\leq 1-\eps_0$ \EMHR and this uniform distance to the singularity of $F$  guarantees
\begin{align*}
F\bat{U_\delta\at{x}}  \leq C U_\delta^2\at{x}
\end{align*}
for some constant $C$ and all $x \in \Rset$.  \BMHR We thus get \EMHR
\begin{align*}
\calP\at{V_\delta}\leq C\norm{U_\delta}_2^2\leq C\bnorm{b}_1^2\bnorm{V_\delta}_2^2\leq C
\end{align*}
\BMHR and hence a \EMHR contradiction to \eqref{eq:PVde}.
\par
\emph{\ul{Convergence of profiles and $\si_\delta$}}\,:
Observing $U_\delta\at{0}=\skp{V_\delta}{b}$ we calculate
\begin{align*}
\bnorm{ V_{\delta} -{V}_0}_2^2& =
\bnorm{V_{\delta}}_2^2+ \bnorm{{V}_0}_2^2 - 2 \bskp{V_{\delta}}{{V}_0} = 2K_\delta+2K_0-2\norm{b}_2^{-2} U_\delta\at{0}
\\&=
\norm{b
}_2^{-2}\at{1 - \delta +1 - 2+2 \eps_\delta}
\end{align*}
and in combination with Young's  inequality
\begin{align*}
\bnorm{U_{\delta} -{U}_0}_\infty\leq \bnorm{b}_2
\bnorm{V_{\delta} -{V}_0}_2
\end{align*}
we obtain the convergence results for both $V_\delta$ and $U_\delta$. Finally, \eqref{Cor:Existence.Eqn1} gives
\begin{align*}
 \si_\delta \geq \frac{\calP(V_\delta)}{\calK\at{V_\delta}}
\end{align*}
so the claim on $\si$ follows from \eqref{eq:PVde}.
\end{proof}
\BMHR%
%\begin{remark}
The proof of Theorem \ref{thm:coarse}
resembles key arguments from \cite{FM02} and uses only
Assumption \ref{Ass:MainKernel}.  The next step, however, requires the higher regularity properties formulated in Assumption \ref{Ass:HE} and does not apply to piecewise linear kernel functions.
\EMHR
%
%----------------------------------------------------------------
\paragraph{Asymptotic analysis of wave speed} %
%----------------------------------------------------------------
%
We finally derive the scaling relation between $\si_\delta$ and $\eps_\delta$.
\begin{proposition}[asymptotics of eigenvalue]
\label{Prop:ScalingLaws}
We have
\begin{align}
\label{Prop:ScalingLaws.Eqn1}
\eta_\delta:= \si_\delta \eps_\delta^{m+1/2}
\quad
\xrightarrow{\;\delta\to0\;}\quad
\eta_0:=
\frac{\sqrt{2\pi}\sqrt{a\at{0}}^3}{\sqrt{\abs{a^{\prime\prime}\at{0}}}}\frac{\Ga\at{m+\tfrac12}}{\Ga\at{m+1}}\,,
\end{align}
where $\Ga$ denotes the Gamma function.
\end{proposition}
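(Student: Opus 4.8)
\emph{Plan.} The plan is to analyse the concentrating source profile $N_\delta := \si_\delta^{-1} f\at{U_\delta}\geq 0$, which by Theorem \ref{thm:coarse} behaves like an approximate point mass at the origin, and to read off from the representation $U_\delta = a\ast N_\delta$ both the precise local shape of $1-U_\delta$ near its minimum and the total integral of $f\at{U_\delta}$; the scaling law then drops out of the scalar identity obtained by integrating the eigenvalue equation. \emph{Step 1 (reduction).} Integrating $\si_\delta U_\delta = a\ast f\at{U_\delta}$ over $\Rset$ and using $\widehat a\at{0}=1$ gives
\[
  \si_\delta\,J_\delta = I_\delta\,,\qquad J_\delta := \int\limits_\Rset U_\delta\dint x = \int\limits_\Rset N_\delta\dint x\,,\qquad I_\delta := \int\limits_\Rset f\at{U_\delta}\dint x\,.
\]
Since $U_\delta\at{0} = \int_\Rset a\at{y}N_\delta\at{y}\dint y$ and $a\at{y}\leq a\at{0}$, one immediately gets $J_\delta\geq\at{1-\eps_\delta}/a\at{0}\to 1/a\at{0}$; conversely, bounding $U_\delta\at{0}$ from below by restricting the last integral to $\abs y\leq\rho$ and using $\int_{\abs y>\rho} N_\delta\dint y = \si_\delta^{-1}\int_{\abs y>\rho} f\at{U_\delta}\dint y\leq C_\rho\,\si_\delta^{-1}J_\delta$ — valid because $U_\delta$ stays uniformly away from $1$ on $\abs x\geq\rho$ by Theorem \ref{thm:coarse} and $f$ is convex — shows $\limsup_{\delta\to0}J_\delta\leq 1/a\at{\rho}$ for every $\rho>0$, hence $J_\delta\to 1/a\at{0}$. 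It therefore suffices to prove $I_\delta\,\eps_\delta^{m+1/2}\to\sqrt{2\pi\,a\at{0}/\abs{a^{\prime\prime}\at{0}}}\;\Ga\at{m+\tfrac12}/\Ga\at{m+1}$ and divide by $J_\delta$.

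\emph{Step 2 (a priori speed bound and the peak profile).} Quantifying the divergence in \eqref{eq:PVde} — evaluating $\calP\bat{\sqrt{1-\delta}\,V_0}$ on the region where $1-\sqrt{1-\delta}\,U_0\at{x}\approx\tfrac\delta2+\tfrac{\abs{a^{\prime\prime}\at{0}}}{2a\at{0}}x^2$ and using $F\at{s}\sim m^{-1}\at{1-s}^{-m}$ — yields $P\bat{\at{1-\delta}K_{\max}}\geq c\,\delta^{1/2-m}$, hence $\si_\delta\geq c\,\delta^{1/2-m}$; combined with $\eps_\delta\geq\delta/2$ (read off from $\nnorm{V_\delta-V_0}_2^2 = \nnorm{b}_2^{-2}\at{2\eps_\delta-\delta}\geq 0$ in the proof of Theorem \ref{thm:coarse}) this gives the crucial bound $\eps_\delta^{1/2}\si_\delta\geq c\,\eps_\delta^{1-m}\to\infty$, valid precisely because $m>1$. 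I would then analyse, for small $\abs x$,
\[
  U_\delta\at{0}-U_\delta\at{x} = \int\limits_\Rset\bat{a\at{y}-a\at{x-y}}\,N_\delta\at{y}\dint y\,,
\]
Taylor expanding $a$ to second order near $0$ (licit since $a\in\fspaceC^2$ under Assumption \ref{Ass:HE}) and exploiting: the evenness of $N_\delta$ (so the linear-in-$x$ term $\propto\int_\Rset yN_\delta\dint y$ vanishes), the cancellation of the even quadratic pieces in the difference $a\at{y}-a\at{x-y}$ (so the leading term is $\tfrac{\abs{a^{\prime\prime}\at{0}}}{2}x^2\int_\Rset N_\delta\dint y$), the Lipschitz bound $\babs{a\at{y}-a\at{x-y}}\leq C\abs x$ together with $\eps_\delta^{1/2}\si_\delta\to\infty$ to estimate the contribution of $N_\delta$ away from the origin by $C\abs x\,\si_\delta^{-1} = o\at{\eps_\delta}$ uniformly for $\abs x\leq\ell_\delta\,\Xi$, and a bootstrap to first locate the concentration scale of $N_\delta$. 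With $J_\delta\to 1/a\at{0}$ and $\ell_\delta := \sqrt{2\,a\at{0}\,\eps_\delta/\abs{a^{\prime\prime}\at{0}}}\to 0$ this should give, after the rescaling $x=\ell_\delta\bar x$, the local limit
\[
  w_\delta\at{\bar x} := \eps_\delta^{-1}\bat{1-U_\delta\at{\ell_\delta\bar x}}\;\xrightarrow{\;\delta\to0\;}\;1+\bar x^2
\]
uniformly on compact subsets of $\Rset$.

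\emph{Step 3 (evaluation of $I_\delta$ and conclusion).} I would first upgrade this to a uniform lower bound $w_\delta\at{\bar x}\geq c_\ast\at{1+\bar x^2}$ on the whole range $\abs{\bar x}\leq\rho/\ell_\delta$: on bounded sets it follows from the local limit, while for $\ell_\delta\Xi\leq\abs x\leq\rho$ it follows from unimodality of $U_\delta$ and the concentration of $N_\delta$ (most of its mass sits in $\abs y\leq\abs x/4$, where $a\at{y}-a\at{x-y}\geq\tfrac{\abs{a^{\prime\prime}\at{0}}}{8}x^2$ for $\rho$ small). Splitting $I_\delta=\int_{\abs x\leq\rho}+\int_{\abs x>\rho}$, the second integral is $O\at{1}$ since $f\at{U_\delta}\leq C_\rho\,U_\delta$ there by convexity and $\int_\Rset U_\delta\dint x=J_\delta$ is bounded, while the substitution $x=\ell_\delta\bar x$ and $f\at{s}=\at{1-s}^{-(m+1)}-1$ turn the first into
\[
  \int\limits_{\abs x\leq\rho} f\at{U_\delta}\dint x = \ell_\delta\,\eps_\delta^{-(m+1)}\int\limits_{\abs{\bar x}\leq\rho/\ell_\delta}\Bat{w_\delta\at{\bar x}^{-(m+1)}-\eps_\delta^{m+1}}\dint{\bar x}\,.
\]
Since $\rho/\ell_\delta\to\infty$, $w_\delta\to 1+\bar x^2$ locally and $0\leq w_\delta^{-(m+1)}\leq c_\ast^{-(m+1)}\at{1+\bar x^2}^{-(m+1)}\in\fspaceL^1\at{\Rset}$, dominated convergence gives $\int_{\abs{\bar x}\leq\rho/\ell_\delta}w_\delta^{-(m+1)}\dint{\bar x}\to\int_\Rset\at{1+\bar x^2}^{-(m+1)}\dint{\bar x}=\sqrt\pi\,\Ga\at{m+\tfrac12}/\Ga\at{m+1}$, a standard Beta integral; because $\ell_\delta\,\eps_\delta^{-(m+1)}=\sqrt{2\,a\at{0}/\abs{a^{\prime\prime}\at{0}}}\,\eps_\delta^{-(m+1/2)}\to\infty$, the $O\at{1}$ contribution is negligible and $I_\delta\,\eps_\delta^{m+1/2}\to\sqrt{2\,a\at{0}/\abs{a^{\prime\prime}\at{0}}}\,\sqrt\pi\,\Ga\at{m+\tfrac12}/\Ga\at{m+1}$. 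Combining with $J_\delta\to 1/a\at{0}$ and $\si_\delta=I_\delta/J_\delta$,
\[
  \eta_\delta=\si_\delta\,\eps_\delta^{m+1/2}=\frac{I_\delta\,\eps_\delta^{m+1/2}}{J_\delta}\;\xrightarrow{\;\delta\to0\;}\;a\at{0}\,\sqrt{\frac{2\,a\at{0}}{\abs{a^{\prime\prime}\at{0}}}}\,\sqrt\pi\,\frac{\Ga\at{m+\tfrac12}}{\Ga\at{m+1}}=\frac{\sqrt{2\pi}\,\sqrt{a\at{0}}^3}{\sqrt{\abs{a^{\prime\prime}\at{0}}}}\,\frac{\Ga\at{m+\tfrac12}}{\Ga\at{m+1}}=\eta_0\,.
\]

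\emph{Main obstacle.} The crux is Step 2: rigorously establishing the peak-profile limit $w_\delta\to 1+\bar x^2$ locally uniformly together with the uniform lower bound $w_\delta\gtrsim 1+\bar x^2$ on $\abs{\bar x}\leq\rho/\ell_\delta$. This is exactly what licenses the dominated-convergence passage and rules out mass of $f\at{U_\delta}$ leaking to intermediate length scales, and it is the point where the $\fspaceC^2$-regularity of $a=b\ast b$ (Assumption \ref{Ass:HE}) and the a priori bound $\eps_\delta^{1/2}\si_\delta\to\infty$ enter decisively; a short bootstrap will likely be needed to pin down the $O\at{\ell_\delta}$ concentration scale of $N_\delta$ before the expansion above can be made uniform.
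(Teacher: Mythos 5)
Your Steps 1 and 3 are sound and lead to the correct constant: the integrated identity $\si_\delta\int_\Rset U_\delta\dint x=\int_\Rset f\at{U_\delta}\dint x$ together with $\int_\Rset U_\delta\dint x\to1/a\at{0}$ is a legitimate substitute for the paper's evaluation of $U_{1,\delta}\at{0}$, and your rescaled quantity $w_\delta^{-(m+1)}$ is, up to the cut-off and a fixed spatial normalization, the paper's profile $\ol{W}_\delta$, so the final Beta-integral computation coincides with the paper's $\eta_0=a\at{0}\int_\Rset\ol{W}_0$. The genuine gap is Step 2, which you yourself flag: the locally uniform limit $w_\delta\to1+\bar{x}^2$ and, more importantly, the uniform quadratic bound $1-U_\delta\at{x}\geq\eps_\delta+d\,x^2$ on a fixed neighbourhood of the origin (the paper's \eqref{Prop:ScalingLaws.PEqn2}) carry the whole proof, and your proposed route to them is circular as sketched. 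In $U_\delta\at{0}-U_\delta\at{x}=\int_\Rset\bat{a\at{y}-a\at{x-y}}N_\delta\at{y}\dint y$ the contribution of $\abs{y}\geq\rho$ is indeed $O\at{\abs{x}\si_\delta^{-1}}$, but the Taylor remainder on $\abs{y}\leq\rho$ is of size $\omega\at{\rho}\int_{\abs{y}\leq\rho}y^2N_\delta\at{y}\dint y$ (plus harmless $x^2$ terms), and to see that this second moment is $O\at{\eps_\delta}$ you must already know that $N_\delta$ concentrates at scale $\eps_\delta^{1/2}$ — which, through $N_\delta=\si_\delta^{-1}\bato{\at{1-U_\delta}^{-(m+1)}-1}$, is equivalent to the quadratic lower bound on $1-U_\delta$ you are trying to prove. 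The same circularity affects the lower-bound argument in Step 3 (``most of the mass of $N_\delta$ sits in $\abs{y}\leq\abs{x}/4$''); the deferred ``short bootstrap'' is thus the missing idea, not a routine detail.

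The paper closes exactly this gap by a device absent from your proposal: it exploits Assumption \ref{Ass:HE} not only by Taylor expanding $a$ but through the identity $U_\delta^{\prime\prime}=a^{\prime\prime}\ast N_\delta$. Splitting $N_\delta=N_{1,\delta}+N_{2,\delta}$ with $N_{1,\delta}$ supported in $\ccinterval{-1}{1}$, one gets $\bnorm{N_{2,\delta}}_\infty\to0$, $\bnorm{N_{1,\delta}}_1$ bounded, and $N_{1,\delta}\rightharpoonup a\at{0}^{-1}\delta_0$ weak$^\ast$ (using Theorem \ref{thm:coarse} and $U_0\at{x}<1$ for $x\neq0$); Young's inequality with $a,a^{\prime\prime}\in\fspace{BC}\at\Rset\cap\fspaceL^1\at\Rset$ then upgrades Theorem \ref{thm:coarse} to uniform $\fspaceC^2$ convergence $U_\delta\to U_0$, and Taylor expansion of $U_\delta$ at its maximum yields \eqref{Prop:ScalingLaws.PEqn2} directly. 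Note that this route needs only $\si_\delta\to\infty$, so your a priori rate $\si_\delta\geq c\,\delta^{1/2-m}$ (correct, and neatly obtained from \eqref{eq:PVde} together with $\eps_\delta\geq\delta/2$) becomes unnecessary. If you wish to keep your convolution-representation framework, the cleanest repair is precisely this: establish $\bnorm{U_\delta^{\prime\prime}-U_0^{\prime\prime}}_\infty\to0$ via $a^{\prime\prime}\ast N_{1,\delta}$ and $a^{\prime\prime}\ast N_{2,\delta}$; then both your local profile limit and the uniform domination needed for dominated convergence follow at once, and the rest of your argument goes through and gives the stated value of $\eta_0$.
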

\begin{proof}
\emph{\ul{Preliminaries}}\,:
Denoting the indicator function of the interval $\ccinterval{-1}{+1}$ by $\chi$, we define
\begin{align*}
N_{1,\delta}:=\si_\delta^{-1} f\bat{U_\delta} \chi\,,\qquad
N_{2,\delta}:= \si_\delta^{-1} f\bat{U_\delta} \at{1-\chi}
\end{align*}
and split  $U_\delta=U_{1,\delta}+U_{2,\delta}$ via
\begin{align*}
U_{1,\delta} : = a\ast N_{1,\delta}\,,\qquad
U_{2,\delta} : =  a\ast N_{2,\delta}\,.
\end{align*}
This implies
\begin{align}
\notag%\label{eq:bV}
  U_{i,\delta}^{\prime  \prime} =a^{\prime\prime}\ast N_{i,\delta}
\end{align}
for $i=1,2$ and our first goal is to establish the convergence of these second derivatives.
\par
\emph{\ul{Improved convergence result}}\,: %
Theorem \ref{thm:coarse} \BMHR and the unimodality of $U_\delta$ imply \EMHR that $\sup_{\abs{x}\geq 1} U_\delta\at{x}<1$ holds uniformly with respect to sufficiently small $\delta$. We thus find
\begin{align*}
\bnorm{N_{2,\delta}}_\infty\leq C \si_\delta^{-1}\bnorm{U_\delta\at{1-\chi}}_\infty\quad
\xrightarrow{\;\delta\to0\;}\quad0
\end{align*}
thanks to \BMHR $\si_\delta\to\infty$ \EMHR and obtain
\begin{align}
\label{Prop:ScalingLaws.PEqn1}
\bnorm{U_{2,\delta}}_\infty \leq  C \bnorm{a}_1\bnorm{N_{2,\delta}}_\infty \quad
\xrightarrow{\;\delta\to0\;}\quad0\,,
\qquad
\bnorm{U_{2,\delta}^{\prime\prime}}_\infty \leq  C \bnorm{a^{\prime\prime}}_1\bnorm{N_{2,\delta}}_\infty  \quad
\xrightarrow{\;\delta\to0\;}\quad0\,.
\end{align}
We also have
\begin{align*}
\int\limits_{-1}^{+1} U_\delta\at{x}\dint{x}=\bskp{U_{1,\delta}+U_{2,\delta}}{ \chi}=
\bskp{N_{1,\delta}\chi}{a\ast\chi}+\bskp{N_{2,\delta}\chi}{a\ast\chi}
\end{align*}
and since $a\ast\chi$ is uniformly positive in the interval $\ccinterval{-1}{+1}$ we deduce from the convergence results for $U_\delta$ and $N_{2,\delta}$ that $\bnorm{N_{1,\delta}}_1$ is uniformly bounded. In particular, there exist subsequences for $\delta\to0$ such that $N_{1,\delta}$ converges weakly$^\ast$ to a
limit measure, and Theorem \ref{thm:coarse} combined with $0\leq U_0\at{x}<1$ for $x\neq0$ implies that this limit is a Dirac measure with finite mass $\mu$ concentrated at $x=0$. Moreover, the mass $\mu$ is uniquely determined by the asymptotic identity
\begin{align*}
\int\limits_{-1}^{+1} {U}_0\at{x}\dint{x}=\mu\, \nat{a\ast\chi}\at{0}=\mu\,\int\limits_{-1}^{+1}a\at{x}\dint{x}\,,
\end{align*}
and this uniqueness of accumulation points ensures the weak$^\star$ convergence of the entire family $N_{1,\delta}$. The smoothness of the kernel $a$ combined with Young's convolution inequality thus provides
\begin{align*}
\bnorm{U_{1,\delta}-{U}_0}_\infty+
\bnorm{U_{1,\delta}^{\prime\prime}-{U}_0^{\prime\prime}}_\infty
 \quad
\xrightarrow{\;\delta\to0\;}\quad0\,.
\end{align*}
\BMHR Moreover, in view of \EMHR  \eqref{Prop:ScalingLaws.PEqn1} we conclude that the convergence
$U_\delta\to {U}_0$ holds even with respect to the $\fspaceC^2$-topology. \BMHR Consequently, \EMHR there exists a constant $d>0$ such that
\begin{align}
\label{Prop:ScalingLaws.PEqn2}
0\leq U_\delta\at{x}\leq 1-\eps_\delta - d\,x^2
\end{align}
holds for all $\abs{x}\leq1$ and all sufficiently small $\delta$.
\par
\emph{\ul{Scaling law for $\si_\delta$}}\,:
\BMHR To describe the fine structure of $N_{1,\delta}$, we define a rescaled function $\ol{W}_\delta$ by \EMHR
\begin{align}
\label{eq:WHE}
  \ol{W}_\delta(\ol{x}):= \eps_\delta^{m+1} f\at{U_\delta\bat{\eps_\delta^{1/2} \ol{x}}} \chi\bat{\eps_\delta^{1/2} \ol{x}} =  \eps_\delta^{m+1}\si_\delta N_{1,\delta}\bat{\eps_\delta^{1/2} \ol{x}}
\end{align}
and exploit both \eqref{Prop:ScalingLaws.PEqn2} as well as \eqref{eq:fsing} to derive the uniform tightness estimate
\begin{align}\label{eq:West}
\ol{W}_\delta\at{\ol{x}} \leq\frac{1}{\at{ 1+d\, \ol{x}^2}^{m+1}}
\end{align}
for all $\bar{x}\in\Rset$.  Moreover, since $\ol{W}_\delta$ is nonnegative and unimodal its derivative is a regular measure with uniformly bounded variation (and small Dirac parts due to the jump discontinuities at $\ol{x}=\pm\eps_\delta^{-1/2}$), so \BMHR Helly's Selection Theorem -- see for instance \cite[Chapter VIII]{Nat16} -- \EMHR guarantees the existence of subsequences that converge pointwise as $\delta\to0$ to a function $\ol{W}_0$. This convergence holds even in $\fspaceL^1\at\Rset$ due to the majorant in \eqref{eq:West}, so using the smoothness of $U_\delta$ as well as the formula for $f$ in \eqref{eq:fsing} we readily verify
\BMHR the limit formula \EMHR
\begin{align*}
\ol{W}_0\at{\ol{x}}=\at{\frac{1}{1+\tfrac12\babs{ {U}_0^{\prime\prime}\at{0}}\ol{x}^2}}^{m+1}\,.
\end{align*}
In particular, the accumulation point $\ol{W}_0$ is independent of the chosen subsequence, so the entire family \BMHR $\at{\ol{W}_\delta}_\delta$ converges as $\delta\to0$ to $\ol{W}_0$. \EMHR Moreover, by construction we have
\begin{align}
\notag%\label{eq:notail2}
{U}_{1,\delta}\at{x}= \int\limits_{-\infty}^{+\infty} a\at{x-y} N_{1,\delta}\at{y}\dint{y}= \frac{\eps_\delta^{1/2} }{\sigma_\delta\, \eps_\delta^{m+1}} \int\limits_{-\eps_\delta^{-1/2}}^{+\eps_\delta^{-1/2}} a\bat{x-\eps_\delta^{1/2} \ol{y}} \ol{W}_\delta(\ol{y})
 \dint \ol{x}
\end{align}
for all $x$, and evaluating this identity for $x=0$ and in the limit $\delta\to0$ yields
\begin{align*}
\eta_\delta
\quad
\xrightarrow{\;\delta\to0\;}\quad
\eta_0:=a\at{0}\int\limits_\Rset \ol{W}_0\at{\ol{y}}\dint{\ol{y}}
\end{align*}
thanks to our convergence results for $U_{1,\delta}$ and $\ol{W}_\delta$. The desired formula for $\eta_0$ follows by computing the integral.
\end{proof}
Notice that the convergence result from Proposition \ref{Prop:ScalingLaws} might be improved as follows. Using the decay results from \S\ref{sect:decay} one can derive uniform tightness estimates for the functions $N_\delta:=\si_\delta^{-1} f\at{U_\delta}$ and this implies the approximation
\begin{align*}
N_\delta\at{x} \approx \frac{1}{\eta_0}\frac{1}{\eps_\delta^{1/2}}\ol{W}_0\at{\frac{x}{\eps_\delta^{1/2}}}\,,
\end{align*}
where the mass of the smooth Dirac on the right hand side is just $1/a\at{0}$. Such a refined analysis also reveals that $\eps_\delta$ and $\delta$ are asymptotically proportional, where the limit $\lim_{\delta\to0} \delta^{-1}\eps_\delta$ can be computed in terms of $\ol{W}_0$. 
\par
\BMHR We finally emphasize that the scaling relation between $\si_\delta$ and $\eps_\delta$ differs from the corresponding FPUT result in \cite{HM15,HM17} because here the kernel function $a$ is supposed to be smooth at the origin $x=0$, see Assumption \ref{Ass:HE}. 
The corresponding key argument in the proof of Theorem \ref{Prop:ScalingLaws}  is the uniform Taylor expansion up to quadratic order in \eqref{Prop:ScalingLaws.PEqn2}, where the curvature constant $d$ turns out to be basically independent of the small parameter $\delta$. 
 For kernels $a$ that are not twice differentiable at $x=0$
 we can no longer expect $d$ to be uniformly bounded. Instead we have to identify a spatial scaling that differs from \eqref{eq:WHE} and guarantees that the rescaled profiles $\ol{W}_\delta$ converge as $\delta\to0$ to a well-defined limit object $\ol{W}_0$. For FPUT chains, this alternative scaling has been identified in \cite{HM15}, provides an \emph{asymptotic shape ODE} that governs the behavior of $U_\delta$ near $x=0$, and implies $\sigma_\delta \sim \eps_\delta^m$ as the analogue to \eqref{Prop:ScalingLaws.Eqn1}.
\EMHR
%
%
%
%
% -----------------------------------------------------------------------------
% - appendix
% -----------------------------------------------------------------------------
%
%
\section*{Acknowledgement}
The authors gratefully acknowledge fruitful discussions with Martin Burger, Oleh Omel'chenko, Arnd Scheel, and Eric Siero.
%
%
% -----------------------------------------------------------------------------
% - Bibliography
% -----------------------------------------------------------------------------
%
%\bibliographystyle{alpha}
%\bibliography{paper}
%

\end{document}